\documentclass[report]{llncs}

\usepackage{pslatex,amssymb,pgf,tikz,lscape,amsmath,cite, mathrsfs,caption}
\usetikzlibrary{arrows,automata,positioning}

\bibliographystyle{alpha}


\newcommand{\fif}{\rmiff}
\newcommand{\rmall}{\mathrel{\mbox{for all}~}}
\newcommand{\rmsome}{\mathrel{\mbox{for some}~}}
\newcommand{\rmand}{\mathbin{~\mbox{and}~}}
\newcommand{\rmiff}{\mathbin{~\mbox{iff}~}}


\newcommand{\name}{\textsf}
\newcommand{\class}{\textsc}

\newcommand{\defn}{\mathrel{\mbox{$~\stackrel{\rm def}{=}~$}}}

\newcommand{\undef}{\textbf{u}}






\newcommand{\opr}{\mbox{$\mathit{Opr}$}}
\newcommand{\sub}{\mbox{$\mathit{Subf}$}}
\newcommand{\intv}{\mbox{$\mathit{Intv}_w$}}

\newcommand{\pos}{\mbox{$\mathit{Pos}_w$}}
\newcommand{\lpos}{\mbox{$\ell\pos$}}
\newcommand{\cpos}{\mbox{$c\pos$}}

\newcommand{\fut}{\textsf{F}}
\newcommand{\past}{\textsf{P}}

\newcommand{\first}[1]{\mbox{$F_{#1}$}}
\newcommand{\firstp}[1]{\mbox{$F^+_{#1}$}}

\newcommand{\last}[1]{\mbox{$L_{#1}$}}

\newcommand{\lastm}[1]{\mbox{$L^-_{#1}$}}
\newcommand{\firsta}{\first{a}}

\newcommand{\lasta}{\last{a}}

\newcommand{\uitl}{\mbox{$\mathit{UITL}$}}

\newcommand{\fo}{\mbox{$\mathit{FO}$}}

\newcommand{\fotwo}{\mbox{$\fo^2$\/}}
\newcommand{\fotwoless}{\mbox{$\fotwo[<]$\/}}


%
%
\newcommand{\POTDFAO}{\mbox{$\mathit{po2dfa}$}}
\newcommand{\POTDFA}{\mbox{$\mathit{po2dfa}$}}
\newcommand{\fscan}[2]{#1 \stackrel{#2}{\rightarrow}}
\newcommand{\bscan}[2]{#1 \stackrel{#2}{\leftarrow}}
\newcommand{\funit}[1]{1 \stackrel{#1}{\rightarrow}}
\newcommand{\bunit}[1]{1 \stackrel{#1}{\leftarrow}}

\newcommand{\oomit}[1]{}

\newcommand{\st}{.~}
\newcommand{\pti}{\mathit{pt}}

\newcommand{\lcceil}{\hbox{$\lceil\!\lceil$}}
\newcommand{\rcceil}{\hbox{$\rceil\!\rceil$}}
\newcommand{\aloo}[1]{\lceil #1 \rceil}
\newcommand{\alcc}[1]{\lcceil #1 \rcceil}
\newcommand{\aloc}[1]{\lceil #1 \rcceil}
\newcommand{\alco}[1]{\lcceil #1 \rceil}



\newcommand{\fsearch}[1]{\stackrel{#1}{\rightarrow}}
\newcommand{\bsearch}[1]{\stackrel{#1}{\leftarrow}}
\newcommand{\fstep}{\stackrel{1}{\rightarrow}}
\newcommand{\bstep}{\stackrel{1}{\leftarrow}}

\newcommand{\rend}{\triangleleft}
\newcommand{\lend}{\triangleright}

\newcommand{\smartqed}

\newcommand{\succr}{\oplus}
\newcommand{\predr}{\ominus}



\newcommand{\ETE}{ETE}


\newcommand{\autm}{\mathcal A}

\newcommand{\potdfa}{\POTDFA}

\newcommand{\stl}{\mbox{$\mathit{Ranker ~Formula}$}}
\newcommand{\stls}{\mbox{$\mathit{Ranker ~Formulas}$}}

\newcommand{\utl}{\mbox{$\mathit{TL[F,P]}$}}
\newcommand{\pnf}{\textit{normal form}}

\newcommand{\gabar}[1]{G_{\overline{#1}}}
\newcommand{\habar}[1]{H_{\overline{#1}}}

\newcommand{\ete}{\mbox{$\mathit{ETE}$}}









\newcommand{\weakx}[1]{\widetilde{X}_{#1}} 
\newcommand{\weaky}[1]{\widetilde{Y}_{#1}}
\newcommand{\xunit}{X_1}
\newcommand{\yunit}{Y_1}
\newcommand{\pfless}{\mathcal P^{<}}
\newcommand{\pfleq}{\mathcal P^{\leq}}
\newcommand{\pfgreat}{\mathcal P^{>}}
\newcommand{\pfgeq}{\mathcal P^{\geq}}


\newcommand{\tlxy}{\mbox{$\mathit{TL[X_a,Y_a]}$}}



\newcommand{\conp}{\mbox{$\class{CoNP}$}}
\newcommand{\np}{\mbox{$\class{NP}$}}


\newcommand{\tlrecr}{\mbox{$\mathit{TL}^+[X_\phi,Y_\phi]$}}

\newcommand{\at}{\mbox{$\mathit{At}$}}
\newcommand{\uitlpm}{\mbox{$\mathit{UITL^{\pm}}$}}

\newcommand{\tab}{\hspace*{1cm}}
\newcommand{\leafn}{\mathit{Leaf}}





\newcommand{\detu}{\mbox{$\mathit{\widetilde{U}}$}}
\newcommand{\dets}{\mbox{$\mathit{\widetilde{S}}$}}
\newcommand{\duds}{\mbox{$\mathit{TL[\detu,\dets]}$}}
\newcommand{\succp}{\overline{\oplus}}
\newcommand{\predm}{\overline{\ominus}}

\begin{document}
\title{Deterministic Logics for $UL$}
\author{Paritosh K.~Pandya and Simoni S.~Shah}
\institute{Tata Institute of Fundamental Research,
Colaba, Mumbai \textit{400005}, India}
\pagestyle{empty} 

\maketitle
\begin{abstract}
The class of Unambiguous Star-Free Regular Languages ($UL$) was defined by Schutzenberger as the class of languages defined by Unambiguous Polynomials. $UL$ has been variously characterized (over finite words) by logics such as \tlxy, \uitl, \utl, \fotwoless, the variety $DA$ of monoids, as well as partially-ordered two-way DFA (\potdfa). We revisit this language class with emphasis on notion of unambiguity and develop on the concept of Deterministic Logics for UL. The formulas of deterministic logics uniquely parse a word in order to evaluate satisfaction. We show that several deterministic logics robustly characterize $UL$. Moreover, we derive constructive reductions from these logics to the \potdfa\/ automata. These reductions also allow us to show \np-complete satisfaction complexity for the deterministic logics considered.

Logics such as \utl, \fotwoless\/ are not deterministic and have been shown to characterize $UL$ using algebraic methods. However there has been no known constructive reduction from these logics to \potdfa. We use deterministic logics to bridge this gap. The language-equivalent \potdfa\/ for a given \utl\/ formula is constructed and we analyze its size relative to the size of the \utl\/ formula. This is an efficient reduction which gives an alternate proof to \np-complete satisfiability complexity of \utl\/ formulas.
\end{abstract}

\section{Introduction}
Unambiguous star-free regular languages ($UL$) was a language class first studied by Sch\"utzenberger \cite{Sch76}. He gave an algebraic characterization for $UL$ using the monoid variety $DA$. Since then, several diverse and unexpected characterizations have emerged for this language class: $\Delta_2[<]$ in the quantifier-alternation hierarchy of first-order definable languages \cite{PW97}, the two variable fragment $FO^2[<]$ \cite{TW98} (without any restriction on quantifier alternation), and Unary Temporal Logic \utl\/ \cite{EVW02} are some of the logical characterizations that are well known. Investigating the automata for $UL$, Schwentik, Therien and Volmer \cite{STV01}  defined Partially Ordered 2-Way Deterministic Automata ($\potdfa$) and showed that these exactly recognize the language
class $UL$. Recently, there have been additional characterizations of $UL$ using deterministic logics $\uitl$ \cite{LPS08} as well as $\tlxy$ \cite{DK07}. A survey paper \cite{DGK08} describes this language class and its characterizations.

A monomial over an alphabet $\Sigma$ is a regular expression of the form $A_0^* a_1 \cdots a_{n-1}A_n^*$, where $A_i\subseteq\Sigma$ and $a_i\in\Sigma$. By definition, $UL$ is the subclass of star-free regular languages which may be expressed as a finite disjoint union of unambiguous monomials: every word that belongs to the language, may be \emph{unambiguously} parsed so as to match a monomial. The uniqueness with which these monomials parse any word is the characteristic property of this language class. 
We explore a similar phenomenon in logics  by introducing the notion of \emph{Deterministic Temporal Logics for $UL$}.

Given a modality $\mathscr M$ of a temporal logic that is interpreted over a word model, the \emph{accessibility relation} of $\mathscr M$ is a relation which maps every position in the word with the set of positions that are accessible by $\mathscr M$. In case of interval temporal logics, the relation is over intervals instead of positions in the word model. The modality is \emph{deterministic} if its accessibility relation is a (partial) function. A logic is said to be deterministic if all its modalities are deterministic. Hence, deterministic logics over words have 
the property of \emph{Unique Parsability}.
\begin{definition}[Unique Parsability]
In the evaluation of a temporal logic formula over a given word, every subformula has a unique position (or interval) in the word at which it must be evaluated. This position is determined by the context of the subformula. 
\end{definition}

In this paper we relate various deterministic temporal logics with diverse deterministic temporal modalities and investigate their properties. We give constructive reductions between them (as depicted in Figure \ref{fig:unamb}) and also to the \potdfa\/ automata. Hence, we are able to infer their expressive equivalence with the language class $UL$. Moreover, the automaton connection allows us to establish their \np-complete satisfiability for \emph{all} the deterministic logics that are considered.
\begin{enumerate}
\item[(i)]Deterministic Until-Since Logic- \duds: \\
Let $A$ be any subset of the alphabet and $b$ be any letter from the alphabet. The "deterministic half until" modality $A \detu_b \phi$ holds if at the first occurrence of
$b$ in (strict) future $\phi$ holds and all intermediate letters are in $A$. The past operator $A \dets_b \phi$ is symmetric. Since the modalities are deterministic, the formulas posses the property of unique parsability. This logic admits a straightforward encoding of \potdfa.

\item[(ii)]Unambiguous Interval Temporal Logic with Expanding Modalities - \uitlpm:\\
This is an interval temporal logic with deterministic chop modalities $\firsta$ and $\lasta$
which chop an interval into two at the first or last occurrence of letter $a$. These modalities were introduced in \cite{LPS08} as logic $\uitl$. Here,  we enrich \uitl\/ with the expanding $\firstp{a}$ and $\lastm{a}$ chop modalities that extend an interval beyond the interval boundaries in the forward and the backward directions to the next or the previous occurrence of $a$. We call this logic \uitlpm. 

\item[(iii)]Deterministic Temporal Logic of Rankers -\tlxy:\\
Modality $X_a \phi$ (or $Y_a \phi$)  accesses the position of the next (or the last) occurrence of
letter $a$ where  $\phi$ must hold. The temporal logic with these modalities was investigated in \cite{DK07}. The authors showed that the deterministic temporal logic $\tlxy$ which closes the rankers of \cite{WI07,STV01} under boolean operations, characterizes $UL$ (their work was in the setting of infinite words). We identify \tlxy\/ as a deterministic logic and use its property of unique parsability to give an efficient reduction from  formulas to  \potdfa.

\item[(iv)]Recursive Deterministic Temporal Logic - \tlrecr:\\
This logic has the recursive modalities $X_\phi$ and $Y_\phi$. These modalities deterministically access (respectively) the next and previous positions where the formula $\phi$ holds. $\phi$ in turn, is a \tlrecr\/ formula. An attempt to ``flatten'' the \tlrecr\/ formulas by a reduction to \tlxy\/ formulas seems non-trivial. However we observe another important property of rankers namely \emph{convexity}. This property holds true even in the case of recursive rankers. Using this property, we  give a polynomial time reduction from \tlrecr\/ to the non-deterministic $\utl$.  
\end{enumerate}

The above logics share some common properties: all their modalities are \emph{deterministic} and they possess the property of unique parsability. This is the key property which brings out the ``unambiguity'' of the language class. The above logics are also symmetric- in the sense that they possess both \emph{future} and \emph{past} type of modalities. This property corresponds to the two-way nature of the \potdfa\/ automata and we are able to show constructive equivalences between the logics and \potdfa. 

\cite{DKL10} showed an important property of the logic \tlxy\/ namely \emph{ranker directionality}: Given a ranker $r$ there exist \tlxy\/ formulas which determine the relative positioning of any position in the word with respect to the position at which $r$ accepts. This property has proved to be crucial in the translation from various logics of $UL$ to \tlxy.

The prominent logical characterizations of $UL$ have primarily been non-deterministic, such as the fragments $\Delta_2[<]$ and $FO^2[<]$ of first-order definable languages and as Unary Temporal Logic $\utl$. While these  logics are expressively equivalent to  Partially ordered 2-Way DFAs ($\potdfa$), no explicit reductions from these logics to $\potdfa$ were known. Neither the complexities of the formula automaton construction nor the bounds on the size of equivalent automata were worked out.  We give an effective language preserving translation from the non-deterministic
logic \utl\/ to the deterministic logic $\tlxy$. This completes the missing link
in effective reduction from logics $\utl$ and $FO^2[<]$ for $UL$ to their language equivalent $\potdfa$ automata. (See figure \ref{fig:unamb})
The translation is complex and its formulation involves ranker directionality along with following key observation which relates unary \emph{future} and \emph{past} modalities to the deterministic \emph{first} and \emph{last} modalities: 
\begin{quote}
In order to evaluate the truth of a \utl\/ formula $\fut(\phi)$ or $\past(\phi)$ at any position $i$ in a word $w$, it is sufficient to determine the ordering of $i$ relative to the first and last positions in $w$ at which its immediate modal subformula $\phi$ holds.
\end{quote}
The logic $\utl$ was shown
to have \np-complete satisfiability, originally by Etessami, Vardi and Wilke  \cite{EVW02}, by exploiting its small-model property. Our translation from \utl\/ to \tlxy\/ and hence \potdfa, gives an alternative ``automata-theoretic'' proof for the same and allows us to analyze the structure and size of the resulting language-equivalent automaton.

\begin{figure}
\begin{tikzpicture}{scale=0.5,transform shape}
\draw (15,6) node (B) [rectangle, draw] {$\fotwoless$};
\draw (15,10) node (C) [rectangle, draw] {\utl};
\draw (10,3) node (D) [rectangle, draw] {\duds};
\draw (10,6) node (E) [rectangle,draw] {\uitlpm};
\draw (10,9) node (F) [rectangle,draw] {\tlxy};
\draw (10,12) node (G) [rectangle,draw] {\tlrecr};
\draw (5,7) node (H) [rectangle, draw] {\potdfa};

\draw (B)[arrows= -triangle 45] .. controls (16,8) .. node[anchor=west]{$\mathcal O(2^n)$} (C);
\draw (C)[arrows= -triangle 45].. controls (14,8).. node[anchor=east] {$\mathcal O(n)$} (B);
\draw (E) [arrows= -triangle 45] -- node[anchor=east]{$\mathcal O(n^2)$}(F);
\draw (D) [arrows= -triangle 45] -- node[anchor=west]{$\mathcal O(n^2)$} (E);
\draw (H) [arrows= -triangle 45] -- node[anchor=east]{$\mathcal O(2^n)$} (D);
\draw (C) [arrows= -triangle 45] --node[anchor=north]{$\mathcal O(2^n)$} (F);
\draw (G) [arrows= -triangle 45] -- node[anchor=north]{$\mathcal O(n)$}(C);
\draw (F) [arrows= -triangle 45] -- node[anchor=north]{$\mathcal O(n^2)$}(H); 
\draw[dotted] (12,2) -- (12,14);
\draw (10,13.5) node{Deterministic};
\draw (14,13.5) node {Non-deterministic};
\draw (C)  [arrows= -triangle 45] .. controls (13,12) .. node[anchor=south]{$\mathcal O(n)$} (G);
\end{tikzpicture}
\caption{Unambiguous Languages and its equivalent characterizations: Arrows indicate the size blow-up in the effective reduction in the corresponding direction}
\label{fig:unamb}
\end{figure}
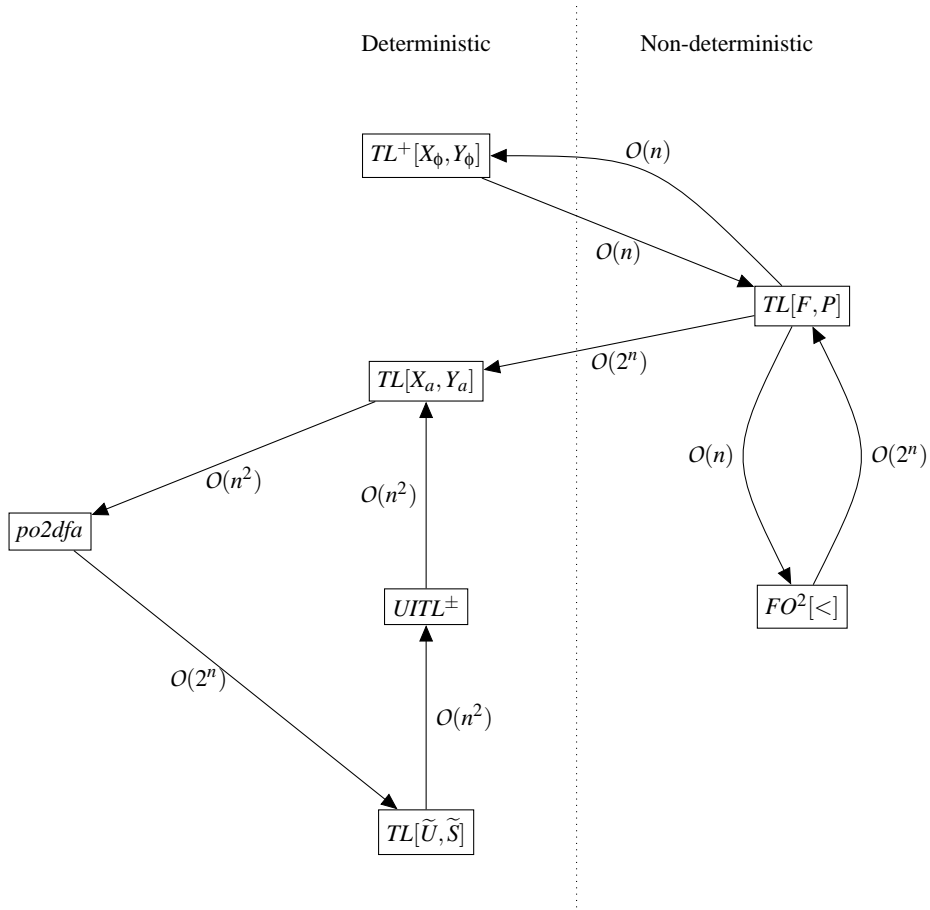

This paper is organized as follows.

\section{\potdfa: An Automaton characterization for UL}
\label{sec:potdfa}
Partially ordered two-way DFA were introduced by Schwentick, Th\'erien
and Vollmer \cite{STV01} where they showed that it is characterized by $DA$. As the name suggests, \potdfa\/ are two-way automata, so that the head of the automaton may move in either direction (one step to the left or right) in every transition. Also, the only loops in the transition graph of the automaton are self-loops on states. This naturally defines a partial-order on the set of states. Lastly, the automaton is deterministic- so that there is exactly one possible transition from any configuration of the automaton.\\
\\
Consider a finite alphabet $\Sigma$. Given $w \in \Sigma^*$, the two way automaton actually scans the string  $w'=\lend w\rend$ with end-markers $\lend$ and $\rend$ placed at positions 0 and  $\#w+1$ respectively. Let $\Sigma' = \Sigma \cup \{\lend,\rend\}$ include the two endmarkers.

\begin{definition}[\potdfa]
A \potdfa\/ over $\Sigma$ is a tuple $M = (Q,\leq,\delta,s,t,r)$ where $(Q,\leq)$ is a poset of states such that $r,t$ are the only minimal elements. $s$ is the initial state, $t$ is the accept state and $r$ is the rejecting state. The set $Q \setminus \{t,r\}$ is partitioned into $Q_L$ and $Q_R$ (the states reached from the left and the right respectively). 
  $\delta: ((Q_L \cup Q_R)\times \Sigma)\to Q)
          \cup((Q_L\times\{\rend\})\to Q\setminus Q_R)
	  \cup((Q_R\times\{\lend\})\to Q\setminus Q_L)$
is a progress-transition function satisfying $\delta(q,a) < q$. Hence it defines the progress transitions of the automaton. In order to
make the automaton \textit{``complete''}, every state $q$ in $Q \setminus
\{t,r\}$ has a default \textit{else} (self-loop) transition which is taken on all
letters $b$ for which no progress transition $\delta(q,b)$ is defined. 
Hence, the transition function $\delta$ specifies all the \emph{progress} transitions of the automaton, and a default self-loop (\emph{else}) transition is takes place otherwise. Note that there are no progress or \emph{else} transitions for the terminal states ($r$ and $t$). 
\end{definition}

\paragraph{Direction of head movement on a transition\\}
The direction in which the head moves at the end of a transition, depends on whether the target state of the transition is a $Q_L$ state, or a $Q_R$ state. 
$Q_L$ is the set of states that are \emph{``entered from the left''} and $Q_R$ are the states that are \emph{``entered from the right''}; i.e. 
if the automaton is in a state $q$, reading a symbol $a$, it enters a state $q'=\delta(q,a)$, then it moves its head to the right if $q'\in Q_L$, left if $q'\in Q_R$, and stays in the same position if $q'\in \{t,r\}$. The same rule applies to the self loop \emph{else} transitions also: on \emph{else} transitions of $Q_L$ states, the head moves to the right, and on \emph{else} transitions of $Q_R$ states, the head moves to the left.

\paragraph{Transitions on end-markers\\}
The transition function is designed to ensure that the automaton does not "fall off" either end of the input. Hence, for all $q\in Q\setminus \{t,r\}$, there are transitions $\delta(q,\lend)\in Q_L\cup\{t,r\}$ and $\delta(q,\rend)\in Q_R\cup\{t,r\}$.

\paragraph{Run of a \potdfa\\}
A po2dfa $M$ running over word $w$ is said to be in a configuration $(q,p)$ if it is in a state $q$ and head reading the position $p$ in  word. Let $Def(q)\subseteq \Sigma$ be the subset of letters on which no progress transition from $q$ is defined. Hence, the automaton takes the default \emph{else} transition on exactly the letters from $Def(q)$.  The run of a po2dfa $M$ on an input word $w$ starting with input head position $p_0$  is a sequence  $(q_0,p_0),(q_1,p_1),...(q_f,p_f)$ of configurations such that:
\begin{itemize}
\item $q_0=s$ and $q_f\in \{t,r\}$, 
\item For all $i  (1\leq i<f)$, if $w(p_i)\in Def(q_i)$ then 
\begin{itemize}
\item $q_{i+1}=q_i$ and
\item $p_{i+1}=p_i+1$ if $q_{i}\in Q_L$ and $p_{i+1}=p_i-1$ if $q_i\in Q_R$.
\end{itemize}
Otherwise, if $\delta(q_i,w(p_i))=(q')$ then
\begin{itemize}
\item $q_{i+1}=q'$ and
\item $p_{i+1}=p_i+1$ if $q_{i+1}\in Q_L$,\\
 $p_{i+1}=p_i-1$ if $q_{i+1}\in Q_R$  and \\
$p_{i+1}=p_i$ if $q_{i+1}\in\{t,r\}$.
\end{itemize}
\end{itemize}
In general, we abbreviate the run of an automaton $M$ starting from a position $p_0$ in a word $w$ by writing $M(w,p_0)=(q_f,p_f)$.
The run is \textit{accepting} if $q_f=t$; \textit{rejecting} if $q_f=r$. 
The automaton $M$ is said to be \textit{start-free} if for any $w$, and $\forall p_1,p_2\in dom(w)$,
$M(w,p_1) = (q_f,p_f)$ if and only if $M(w,p_2)=(q_f,p_f)$. 

The language $\mathcal L(M)$ of a \potdfa\/ $M$ is the set of all words $w$ such that $M(w,1)=(t,i)$ (for some $i\in dom(w')$).

\begin{remark}
We shall represent \potdfa\/ using their transition graphs such that all $q\in Q_L$ are marked with a ``$\rightarrow$'' and all $q\in Q_R$ are marked with a ``$\leftarrow$''. 
\end{remark}

\begin{example}
\label{exam:potdfa}
The \potdfa\/ $\autm$ is given in figure \ref{fig:expotdfa}. $\autm$ accepts all such words over $\{a,b,c,d\}^*$, which has its last $a$ at some position (say $i$), and some position (say $j>i$) has the first $d$ after $i$ and all intermediate positions between $i$ and $j$ do not have a $b$. Observe that the automaton rejects iff:
\begin{itemize}
\item There is no $a$ in the word
\item There is no $d$ after the last $a$ in the word
\item There is a $b$ between the last $a$ and the subsequent $d$ after it.
\end{itemize}
The language accepted by $\autm$, may be given by the regular expression $\Sigma^* a c^* d \{b,c,d\}^*$.
\end{example}

\begin{center}
\begin{figure}
\begin{tikzpicture}
[scale=0.8,transform shape, node distance = 2.5cm, initial text=]
 \tikzstyle{every state}=[thick]

\node[state,initial](S) {$\overrightarrow{s}$};
\node[state](A)[right of= S] {$\leftarrow$};
\node[state](B)[right of=A] {$\rightarrow$};
\node[state](T)[right of= B] {$t$};
\node[state](R)[below of = T] {$r$};
\path[->]
(S) edge [above] node {$\rend$} (A)
(A) edge [above] node {$a$} (B)
(A) edge [above] node {$\lend$} (R)
(B) edge [above] node {$b,\rend$} (R)
(B) edge [above] node {$d$} (T)
;
\end{tikzpicture}
\caption{Example \potdfa $\autm$}
\label{fig:expotdfa}
\end{figure}
\end{center}

\subsection{Constructions on \potdfa}
\label{sec:ete}

For the description of \POTDFAO\/ we shall use \name{Extended Turtle Expressions} (\cite{LPS08}), which are extensions of the turtle 
programs introduced  by Schwentick, Th\'erien and Vollmer \cite{STV01}. 
The syntax of \ete\/ follows and we explain its semantics below.  
Let 
$A,B$ range over subsets of $\Sigma'$.
\[
E ::= \begin{array}[t]{l}
        Acc ~\mid~ Rej ~\mid~  \funit{A} ~\mid~ \bunit{A} ~\mid~
        \fscan{A}{B} ~\mid~ \bscan{A}{B} ~\mid~ E_1?E_2,E_3
      \end{array}
\]

Automaton $Acc$ accepts immediately without moving the head. Similarly,
$Rej$ rejects immediately. 
$\fscan{A}{B}$ accepts at the next occurrence of a letter from $B$ 
strictly to the right, maintaining the constraint that the intervening 
letters are from $A \setminus B$. 
If no such occurrence exists the automaton rejects at the right end-marker or 
if a letter outside $A$
intervenes, the automaton rejects at its position. 
Automaton $\funit{A}$ accepts one position to the right if the current 
letter is from $A$, else rejects at the current position. 
$\bscan{A}{B}$ and $\bunit{A}$ are symmetric in the leftward direction. 
The conditional construct $E_1?E_2,E_3$ first 
executes $E_1$ on $w$. On its accepting $w$ at position $j$ it continues with
execution of $E_2$ from $j$. On $E_1$ rejecting $w$ at position $j$ it continues
with $E_3$ from position $j$. 

Here are some abbreviations which illustrate the power of the notation: 
$E_1;E_2 = E_1?E_2,Rej,~~~
\lnot E_1 = E_1?Rej,Acc$. Moreover, if $E_2$ is start-free then
$E_1 \lor E_2 = E_1?Acc,E_2$ and $E_1 \land E_2 = E_1?E2,Rej$.
Notice that automata for these expressions are start-free if $E_1$ is 
start-free.  We will use $\fscan{A}{a}$ for $\fscan{A}{\{a\}}$,
$\fsearch{a}$ for $(\fscan{\Sigma'}{a})$ and 
$\fstep$ for $(\funit{\Sigma'})$. 
Similarly define $\bsearch{a}$ and $\bstep$.

\begin{proposition}
\begin{itemize}
\item Given an \ETE\/ $E$ we can construct a \POTDFAO\/ accepting 
the same language with number of states linear in $|E|$.
\item Given a \potdfa\/ $\autm$ we may construct a language-equivalent \ete\/ whose size is linear in the size of $\autm$. 
\end{itemize}
\end{proposition}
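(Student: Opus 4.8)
The plan is to prove the two directions by, respectively, structural induction on the expression and induction along the state poset, in each case keeping the construction \emph{local} so that the size bound falls out by summing constant-size contributions.

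For the first item I would argue by structural induction on $E$, building a \potdfa $M_E$ together with the invariant that $M_E$ starts its run at whatever position the surrounding expression hands it and halts (in $t$ or $r$) at the position dictated by the \ete semantics. The atoms $Acc$, $Rej$, $\funit{A}$, $\bunit{A}$, $\fscan{A}{B}$, $\bscan{A}{B}$ are each realised by a gadget with a constant number of states: for $\fscan{A}{B}$, a single scanning state in $Q_L$ whose \emph{else} self-loop ranges over $A\setminus B$, whose progress transitions on $B$ lead to $t$, and whose transitions on the remaining letters and on the far end-marker lead to $r$ (symmetrically for the leftward primitives). For the conditional $E_1?E_2,E_3$ I take the inductively built $M_1,M_2,M_3$ and glue them by identifying the accept state of $M_1$ with the start state of $M_2$ and the reject state of $M_1$ with the start state of $M_3$ (merging all copies of $t$ and of $r$ only at the very end). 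The things to check are that this preserves determinism, the $Q_L/Q_R$ partition and the head-direction rules, and---crucially---the poset condition $\delta(q,a)<q$: since $t$ and $r$ were the minimal elements of $M_1$, stacking the maximal (start) states of $M_2,M_3$ underneath them keeps every progress transition strictly decreasing. As each syntactic node contributes $O(1)$ states and the conditional only adds glue, the total is linear in $|E|$.

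For the second item I would define, by induction from the minimal elements of the poset upward, an \ete $E_q$ simulating the automaton's behaviour from state $q$, with $E_s$ the desired expression; the base cases are $E_t=Acc$ and $E_r=Rej$. For a non-terminal $q\in Q_L$ with $D=\mathit{Def}(q)$, the run first moves rightward through a maximal block of $D$-letters (the self-loops) and then, at the first letter carrying a progress transition, branches to the corresponding successor; this is exactly one scan $\fscan{\Sigma'}{\Sigma'\setminus D}$ followed by a nest of conditionals that tests which letter was found and hands control to the appropriate $E_p$ (symmetrically with $\bscan$ for $q\in Q_R$). The reason a finite, loop-free expression suffices is structural: all position movement inside a single state is packed into the scan primitive, while progress transitions strictly descend the poset, so the nesting depth is bounded by the poset height.

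The two real obstacles are both in this second direction. The first is the bookkeeping that aligns the \emph{strict} (exclusive) ``next occurrence to the right/left'' semantics of $\fscan$ and $\bscan$ with the automaton's convention of moving the head one cell on every transition and of testing the cell on which it is entered; I would discharge this by adopting the invariant that each $E_q$ is invoked with the head one step behind its scanned region, so that the single head-move of the transition activating $q$ is absorbed into the strictness of $q$'s own scan, leaving only the input end-markers (the very first call from position $1$, and the transitions on $\lend,\rend$) to be handled as explicit boundary cases. The second, and the point I expect to be the most delicate, is the size accounting: each $E_q$ is a gadget of size $O(|\Sigma|)$ that refers to each successor $E_p$ exactly once, so the family $\{E_q\}$ is linear in $|\autm|$ only when common continuations are \emph{shared}. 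Establishing the claimed linear bound therefore amounts to arguing that the expression may be kept in this shared (DAG) form rather than its potentially exponential tree unfolding, and that the \ete size measure respects this sharing.
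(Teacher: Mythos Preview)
The paper does not actually supply a proof of this proposition; it is stated without argument and the surrounding text points the reader to \cite{LPS08} for details. So there is no ``paper's own proof'' to compare against, and your proposal has to be judged on its own merits.

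Your approach is the standard one and is essentially correct. For $E\to\potdfa$, structural induction with constant-size gadgets for the atoms and identification of $t/r$ of $M_1$ with the start states of $M_2,M_3$ for the conditional is exactly how these automata are built; your check that the poset condition survives the gluing is the only nontrivial point, and you handle it properly.

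For $\potdfa\to\ete$, your bottom-up construction along the state poset is again the natural one, and the two obstacles you isolate are the right ones. On the first (head-position alignment), your ``one step behind'' invariant is the clean way to reconcile the strict scan semantics with the automaton's move-on-every-transition convention. On the second, you are right to flag that the linear bound requires a DAG reading of the expression: a state with several predecessors has its $E_q$ duplicated in a pure tree unfolding, and a diamond-shaped poset already forces exponential blowup in the tree size. The paper is silent on this point here, but it does use DAG-size elsewhere (e.g.\ in the $\potdfa\to\duds$ translation it explicitly measures the formula by its DAG representation and remarks that the tree size is exponential), so reading ``size'' as DAG-size is consistent with the paper's conventions. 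With that reading your argument goes through; without it the linear claim would indeed fail.
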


\subsection{Properties of \potdfa}
The following properties of \potdfa\/ are useful. See \cite{LPS08} for details.
\begin{itemize}
\item \emph{Boolean Closure}: Boolean operations on \potdfa\/ may be achieved with linear blow-up in the size of the automata.
\item \emph{Small Model}: Given a \potdfa\/ $M$ with $n$ number of states, if $\mathcal L(M) \neq \emptyset$, then there exists a word $w\in\mathcal L(M)$ such that length of $w$ is linear in $n$.
\item \emph{Membership Checking}: Given a \potdfa\/ $M$ with $n$ number of states and a word $w$ of length $l$, the membership of $w$ in $\mathcal L(M)$ may be checked in time $\mathcal O(nl)$.
\item \emph{Language Non-Emptiness}: The non-emptiness of the language of a \potdfa\/ may be decided with \np-complete complexity.
\item \emph{Language Inclusion}: The language inclusion problem of \potdfa\/ is \conp-complete.
\end{itemize}

\section{\tlxy}
In \cite{DK07} the authors showed that the deterministic temporal logic $\tlxy$ which closes the rankers of \cite{WI07} under boolean operations,
also characterizes $UL$. In a subsequent paper \cite{DKL10}, they gave an important property of rankers called \emph{ranker directionality}. We revisit this logic of rankers, giving a mild generalization of the same and study some key properties of rankers such as \emph{convexity}. We shall give direct reductions between \tlxy\/ formulas and \potdfa\/ in both directions and analyse the complexity of translations. This also gives us an \np-complete satisfiability algorithm for \tlxy\/ formulas. 

\subsection{\tlxy: Syntax and Semantics}
\tlxy\/ is a unary deterministic temporal logic with the deterministic modalities $X_a$ (\textit{next}-$a$) and $Y_a$ (\textit{previous}-$a$) which uniquely mark the first and last occurrences (respectively) of a letter $a$ from the given position. We also include their corresponding \textit{weak} modalities ($\weakx{a}$ and $\weaky{a}$), and \emph{unit} modalities ($\xunit,\yunit$) which access the next and previous positions respectively. $SP$ (\textit{Starting Position}) and $EP$ (\textit{Ending Position}) are additional modalities which uniquely determine the first and last positions of the word respectively.

Let $\phi,\phi_1$ and $\phi_2$ range over \tlxy\/ formulas and $a$ range over letters from a finite alphabet $\Sigma$. The syntax of $\tlxy$ is given by: \\
\[
\phi:= a ~\mid~ \top ~\mid~ SP\phi_1 ~\mid~ EP\phi_1 ~\mid~  X_a \phi_1 ~\mid~ Y_a \phi_1 ~\mid~  \weakx{a} \phi_1 ~\mid~ \weaky{a} \phi_1
~\mid~\xunit \phi_1 ~\mid~ \yunit \phi_1
 ~\mid~\phi_1\lor\phi_2 ~\mid~ \neg\phi_1\\
\]
\medskip

$\gabar{a} = \neg X_a\top$ and $\habar{a} = \neg Y_a\top$ are derived atomic formulas. \\

\begin{remark}
The weak modalities and unit modalities do not add expressive power to the logic. They may be derived using the $X_a$ and $Y_a$ modalities alone. However, we include them in the syntax of the logic. As we shall see later in the paper, properties of these generalized rankers play a crucial role in our formulations of reductions between logics for $UL$.  
\end{remark}

A \tlxy\/ formula $\phi$ may be represented by its parse tree $T_{\phi}$ with each node representing a modal or boolean operator such that the subformulas of $\phi$ form the subtrees of $T_{\phi}$. Let $\sub(n)$ denote the subformula corresponding to the subtree rooted at node $n$, and $n$ be labelled by $\opr(n)$ which is the outermost operator (such as $X_a$ or $\lor$) if $n$ is an interior node, and by a letter or $\top$, if it is a leaf node. We will use the notion of subformulas and nodes interchangeably. The \textit{ancestry} of a subformula $n$ is the set of nodes in the path from the root up to (and including) $n$. The depth of a node is its distance from the root.

Semantics of \tlxy\/ formulas is as given below. Let $w \in \Sigma^+$
be a non-empty finite word and let $i \in dom(w)$ be a position within
the word.\\
\begin{tabular}{rcl}
$w,i\models a$ & iff & $w(i)=a$\\
$w,i\models SP \phi$ & iff & $w,1\models \phi$\\
$w,i\models EP\phi$ & iff & $w,\#w\models \phi$\\
$w,i\models X_a\phi$ & iff & $\exists j>i ~.~ w(j)=a$ and $\forall i<k<j. w(k)\neq a$ and $w,j\models\phi$.\\
$w,i\models Y_a\phi$ & iff & $\exists j<i ~.~ w(j)=a$ and $\forall j<k<i. w(k)\neq a$ and $w,j\models\phi$.\\
$w,i\models \weakx{a}\phi$ & iff & $\exists j\geq i ~.~ w(j)=a$ and $\forall i\leq k<j. w(k)\neq a$ and $w,j\models\phi$.\\
$w,i\models \weaky{a}\phi$ & iff & $\exists j\leq i ~.~ w(j)=a$ and $\forall j<k\leq i. w(k)\neq a$ and $w,j\models\phi$.\\
$w,i\models \xunit\phi_1$ & iff & $\exists j=i+1 ~.~ w,j\models \phi_1$\\
$w,i\models \yunit\phi_1$ & iff & $\exists j=i-1 ~.~ w,j\models \phi_1$\\
$w,i\models \phi_1\lor\phi_2$ & iff & $w,i\models\phi_1$ or $w,i\models \phi_2$\\
$w,i\models \neg\phi_1$ & iff & $w,i\not\models\phi_1$\\
\end{tabular} \\

The language accepted by a \tlxy\/ formula $\phi$ is given by $\mathcal L(\phi)=\{w ~\mid~ w,1\models\phi\}$.\\

\subsection{\tlxy: Unique Parsing} 
\tlxy\/ is a \emph{Deterministic Logic}: Given any word $w\in\Sigma^+$ and \tlxy\/ formula  $\phi$, for any
subformula $\eta$ of $\phi$, there exists a unique position in $dom(w)$ where $\eta$ must be evaluated in order to find the truth of $\phi$. This position is denoted by $\pos(\eta)$ and is uniquely determined by the ancestry of $\eta$. 
This property of the logic is referred to as the \textit{unique parsing} property \cite{LPS08}. If such a position does not exist, then $\pos(\eta)=\bot$. It can be defined by induction 
on the depth of $\eta$ as follows. If $\eta_{root}$ is the topmost node denoting the full formula, then $\pos(\eta_{root})= 1$. 
Inductively, if $\eta =op(\eta_1)$ or $\eta=op(\eta_1,\eta_2)$ and $\pos(\eta)=\bot$ then
$\pos(\eta_1)=\pos(\eta_2)=\bot$. For the remaining cases, let $\pos(\eta)=i$ (which is not $\bot$). Then,
\begin{itemize}
\item If $\eta = SP \eta_1$, then $\pos(\eta_1) = 1$.
 \item If $\eta= EP \eta_1$ then $\pos(\eta_1) = \#w$.
\item If $\eta= X_a \eta_1$. Then, $\pos(\eta_1) = \bot$ if $\forall k > i, ~w(k)\neq a$. \\
Otherwise, $\pos(\eta_1) = j$ s.t.
$j>i$ and $w(j)=a$ and $\forall i < k < j, ~w(k)\neq a$.
\item If $\eta= Y_a \eta_1$. Then, $\pos(\eta_1) = \bot$ if $\forall k < i, ~w(k)\neq a$. \\
Otherwise, $\pos(\eta_1) = j$ s.t.
$j<i$ and $w(j)=a$ and $\forall j < k < i, ~w(k)\neq a$.
\item If $\eta= \weakx{a} \eta_1$. Then, $\pos(\eta_1) = \bot$ if $\forall k \geq i, ~w(k)\neq a$. \\
Otherwise, $\pos(\eta_1) = j$ s.t.
$j\geq i$ and $w(j)=a$ and $\forall i \leq k < j, ~w(k)\neq a$.
\item If $\eta= \weaky{a} \eta_1$. Then, $\pos(\eta_1) = \bot$ if $\forall k \leq i, ~w(k)\neq a$. \\
Otherwise, $\pos(\eta_1) = j$ s.t.
$j\leq i$ and $w(j)=a$ and $\forall j < k \leq i, ~w(k)\neq a$.
\item If $\eta= \xunit\eta_1$. Then $\pos(\eta_1)=\bot$ if $i=\#w$\\
Otherwise, $\pos(\eta_1)~=~ i+1$
\item If $\eta= \yunit\eta_1$. Then $\pos(\eta_1)=\bot$ if $i=1$\\
Otherwise, $\pos(\eta_1)~=~ i-1$
\item If $\eta= \eta_1 \lor \eta_2$ or $\eta = \eta_1 \land \eta_2$ then $\pos(\eta_1)=\pos(\eta_2)=\pos(\eta)$. Similarly,
if $\eta= \neg \eta_1$ then $\pos(\eta_1) = \pos(\eta)$. 
\end{itemize}

\begin{example}
\label{exam:tlxy}
Consider the language given by $R=\Sigma^* a c^* d \{b,c,d\}^*$ as in Example \ref{exam:potdfa} of Chapter \ref{chap:ulpotdfa}. The language defines the set of all words such that the last $a$ in the word has a successive $d$ such that there is no $b$ between them. This may equivalently be expressed using the \tlxy\/ formula 
\[
 \phi ~:=~ EP ~\weaky{a} X_d (\neg Y_b\top ~~\lor~~ Y_bX_a\top)
\]
For any word $w$ which belongs to the language of the above formula, $\pos(X_d \neg(Y_bX_a\top))$ matches with the last $a$ in the word. Let this position be $i$. Further, $\pos(Y_bX_a\top)$ is a position $j$ such that $j$ is the first $d$ after $i$. Now at $j$, the formula $(\neg Y_b\top ~~\lor~~ Y_bX_a\top)$ holds if and only if either there is no $b$ before $j$ or the $b$ before $j$ (which is at some $k$), is such that there is an $a$ after it. Hence $k<i$, and there is no $b$ between $i$ and $j$. Hence we can see that the above formula $\phi$ expresses the language given by $R$.
\end{example}

\subsection{Ranker Formulas}
The notion of {\em rankers} \cite{WI07} has played an important role in characterizing unambiguous languages $UL$. They were originally introduced as turtle programs by Schwentick {\em et al} \cite{STV01}. Basically a ranker $r$ is a finite sequence of instructions of the form $X_a$ (denoting ``go to the next $a$ in the word'') or $Y_a$ (denoting ``go to the previous $a$ in the word'').
Given a word $w$ and a starting position $i$, the execution of a ranker $r$ succeeds and ends
at a final position $j$ if all the instructions find their required letter. This is denoted by
$w,i \models r$. 

Here, we generalize rankers and call them \stls\/. These are essentially $\tlxy$ formulas without any boolean operators, but including both the strict and the non-strict deterministic modalities ($X_a,Y_a,\weakx{a},\weaky{a}$), the unit-step modalities ($\xunit,yunit$), as well as the end postion modalities ($SP,EP$). This generalization maintains the key deterministic nature of rankers. \\
\\
The syntax of \stls\/ is as follows: 
\\
\tab $\phi := \top ~\mid~ SP\phi ~\mid~ EP\phi ~\mid~ X_a\phi ~\mid~ Y_a\phi ~\mid~ \weakx{a}\phi ~\mid~ \weaky{a}\phi ~\mid~ \xunit\phi ~\mid~\yunit\phi$
\footnote{While $a$ (for every $a\in\Sigma$) is an atomic formula in the case of \tlxy\/ formulas, \stls\/ do not have $a$ as an atomic formula.}
\\
Given a \stl\/ $\psi$, let $\leafn(\psi)$ denote the unique leaf node in $T_{\psi}$. Note that the parse tree of \stls\/ comprise of a single path, giving unique $\leafn(\psi)$ and $\opr(\leafn(\psi))=\top$.
For a given word $w$, the position of leaf node is denoted as $\lpos(\psi) = \pos(\leafn(\psi))$.

\paragraph{Ranker Directionality\\}
Consider a \stl\/ $\psi$. We can construct  \tlxy\/ formulas $\pfless(\psi)$, $\pfleq(\psi)$, $\pfgreat(\psi)$, $\pfgeq(\psi)$ such that they satisfy the following Lemma \ref{lem:pfpos}. These formulas are called \emph{ranker directionality formulas} and they allow us to analyse the relative positioning of the current position, with respect to the $lpos$ of the ranker. These formulas were given by \cite{DKL10} for rankers. We generalize them for \stls. 

Let $\phi\top$ be a \stl\/ where $\phi$ is the ancestor of the leaf node $\top$. The ranker directionality formulas are given by Table \ref{tab:trank}, by induction on the length of the ranker. In this table, let $\mathit{Atfirst}\defn ~\neg(\lor_{a\in\Sigma}(Y_a\top))$ and $\mathit{Atlast}\defn ~\neg(\lor_{a\in\Sigma}(X_a\top))$ be formulas which hold exactly at the first and last positions in any word. Since every \stl\/ formula is evaluated starting from the beginning of the word, we shall assume that at the top level the ranker begins with the $SP$ modality. 
\begin{table}
\begin{tabular}{|c|c|c|c|c|}
\hline
$\psi$ & $\pfless(\psi)$ & $\pfleq(\psi)$ & $\pfgreat(\psi)$ & $\pfgeq(\psi)$\\
\hline
\hline
$\phi SP\top$& $\bot$ & $\mathit{Atfirst}$ & $\neg\mathit{Atfirst}$ & $\top$\\
\hline
$\phi EP\top$& $\neg\mathit{Atlast}$ & $\top$ & $\bot$ & $\mathit{Atlast}$\\
\hline
$\phi\weakx{a}\top$ & $X_a(\pfleq(\psi))$ & $\habar{a}\lor (Y_a\pfless(\phi\top))$ & $Y_a\pfgeq(\phi\top)$ & $\gabar{a}\lor X_a\pfgreat(\psi)$\\
\hline
$\phi X_{a}\top$ & $X_a(\pfleq(\psi))$ & $\habar{a}\lor (Y_a\pfleq(\phi\top))$ & $Y_a\pfgreat(\phi\top)$ & $\gabar{a}\lor X_a\pfgreat(\psi)$\\
\hline
$\phi\weaky{a}\top$ & $X_a\pfleq(\phi\top)$ & $\habar{a}\lor (Y_a\pfless(\psi))$ & $Y_a\pfgeq(\psi)$ & $\gabar{a}\lor X_a\pfgreat(\phi\top)$\\
\hline
$\phi Y_a\top$ & $X_a\pfless(\phi\top)$ & $\habar{a}\lor (Y_a\pfless(\psi))$ & $Y_a\pfgeq(\psi)$ & $\gabar{a}\lor X_a\pfgeq(\phi\top)$\\
\hline
$\phi\xunit\top$ & $\pfleq(\phi\top)$ & $\mathit{Atfirst} ~\lor~ \yunit\pfleq(\phi\top)$ & $\yunit \pfgreat(\phi\top)$ & $\pfgreat(\phi\top)$ \\
\hline
$\phi\yunit\top$ & $\xunit\pfless(\phi\top)$ & $\pfless(\phi\top)$ & $\pfgeq(\phi\top)$ & $\mathit{Atlast} ~\lor~\xunit\pfgeq(\phi\top)$ \\
\hline 
\end{tabular}
\caption{Ranker Directionality Formulas}
\label{tab:trank}
\end{table}

Observe that the size of the ranker directionality formula is linear in the size of the \stl.\\

\begin{lemma}[Ranker Directionality\cite{DKL10}] 
\label{lem:pfpos}
Let $\psi$ be a \stl\/. Then $\forall w\in\Sigma^+$ and $\forall i\in dom(w)$, if $\lpos(\psi)\neq\bot$, then
\begin{itemize}
\item $w,i\models\pfless(\psi)$ iff $i<\lpos(\psi)$
\item $w,i\models\pfleq(\psi)$ iff $i\leq \lpos(\psi)$
\item $w,i\models\pfgreat(\psi)$ iff $i>\lpos(\psi)$
\item $w,i\models\pfgeq(\psi)$ iff $i\geq \lpos(\psi)$
\end{itemize}
\end{lemma}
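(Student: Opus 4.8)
The plan is to prove all four equivalences simultaneously by induction on the length of the ranker $\psi$ (its number of modalities), where the inductive step peels off the \emph{innermost} modality, namely the one immediately above the leaf $\top$. Writing $\psi = \phi\, m\, \top$ with $m$ this innermost modality, the shorter ranker $\phi\top$ obtained by deleting $m$ satisfies $\lpos(\phi\top) = p$, the position from which $m$ is executed, while $\lpos(\psi) = q$ is the position reached by executing $m$ from $p$. The key structural fact, read off the definition of $\pos$, is that if $\lpos(\psi)\neq\bot$ then $\lpos(\phi\top)=p\neq\bot$ as well (one must reach $p$ before executing $m$), so the induction hypothesis is applicable to $\phi\top$ at position $p$ throughout the step.

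For the base cases I would take $m\in\{SP,EP\}$. These are genuine base cases because $SP$ and $EP$ reset the evaluation position \emph{absolutely} to $1$ and $\#w$, so $\lpos(\psi)$ does not depend on $\phi$; the closed-form entries in the first two rows of Table~\ref{tab:trank} are then checked directly, using that $\mathit{Atfirst}$ and $\mathit{Atlast}$ hold exactly at positions $1$ and $\#w$. Since the ranker is assumed to begin with $SP$ at the top level, peeling always terminates at the innermost $SP$ or $EP$, so a base case is always reached.

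In the inductive step I would, for each of the six remaining modalities, set $q=\lpos(\psi)$ and verify the four equivalences by a case analysis on the position of $i$ relative to $q$ and to the nearest $a$ on the relevant side. For instance, taking $m=\weakx{a}$ so that $q$ is the first $a$ at or after $p$, the entry $\pfleq(\psi)=\habar{a}\lor(Y_a\,\pfless(\phi\top))$ is checked thus: if $i\leq q$ then either no $a$ precedes $i$ (so $\habar{a}$ holds) or the previous $a$ lies strictly below $p$ (so $\pfless(\phi\top)$ holds there by the hypothesis), whereas if $i>q$ the previous $a$ is $\geq q\geq p$, falsifying both disjuncts. The entries containing a self-reference to $\psi$ (e.g.\ $\pfless(\psi)=X_a\,\pfleq(\psi)$ and $\pfgeq(\psi)=\gabar{a}\lor X_a\,\pfgreat(\psi)$ in this row) occur only for the next/previous-$a$ modalities; in each adjacent pair $\{\pfless,\pfleq\}$ and $\{\pfgreat,\pfgeq\}$ exactly one formula refers only to the shorter ranker $\phi\top$ while the other merely adds one modal layer over its partner, so I verify the former first and deduce the latter, which incurs no circularity.

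I expect the main obstacle to be the bookkeeping of the case analysis rather than any single hard idea. The strict-versus-weak distinction is precisely what forces $\pfless(\phi\top)$ in place of $\pfleq(\phi\top)$ (and dually for the past modalities): for a weak modality the target $q$ may coincide with $p$, whereas for a strict one it lies strictly beyond, which changes whether an $a$ sitting at $p$ should be counted. The boundary subtleties — whether $p$ or $q$ themselves carry the letter $a$, whether the relevant next/previous $a$ exists at all (handled by $\gabar{a},\habar{a}$), and the first/last-position edge cases for $\xunit,\yunit$ (handled by $\mathit{Atfirst},\mathit{Atlast}$) — must each be matched against the corresponding disjunct of the table entry. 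Finally, the linear size bound asserted before the lemma is immediate from the shape of the recursion: each row builds the four formulas for $\psi$ from those of the single shorter ranker $\phi\top$ by adjoining only a constant number of modal and boolean symbols.
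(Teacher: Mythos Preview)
Your proposal is correct and follows essentially the same approach as the paper: induction on the length of the ranker, peeling off the innermost modality and verifying the entries of Table~\ref{tab:trank} against the semantics, with the $SP/EP$ rows serving as absolute base cases. Your treatment is in fact more careful than the paper's sketch, since you explicitly address the self-referencing entries (e.g.\ $\pfless(\psi)=X_a\,\pfleq(\psi)$) by first establishing the partner formula that depends only on $\phi\top$ and then deducing the other, and you also make explicit why $\lpos(\phi\top)\neq\bot$ so that the induction hypothesis applies.
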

\begin{proof}
The correctness of the construction of the ranker directionality formulas is a direct consequence of the semantics of \tlxy. We shall prove some key cases from Table \ref{tab:trank}. Consider any $w\in\Sigma^+$ and for all the cases below, assume $\lpos(\psi)\neq\bot$.
\begin{itemize}
 \item Consider $\psi=\phi\weakx{a}\top$. This is depicted in Figure \ref{fig:weakx}. Note that there are two mutually exclusive cases: (i) If $w(\lpos(\phi\top)=a$ then $\lpos(\phi\top) = \lpos(\psi)$. (ii) If $w(\lpos(\phi\top)\neq a$ then $\lpos(\psi)~ >~ \lpos(\phi\top)$.\\
\begin{figure}
 \begin{tikzpicture}[scale=0.9,transform shape]
 \draw (8,3.5) node {$Case (i): ~ w(\lpos(\phi\top))= a$};
\draw (0,5) node {$w$}; \draw (1,5) node{I}-- (5,5) node{I}--  (15,5) node{I};
\draw (5,4.5) node{$\lpos(\phi\top)=\lpos(\psi)$};
 \draw (5,5.3) node {$a$};
 \draw (8,0.5) node {$Case (ii): ~ w(\lpos(\phi\top))\neq a$};
\draw (0,2) node {$w$}; \draw (1,2) node{I}-- (5,2) node{I}-- (10,2) node{I}-- (15,2) node{I};
\draw (5,1.5) node{$\lpos(\phi\top)$}; \draw (10,1.5) node {$\lpos(\psi)$};
\draw [[-)] (5,2.3)--(10,2.3); \draw (7.5,2.5) node {$\neg a$}; \draw (10.1,2.3) node {$a$};
 \end{tikzpicture}
\caption{$\psi=\phi\weakx{a}\top$}
\label{fig:weakx}
\end{figure}

\begin{tabular}{lll}
$\forall i ~.~ i\leq \lpos(\psi)$ & iff & either there exists no $a$ to the left of $i$\\
&& otherwise, the last $a$ strictly to the left of $i$, is strictly \\
&& to the left of $\lpos(\phi\top)$\\
&iff& $\habar{a}\lor (Y_a\pfless(\phi\top))$\\
\end{tabular}

\item Consider $\psi= \phi X_{a}\top$. This is depicted in Figure \ref{fig:strictx}. Note that $\lpos(\phi\top) ~<~ \lpos(\psi)$.\\
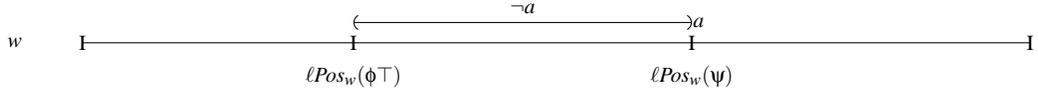
\begin{figure}
 \begin{tikzpicture}[scale=0.9,transform shape]
\draw (0,2) node {$w$}; \draw (1,2) node{I}-- (5,2) node{I}-- (10,2) node{I}-- (15,2) node{I};
\draw (5,1.5) node{$\lpos(\phi\top)$}; \draw (10,1.5) node {$\lpos(\psi)$};
\draw [(-)] (5,2.3)--(10,2.3); \draw (7.5,2.5) node {$\neg a$}; \draw (10.1,2.3) node {$a$};
 \end{tikzpicture}
\caption{$\psi=\phi X_a\top$}
\label{fig:strictx}
\end{figure}
\begin{tabular}{lll}
$\forall i ~.~ i\leq\lpos(\psi)$ & iff & either there exists no $a$ to the left of $i$\\
&& otherwise, the last $a$ strictly to the left of $i$ is $\leq$ $\lpos(\phi\top)$\\
&iff& $\habar{a}\lor (Y_a\pfleq(\phi\top))$\\
\end{tabular}

\item Consider $\psi=\phi\xunit\top$. This is depicted in Figure \ref{fig:xunit}. Note that $\lpos(\psi) = \lpos(\phi\top) +1$.\\
\begin{figure}
\begin{tikzpicture}[scale=0.7,transform shape]
\draw (0,2) node {$w$}; \draw (1,2) node{I}-- (5,2) node{I}-- (6,2) node{I}-- (15,2) node{I};
\draw (4,0.5) node{$\lpos(\phi\top)$}; \draw (7,0.5) node {$\lpos(\psi)$};
\draw [->](4,0.7)--(5,1.8); \draw [->](7,0.7)--(6,1.8);
\draw (5,2.3) node {$x$}; \draw (6,2.3) node {$(x+1)$};
\end{tikzpicture}
\caption{$\psi=\phi\xunit\top$}
\label{fig:xunit}
\end{figure} 
\begin{tabular}{lll}
$\forall i~.~ i\leq\lpos(\psi)$ & iff & $(i-1)\leq \lpos(\phi\top)$\\
&& iff either $i=1$(since $\lpos(\psi)> 1$) or $(i-1)\leq \lpos(\phi)$\\
&& iff $w,i\models \mathit{Atfirst} ~\lor~ \yunit\pfleq(\phi)$\\
\end{tabular}
\end{itemize}
\end{proof}

\paragraph{From \tlxy\/ to \stls\/\\}
We shall show that every \tlxy\/ formula may be written as a boolean combination of \stls\/ and atomic formulas. This is done by first eliminating atomic formulas of the form $a$ for any $a\in\Sigma$ and then ``pulling out'' booleans. This is given in the proposition below.\\

\begin{proposition}
\label{prop:simbool}
For any \tlxy\/ formula $\phi$, there is a boolean combination $\mathcal B(\psi_i)$ of formulas $\psi_i$, such that $\mathcal L(\phi) = \mathcal L(\mathcal B(\psi_i))$. Each $\psi_i$ is either an atomic formula or \stl. Moreover each $\psi_i$ is linear in the size of $\phi$.
\end{proposition}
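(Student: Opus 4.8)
The plan is to transform an arbitrary $\tlxy$ formula $\phi$ into a Boolean combination of atomic formulas and \stls\/ by a structural induction that pushes Boolean connectives outward to the top level. The key observation is that $\tlxy$ formulas differ from \stls\/ in exactly two ways: they may contain the atomic formula $a$ (for $a \in \Sigma$) at interior positions, and they may contain $\lor$ and $\neg$ at interior positions. I would eliminate both sources of non-\stl\/ structure by propagating them upward through the deterministic modalities, exploiting the unique parsing property that each modality accesses a single well-defined position.

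\textbf{Step 1: Eliminating interior atomic formulas.} First I would handle subformulas of the form $M(a \land \eta)$ or more generally modal applications whose argument mentions the atom $a$, where $M$ is one of $X_b, Y_b, \weakx{b}, \weaky{b}, \xunit, \yunit, SP, EP$. The idea is that a deterministic modality $M$ lands at a uniquely determined position $\lpos$, and testing whether the letter there equals $a$ can be rewritten. For the search modalities this is immediate: $X_b \eta$ already forces the landing letter to be $b$, so $X_b(a \land \eta)$ is equivalent to $X_b \eta$ if $a = b$ and to $\bot$ (equivalently $X_b(\neg\top \land \eta)$, or just $\bot$) if $a \neq b$. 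The genuine work is for the unit modalities $\xunit, \yunit$ and for $SP, EP$, where the landing letter is not syntactically fixed; here I would introduce a letter-test that can itself be expressed as a short \stl\/ or Boolean combination thereof. The upshot of this step is that every atom $a$ can be pushed to a leaf or absorbed, so each maximal modal chain becomes a genuine \stl\/ (whose only leaf is $\top$).

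\textbf{Step 2: Pulling Booleans to the top.} With atoms handled, I would distribute the modalities over $\lor$ and $\neg$ using the determinism of each modality. The crucial algebraic facts are that each deterministic modality $M$ satisfies $M(\eta_1 \lor \eta_2) \equiv M\eta_1 \lor M\eta_2$ and $M(\neg \eta) \equiv \theta_M \land \neg(M\eta)$, where $\theta_M$ asserts that the accessibility function of $M$ is defined at the current position (for the strong modalities $\theta_{X_a} = \neg \gabar{a} = X_a\top$, for $\xunit$ it is $\neg\mathit{Atlast}$, etc.). These equivalences hold precisely because the accessibility relation is a partial function: there is at most one position where the argument is tested, so satisfaction of the argument commutes with $\lor$, and its negation commutes with $\neg$ once definedness is separately asserted. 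Applying these rewrites repeatedly drives every $\lor$ and $\neg$ above every modality, yielding a Boolean combination $\calB(\psi_i)$ in which each $\psi_i$ is a single modal chain ending in $\top$, i.e.\ a \stl, together with atomic leaves.

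\textbf{The main obstacle} I anticipate is controlling the size blow-up to keep each $\psi_i$ linear in $|\phi|$, as the statement demands. The naive distribution $M(\neg\eta) \equiv \theta_M \land \neg M\eta$ duplicates the definedness guard $\theta_M$, and distributing over $\lor$ could in principle cause exponential copying of shared context. The resolution is that each resulting \stl\/ $\psi_i$ is a \emph{single root-to-leaf path} in the original parse tree $T_\phi$ (possibly with guards $\theta_M$ appended), so its length is bounded by the depth of $\phi$, hence linear in $|\phi|$; the Boolean combination $\calB$ may grow, but each individual conjunct/disjunct $\psi_i$ stays small. I would make this precise by defining the transformation so that each $\psi_i$ traces one branch of modalities, and argue by induction on formula depth that the guards $\theta_M$ introduced along any single path are themselves short \stls, so that after flattening each $\psi_i$ has length $O(|\phi|)$, establishing the claimed linear bound.
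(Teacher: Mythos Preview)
Your overall strategy—distributing Booleans upward and handling atoms separately—matches the paper, and the Step~2 equivalences you state ($M(\eta_1\lor\eta_2)\equiv M\eta_1\lor M\eta_2$ and $M(\neg\eta)\equiv M\top\land\neg M\eta$) are exactly the ones the paper uses. The divergence is in the order of the two steps and in the one case you leave open.

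The paper pulls Booleans out \emph{first}, so that each remaining piece is a ranker prefix $\phi$ terminated by an atom; only then does it eliminate the atom. For the unit modalities this order is what makes the elimination possible: the paper's rule is $\phi\,\xunit\,a \equiv \phi\,X_a\,\mathit{next}(\phi)$ with $\mathit{next}(\phi)=\neg\bigvee_{b\in\Sigma}Y_b\,\pfgreat(\phi)$, where $\pfgreat(\phi)$ is the ranker-directionality formula of Lemma~\ref{lem:pfpos}. That formula is only defined when the context $\phi$ is already a \stl, which is precisely what the prior Boolean extraction guarantees; since $\mathit{next}(\phi)$ itself reintroduces Booleans, a second round of extraction is applied afterwards. (The $SP,EP$ cases you flag as hard are in fact easy without context: $\phi\,SP\,a\equiv\phi\,SP\,\weakx{a}(\mathit{Atfirst})$.)

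Your Step~1 proposes instead to eliminate atoms before any Boolean has been pulled, and for $\xunit a$, $\yunit a$ you say only that a suitable ``letter-test'' can be written as a short \stl\ or Boolean combination. This is exactly the nontrivial point you have not supplied: certifying $w(i{+}1)=a$ as a Boolean combination of rankers evaluated at the current position $i$, with no ranker naming $i$, requires comparing the landing point of $X_a$ against $i$ itself—and that comparison is precisely what the context ranker $\phi$ together with $\pfgreat(\phi)$ provides. Without producing this ingredient your Step~1 does not go through for the unit modalities; if you swap the order of your two steps and invoke ranker directionality for the $\xunit/\yunit$ atom cases, you recover the paper's proof (including the linear bound on each $\psi_i$, since $\pfgreat(\phi)$ is linear in $|\phi|$).
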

\begin{proof}
Every boolean may be ``pulled out'' of \tlxy\/ formulas using the equivalences below.
\begin{itemize}
\item $\phi X_a (\phi_1 \lor \phi_2) \equiv (\phi X_a\phi_1) \lor (\phi X_a\phi_2)$
\item $\phi X_a (\phi_1 \land \phi_2) \equiv (\phi X_a\phi_1) \land (\phi X_a\phi_2)$
\item $\phi X_a(\neg \phi_1) \equiv  \neg(\phi X_a\phi_1)\land \phi X_a\top$
\item $\phi Y_a (\phi_1 \lor \phi_2) \equiv (\phi Y_a\phi_1) \lor (\phi Y_a\phi_2)$
\item $\phi Y_a (\phi_1 \land \phi_2) \equiv (\phi Y_a\phi_1) \land (\phi Y_a\phi_2)$
\item $\phi Y_a(\neg \phi_1) \equiv  \neg(\phi Y_a\phi_1)\land \phi Y_a\top$ \\
\end{itemize}

Now, if $\psi$ is a \stl, define formulas\\
$\mathit{next}(\psi) ~=~ \neg\bigvee\limits_{b\in\Sigma}(Y_b\land \pfgreat(\psi))$ \\
$\mathit{prev}(\psi) ~=~ \neg\bigvee\limits_{b\in\Sigma}(X_b\land \pfless(\psi))$ \\
Observe that $\forall w\in \Sigma^*$ such that $\lpos(\psi)\neq\bot$, 
\begin{itemize}
\item If $i>\lpos(\psi)$, then $w,i\models \mathit{next}(\psi)$ if and only if $i=\lpos(\psi)+1$
\item If $i<\lpos(\psi)$, then $w,i\models \mathit{prev}(\psi)$ if and only if $i=\lpos(\psi)-1$
\end{itemize}
In other words, given a \stl\/ $\psi$, the formulas $\mathit{next}(\psi)$ and $\mathit{prev}(\psi)$ respectively hold exactly at the position next to and previous to $\lpos(\psi)$. 

The atomic formula $a$ may be eliminated from the \stls\/ using the equivalences:
\begin{itemize}
\item $\phi X_ba ~\equiv~ \phi X_b\top$ and $\phi Y_ba~\equiv~ \phi Y_b\top$ if $a=b$
\item $\phi X_ba ~\equiv~ \bot$ and $\phi Y_ba \equiv \bot$ if $a\neq b$
\item $\phi \weakx{b}a ~\equiv~ \phi \weakx{b}\top$ and $\phi\weaky{b}a~\equiv~ \phi\weaky{b}\top$ if $a=b$
\item $\phi \weakx{b}a ~\equiv~ \bot$ and $\phi\weaky{b}a ~\equiv~ \bot$ if $a\neq b$
\item $\phi SP a~\equiv~ \phi SP\weakx{a}(Atfirst)$
\item $\phi EP a~\equiv~ \phi EP\weaky{a}(Atlast)$
\item $\phi \xunit a ~\equiv~ \phi X_a\mathit{next}(\phi)$
\item $\phi \yunit a ~\equiv~ \phi Y_a \mathit{prev}(\phi)$\\
\end{itemize}
After elimination of atomic formulas, we obtain \tlxy\/ formulas with booleans. We may again eliminate booleans using the equivalencies given above. The resulting formula is a boolean function $\mathcal B(\psi)$ where each $\psi$ is either an atomic formula or a \stl\/ of size linear in $\phi$.
\end{proof}

\begin{example}
\label{exam:bool}
We may eliminate the negation and conjunctions from the formula as given below:\\
\begin{tabular}{ll}
$ \phi ~:=~ EP \weaky{a} X_d [\neg(Y_bX_a\top) ~\land~ Y_c\top]$ & $\equiv~ EP \weaky{a} X_d [\neg(Y_bX_a\top)] ~\land ~ EP \weaky{a} X_dY_c\top$\\
 & $\equiv~ [\neg( EP \weaky{a} X_d Y_bX_a\top)~\land  EP \weaky{a} X_d \top] ~\land~ EP \weaky{a} X_dY_c\top$\\
\end{tabular}
\end{example}

\paragraph{Eliminating additional modalities}
\begin{proposition}
\label{prop:genrank}
Every \tlxy\/ formula may be expressed as language-equivalent \tlxy\/ formula without weak modalities and unit-step modalities.
\end{proposition}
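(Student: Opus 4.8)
The plan is to remove the two families of auxiliary modalities in turn, the weak modalities $\weakx{a},\weaky{a}$ first and then the unit-step modalities $\xunit,\yunit$, keeping at each stage a language-equivalent \tlxy\/ formula. Removing the weak modalities is purely local, while removing the unit steps is the genuinely delicate part and rests on ranker directionality (Lemma~\ref{lem:pfpos}). Since the unit-step rewrite needs the surrounding context to be a single linear ranker, I first put the (weak-free) formula into a \emph{ranker-normal form}, namely a boolean combination of atoms and pure \stls\/ (rankers with no internal booleans), by repeatedly pulling booleans outside the modalities using the equivalences of Proposition~\ref{prop:simbool}; this pulling terminates because each application strictly shrinks the operand of some modality.

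The weak modalities are eliminated by the position-wise equivalences $\weakx{a}\phi \equiv (a\land\phi)\lor(\neg a\land X_a\phi)$ and $\weaky{a}\phi \equiv (a\land\phi)\lor(\neg a\land Y_a\phi)$, which are immediate from the semantics: a non-strict search inspects the current position first, so it is witnessed at $i$ when $w(i)=a$ and otherwise coincides with the strict modality. Since these identities hold at every $i\in dom(w)$, they may be substituted for any occurrence of a weak modality irrespective of its context; applying them innermost-first removes all weak modalities while introducing only $X_a$, $Y_a$, atoms and booleans, and crucially no unit steps.

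For a unit step I exploit the ranker context. Writing a pure ranker that contains a unit step as $\chi\,\xunit\,\rho$, where $\chi$ is the prefix reaching the current position $\lpos(\chi)$, I replace it by $\chi\,\xunit\,\rho \equiv \bigvee_{b\in\Sigma}\chi\,X_b(\mathit{next}(\chi\top)\land\rho)$, where $\mathit{next}(\chi\top)=\neg\bigvee_{c\in\Sigma}(Y_c\land\pfgreat(\chi\top))$ is the formula from the proof of Proposition~\ref{prop:simbool} that, at any position strictly beyond $\lpos(\chi\top)=\lpos(\chi)$, holds exactly at $\lpos(\chi)+1$. Correctness is easy to check: $\chi\,X_b$ always lands strictly to the right of $\lpos(\chi)$, the guard $\mathit{next}(\chi\top)$ pins that landing position to the immediate successor $\lpos(\chi)+1$, and the disjunction ranges over the a~priori unknown letter actually sitting at that successor, every wrong choice of $b$ making $X_b$ overshoot and falsifying the guard; if $\lpos(\chi)$ is the last position then every $X_b$ fails, matching the failure of $\xunit$. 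The past step $\yunit$ is symmetric, using $\mathit{prev}(\chi\top)=\neg\bigvee_{c\in\Sigma}(X_c\land\pfless(\chi\top))$ and $Y_b$ in place of $X_b$.

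The main obstacle is termination, because $\pfgreat(\chi\top)$, and hence the guard $\mathit{next}(\chi\top)$, is itself a \tlxy\/ formula that would contain unit steps whenever $\chi$ does (the $\phi\xunit\top$ and $\phi\yunit\top$ rows of Table~\ref{tab:trank} reintroduce $\xunit,\yunit$). I resolve this by always eliminating an \emph{outermost} unit step first: for an outermost occurrence the prefix $\chi$ is both weak-free and unit-free, so by the remaining rows of Table~\ref{tab:trank} the directionality formula $\pfgreat(\chi\top)$, and therefore the guard, contains neither weak nor unit modalities, and the rewrite introduces only $X_b$ and booleans. Re-normalising the freshly created booleans back to ranker-normal form replaces the treated ranker by several pure rankers, each of whose unit steps come from a single copy of $\rho$ and hence number at most one fewer than before. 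Taking as measure the finite multiset of per-ranker unit-step counts over the boolean combination, each rewrite removes one element and adds finitely many strictly smaller ones, so the measure strictly decreases in the (well-founded) multiset order; iterating therefore terminates and yields a language-equivalent \tlxy\/ formula built solely from $X_a$, $Y_a$, $SP$, $EP$ and booleans.
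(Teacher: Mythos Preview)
Your proof is correct and uses essentially the same ingredients as the paper: the weak-modality equivalences $\weakx{a}\phi\equiv(a\land\phi)\lor(\neg a\land X_a\phi)$ (and its dual), normalisation to a boolean combination of \stls\ via Proposition~\ref{prop:simbool}, and the unit-step elimination $\chi\,\xunit\,\rho\equiv\chi\bigvee_{b\in\Sigma}X_b(\mathit{next}(\chi\top)\land\rho)$ built from ranker directionality. The only difference is the order---the paper first normalises and removes unit steps, then removes weak modalities, whereas you remove weak modalities first---and you supply an explicit outermost-first termination argument (via the multiset order on per-ranker unit-step counts) that the paper leaves implicit; your order makes that termination argument slightly cleaner since the prefix $\chi$ is automatically free of both weak and unit modalities.
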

\begin{proof}
Consider any \tlxy\/ formula $\Phi$. We shall reduce it to a formula without weak modalities and unit-step modalities. 
Firstly, we may pull out the booleans to reduce the formula to a boolean combination of \stls (using Proposition \ref{prop:simbool}). We may then eliminate the unit-step modalities from the \stls\/ using the following rules:
\begin{itemize}
\item[] $\phi_1\xunit\phi_2 ~\equiv~ \phi_1\bigvee\limits_{a\in\Sigma}[X_a(\mathit{next}(\phi_1)\land \phi_2)]$
\item[] $\phi_1\yunit\phi_2 ~\equiv~ \phi_1\bigvee\limits_{a\in\Sigma}[Y_a(\mathit{prev}(\phi_1)\land \phi_2)]$
\end{itemize}
Note that eliminating each unit step modality in a \tlxy\/ formula involves first pulling out booleans and then applying one of the above rules to each \stl\/. This is because the $\mathit{next}$ and $\mathit{prev}$ formulas use ranker directionality formulas which are applicable to \stls\/ and not \tlxy\/ formulas in general. 

Further, we may eliminate the weak modalities using the following reductions:
\begin{itemize}
\item[] $\weakx{a}\phi ~\equiv ~ (a\land \phi) \lor (\neg a\land X_a\phi)$
\item[] $\weaky{a}\phi ~\equiv ~ (a\land \phi) \lor (\neg a\land Y_a\phi)$
\end{itemize}
\end{proof}

\paragraph{Convexity of \stls\/\\}
We show here another useful property of \stls\/, which will be important in reductions given later in the paper.
\begin{lemma}[Convexity]
\label{lem:convex1}
For any \stl\/ $\psi$, and any word $w\in\Sigma^+$, if there exist $i,j\in dom(w)$ such that $i<j$ and $w,i\models\psi$ and $w,j\models\psi$, then $\forall i<k<j$, we have $w,k\models\psi$.
\end{lemma}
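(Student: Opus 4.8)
The plan is to prove the slightly stronger statement that for every \stl\/ $\psi$ and every $w\in\Sigma^+$ the satisfaction set $S_\psi = \{i\in dom(w) : w,i\models\psi\}$ is \emph{convex} (an interval of $dom(w)$), from which the lemma is immediate. I would argue by induction on the length of the ranker $\psi$, exploiting the fact that each outermost modality navigates \emph{deterministically}. The base case is $\psi=\top$, where $S_\top = dom(w)$ is trivially convex.

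For the inductive step, write $\psi = op\,\psi'$, where $op$ is the outermost modality and $\psi'$ is a shorter \stl. The semantics of \tlxy\/ show that each modality induces a partial navigation function $f_{op}$ on $dom(w)$: $f_{X_a}(i)$ is the first $a$ strictly after $i$, $f_{Y_a}(i)$ the last $a$ strictly before $i$, $f_{\weakx{a}}$ and $f_{\weaky{a}}$ their non-strict variants, $f_{\xunit}(i)=i+1$, $f_{\yunit}(i)=i-1$, and $SP,EP$ the constant maps to $1$ and $\#w$. In every case $w,i\models op\,\psi'$ holds iff $f_{op}(i)$ is defined and $w,f_{op}(i)\models\psi'$; hence $S_\psi = f_{op}^{-1}(S_{\psi'})$, and by the induction hypothesis $S_{\psi'}$ is convex.

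The crux is therefore a single uniform claim: \emph{the preimage under any $f_{op}$ of a convex set is convex}. I would derive this from two properties of the navigation functions, checked case by case: (a) each $f_{op}$ is monotone non-decreasing on its domain; and (b) the domain of definition of each $f_{op}$ is itself an interval of $dom(w)$ (an initial segment for $X_a,\weakx{a}$, a final segment for $Y_a,\weaky{a}$, the interval $[2,\#w]$ or $[1,\#w-1]$ for the unit steps, and all of $dom(w)$ for $SP,EP$). Granting (a) and (b), if $i_1<k<i_2$ with $i_1,i_2\in f_{op}^{-1}(S_{\psi'})$, then $k$ lies in the domain of $f_{op}$ by (b), so $f_{op}(k)$ is defined and $f_{op}(i_1)\le f_{op}(k)\le f_{op}(i_2)$ by (a); since $f_{op}(i_1),f_{op}(i_2)\in S_{\psi'}$ and $S_{\psi'}$ is convex, $f_{op}(k)\in S_{\psi'}$, i.e.\ $k\in S_\psi$. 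This establishes convexity of $S_\psi$ and hence the lemma.

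The only genuine work is verifying (a) and (b) for the individual modalities, which is routine from the semantics: monotonicity of ``first $a$ after'' and ``last $a$ before'' follows because advancing the starting index can only push these landing points forward, and the convex-domain claim reduces to observing that conditions like ``there is an $a$ after $i$'' or ``$i<\#w$'' are thresholds on $i$. I expect the main subtlety to be bookkeeping the boundary cases—strict versus weak modalities and the endpoints $1$ and $\#w$—but no real obstacle arises, precisely because every modality of \stls\/ is deterministic and order-preserving. Note that this argument needs only the \tlxy\/ semantics and is independent of the ranker directionality formulas of Lemma \ref{lem:pfpos}.
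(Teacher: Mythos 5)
Your proof is correct and follows essentially the same route as the paper's: structural induction on the ranker, peeling off the outermost modality and using the monotonicity of the deterministic navigation it induces. The paper carries out the inductive step concretely only for $X_a$ (via the positions $i'\leq k'\leq j'$) and declares the other cases similar, whereas you package the identical argument into the uniform observation that each modality's navigation map is monotone with convex domain, so preimages of convex sets are convex --- a cleaner presentation of the same idea, not a different proof.
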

\begin{proof}
We prove the lemma by induction on the structure of $\psi$. The lemma trivially holds for the base case of $\psi=\top$. 
We give the inductive argument for the case of $\psi= X_a\phi$ (other cases are similar/simpler and omitted). Assume that the lemma holds true for $\phi$ (Induction Hypothesis). Let $i,j\in dom(w)$ such that $i<j$ and $w,i\models\psi$ and $w,j\models \psi$. Consider some $k$ such that $i<k<j$. Let $i'$ and $j'$ respectively  be the positions of first occurrence of $a$ after $i$ and $j$.
These positions must exist as $w,i\models\psi$ and $w,j\models \psi$ and we have $w,i'\models\phi$ and $w,j'\models\phi$ and
$i<i' \leq j'$ with $j < j'$. Hence, $j'>k$. Let $k'$ be the position of first occurrence of $a$ after $k$. Such a position must exist since $w(j')=a$ and $j'>k$. Also $i' \leq k' \leq j'$. Then by induction hypothesis, $w,k' \models \phi$ and hence $w,k \models \psi$.
\end{proof}

\paragraph{Sequential composition of Rankers\\}
Through the rest of this chapter, we shall alternatively use the terms ``ranker'' and ``\stl''. We say that a ranker $\phi$ \textit{accepts} at a position $i$ in a word $w$ if $\lpos(\phi)=i$. Given a ranker $\phi_1$ and any \tlxy\/ formula $\phi_2$, denote by $\phi_1;\phi_2$ the \tlxy\/ formula obtained by replacing the leaf node of $\phi_1$ by the parse tree of $\phi_2$. Hence, it is easy to see that for any word $w$, $w,1\models \phi_1;\phi_2$ iff $w,i\models\phi_2$, where $i=\lpos(\phi_1)$. Note that if $\phi_1$ and $\phi_2$ are \stls\/ then $\phi_1;\phi_2$ is also a \stl.

\subsection{Equivalence of \tlxy\/ and \potdfa\/}
We give a language-preserving reductions from \tlxy\/ to \potdfa\/ and analyse its complexity. This also gives us an \np-complete language non-emptiness checking algorithm for \tlxy\/ formulas.

\paragraph{From \tlxy\/ to \potdfa\\}
First, we shall show a language-preserving conversion from \tlxy\/ formulas to \potdfa. 
One simple approach is to convert each ranker without weak or unit modalities into \potdfa. Since every $\phi$ can be written as a boolean combination of such \stls\/ and since \potdfa\/ are effectively closed under boolean operations, we obtain a language-equivalent automaton. However, the resulting automaton is exponential in size of $\phi$. 
Below, we obtain a polynomial-sized automaton by utilizing the unique parsability property of \tlxy\/ formulas.

\begin{theorem}
\label{thm:tlxyaut}
Given any \tlxy\/ formula $\phi$ we may construct an equivalent \potdfa\/ $\autm(\phi)$ such that $\mathcal L(\phi) = \mathcal L(\autm(\phi))$. The number of states in $\autm(\phi)$ is polynomial in the size $\phi$. 
\end{theorem}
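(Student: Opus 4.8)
The plan is to build, by recursion on the parse tree $T_\phi$, an Extended Turtle Expression (ETE) $E_\eta$ for every subformula $\eta$ of $\phi$, maintaining the invariant that \emph{when $E_\eta$ is started with the head at $\pos(\eta)$ it accepts iff $w,\pos(\eta)\models\eta$}. Since $\pos(\phi)=1$ at the root and $\mathcal L(\phi)=\{w\mid w,1\models\phi\}$, the automaton $E_\phi$ run from position $1$ recognizes $\mathcal L(\phi)$; converting the final ETE to a \potdfa\/ by the linear ETE-to-\potdfa\/ proposition yields $\autm(\phi)$. The recursion mirrors the modal structure of $\eta$ using the scanning primitives of ETE, exploiting the unique parsability of \tlxy.

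The modal and atomic nodes are cheap. For $\eta=X_a\eta_1$ I set $E_\eta=\fsearch{a};E_{\eta_1}$ (recalling $E_1;E_2=E_1?E_2,Rej$): the forward scan relocates the head from $\pos(\eta)$ to $\pos(\eta_1)$, the next $a$, and rejects exactly when no such $a$ exists, matching the semantics of $X_a$; the case $Y_a$ is symmetric via $\bsearch{a}$. The modalities $SP,EP,\xunit,\yunit,\weakx{a},\weaky{a}$ are handled by analogous constant-size scanning gadgets that move the head to $\pos(\eta_1)$ and then hand control to $E_{\eta_1}$ \emph{once}; navigating first and invoking the continuation once is what prevents any subautomaton from being duplicated. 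A leaf $a$ is an $\mathcal{O}(1)$ test of the current letter and $\top$ is $Acc$. Each such node therefore adds only $\mathcal{O}(1)$ states on top of its children.

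The only real difficulty is the boolean connectives, and it is precisely here that the naive route fails: distributing booleans outward as in Proposition~\ref{prop:simbool} duplicates the shared ancestry at every connective, producing exponentially many \stls\/ and hence an exponential automaton. I instead keep the parse-tree structure in place. For $\eta=\neg\eta_1$ I take $E_\eta=E_{\eta_1}?Rej,Acc$, which is sound because the head already sits at $\pos(\eta)=\pos(\eta_1)$ when $E_{\eta_1}$ starts and nothing further is evaluated at this position. For $\eta=\eta_1\land\eta_2$ (dually $\lor$) I must evaluate $\eta_2$ back at $i=\pos(\eta)$ after $E_{\eta_1}$ has moved the head. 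I therefore build a \emph{start-free} evaluator $E'_{\eta_2}$ that first moves to the first position of the word (a start-free action) and then re-executes the ancestry ranker $r_\eta$, the sequence of modal operators on the root-to-$\eta$ path, returning deterministically to $i$ by unique parsability, and finally runs $E_{\eta_2}$ there. Because $E'_{\eta_2}$ is start-free \emph{regardless} of whether $E_{\eta_2}$ is, the ETE conjunction $E_\eta=E_{\eta_1}?E'_{\eta_2},Rej$ (resp.\ $E_{\eta_1}?Acc,E'_{\eta_2}$ for $\lor$) is correct no matter where $E_{\eta_1}$ halted.

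Correctness follows by structural induction on $\eta$ using this invariant: the re-navigation reaches $i$ precisely because the first navigation to $\eta$ succeeded and \tlxy\/ is deterministic, and when $\pos(\eta)=\bot$ the parent scan already rejects, so that case never arises. For the size bound, each of the $\mathcal{O}(n)$ nodes contributes $\mathcal{O}(1)$ core states and each child appears exactly once, except that every boolean node's second operand carries a fresh re-navigation prefix of length $\mathcal{O}(\mathrm{depth}(\eta))=\mathcal{O}(n)$. Summing over the $\mathcal{O}(n)$ boolean nodes gives $\mathcal{O}(n^2)$ states in the ETE, and hence, after the linear translation, in $\autm(\phi)$, which is polynomial as claimed. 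The main obstacle is thus the booleans: a partially ordered (hence essentially loop-free) automaton cannot afford a position-preserving return to $i$, so I trade the forbidden exponential duplication for a per-node linear re-scan by a start-free evaluator.
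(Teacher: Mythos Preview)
Your proposal is correct and follows essentially the same approach as the paper: both exploit unique parsability to build an ETE by structural recursion, handle modal nodes with constant-size scanning prefixes, and resolve the boolean connectives by attaching to the second operand a \emph{start-free} re-navigation prefix that replays the ancestry from the left end of the word. The paper packages this as two mutually defined ETEs $POS(\cdot)$ (the start-free ancestry navigator) and $EVAL(\cdot)$, whereas you maintain a single family $E_\eta$ with the invariant ``run from $\pos(\eta)$'' and build the start-free variant $E'_{\eta_2}$ on demand at each boolean node; the resulting automata and the $\mathcal O(n^2)$ size analysis are the same.
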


\paragraph{Construction\\}
The efficient reduction from \tlxy\/ to \potdfa\/ relies on the property of unique parsing of \tlxy formulas. We use the \ete\/ representation to illustrate the construction of the \potdfa. Fix a \tlxy\/ formula $\Phi$. For any subformula $\phi$ of $\Phi$ and any given word $w$, $\pos(\phi)$ depends on the context of $\phi$ and may be evaluated in a top-down manner. We construct an \ete\/ $POS(\phi)$ which is given by the following proposition.

\begin{proposition}
For any subformula $\phi$ of $\Phi$ and any word $w\in\Sigma^*$, we have 
\begin{itemize}
\item $POS(\phi)(w,1) = (t,i)$ iff $\pos(\phi)=i$
\item $POS(\phi)(w,1) = (f,i)$ iff $\pos(\phi)=\bot$
\end{itemize}
\end{proposition}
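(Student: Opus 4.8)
The plan is to define $POS(\phi)$ by induction on the depth of $\phi$ in the parse tree $T_\Phi$, walking down the ancestry from the root to $\phi$ and mirroring exactly the top-down inductive definition of $\pos$. The invariant I maintain is the statement itself: $POS(\phi)(w,1)=(t,\pos(\phi))$ whenever $\pos(\phi)\neq\bot$, and $POS(\phi)(w,1)=(f,i)$ for some $i$ whenever $\pos(\phi)=\bot$ (the rejecting position being immaterial). At the root I take $POS(\Phi)=Acc$, which accepts in place at position $1=\pos(\Phi)$. For an interior node $\eta$ with child $\eta_1$, I set $POS(\eta_1)=POS(\eta);G$, where $G$ is a short navigation gadget determined by $\opr(\eta)$. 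Because $E_1;E_2$ rejects whenever $E_1$ rejects, the clause ``$\pos(\eta)=\bot$ implies $\pos(\eta_1)=\bot$'' is handled uniformly across all operators; it then remains only to choose $G$ so that, when $POS(\eta)$ accepts at $\pos(\eta)$, the gadget moves the head to $\pos(\eta_1)$ (or rejects iff $\pos(\eta_1)=\bot$).

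For the search modalities the gadget is a single turtle scan: $G=\fsearch{a}$ for $\eta=X_a\eta_1$ and $G=\bsearch{a}$ for $\eta=Y_a\eta_1$, since $\fscan{\Sigma'}{a}$ lands on the next $a$ strictly to the right and rejects at $\rend$ when none exists, precisely matching the defining clause of $\pos$. For boolean nodes the gadget is the identity, $POS(\eta_1)=POS(\eta)$ (and likewise for $\eta_2$, and for $\neg$), reflecting $\pos(\eta_1)=\pos(\eta)$. For the endpoint modalities I navigate to the word boundaries: $G=\bsearch{\lend};\fstep$ drives the head to position $1$ for $SP$, and $G=\fsearch{\rend};\bstep$ drives it to $\#w$ for $EP$; both always succeed once $POS(\eta)$ has accepted, since the endmarkers are always present.

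The delicate gadgets are those for the weak and unit modalities, because turtle scans are strict and can neither ``stay'' on the current cell nor refuse to cross an endmarker. For $\weakx{a}$ I use the conditional $G=\funit{\{a\}}?\bstep,\fsearch{a}$: if the current letter is $a$ the test accepts one cell to the right and $\bstep$ steps back onto it, giving the current position; otherwise the test rejects in place and $\fsearch{a}$ finds the first $a$ strictly to the right. Together these return the first $a$ at a position $\geq\pos(\eta)$, exactly the weak clause, and $\weaky{a}$ is symmetric. For $\xunit$ I step right and then check that the head has not landed on the right endmarker: $G=\fstep;(\funit{\{\rend\}}?Rej,Acc)$ accepts at $\pos(\eta)+1$ precisely when $\pos(\eta)<\#w$ and rejects otherwise, matching ``$\pos(\eta_1)=\bot$ iff $\pos(\eta)=\#w$''; $\yunit$ is symmetric, testing for $\lend$. (Alternatively one could preprocess $\Phi$ by Proposition \ref{prop:genrank} to remove these modalities, but handling them directly keeps the correspondence between subformulas of $\Phi$ and their $POS$ expressions intact.)

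Correctness then follows by a routine induction on depth: assuming $POS(\eta)$ satisfies the invariant, the semantics of sequential composition and of the conditional construct $E_1?E_2,E_3$, combined with the matching clause in the definition of $\pos$, show that $POS(\eta_1)$ (and $POS(\eta_2)$) satisfy it as well. I expect the main obstacle to be exactly the weak and unit cases: one must check that each conditional gadget both lands on the correct cell in the accepting case and rejects in exactly the configurations where $\pos(\cdot)=\bot$, with care at the boundary positions $\pos(\eta)=1$ and $\pos(\eta)=\#w$ and at the far endmarkers where a scan fails. Since the proposition constrains only the accepting position, the exact rejecting positions produced by these gadgets need not be tracked, which simplifies the verification.
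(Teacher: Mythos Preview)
Your proposal is correct and takes essentially the same approach as the paper: both define $POS(\phi)$ top-down along the ancestry of $\phi$ as an \ete\ of the form $POS(\eta_1)=POS(\eta);G$, choosing a short navigation gadget $G$ for each modality and letting sequential composition propagate the $\bot$ case. The only differences are cosmetic details in the individual gadgets (e.g., the paper writes the $\xunit$ gadget as $(\funit{\Sigma};\bunit{\rend})?Rej,\funit{\Sigma}$ and handles $\weakx{a}$ with a single scan rather than your conditional, and it makes $POS(\Phi)$ start-free by navigating to position~$1$ rather than using $Acc$), but the inductive invariant and the correctness argument are identical.
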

\begin{proof}
The \ete\/ for $POS(\phi)$ may be constructed by structural induction on the formula as follows.
\begin{itemize}
\item $POS(\Phi) ~=~ \bscan{\lend}{\Sigma'} ; (\funit{\lend})$
\item If $\phi = X_a\phi_1$ then $POS(\phi_1) ~=~ POS(\phi);\funit{\Sigma'};\fscan{a}{\Sigma'}$
\item If $\phi = Y_a\phi_1$ then $POS(\phi_1) ~=~ POS(\phi);\bunit{\Sigma'};\bscan{a}{\Sigma'}$
\item If $\phi = \weakx{a}\phi_1$ then $POS(\phi_1) ~=~ POS(\phi);\fscan{a}{\Sigma'}$
\item If $\phi = \weaky{a}\phi_1$ then $POS(\phi_1) ~=~ POS(\phi);\bscan{a}{\Sigma'}$
\item If $\phi = \xunit\phi_1$ then $POS(\phi_1) ~=~ POS(\phi) ~;~ [(\funit{\Sigma};\bunit{\rend}) ~?~ Rej ~:~ \funit{\Sigma}]$
\item If $\phi = \yunit\phi_1$ then $POS(\phi_1) ~=~ POS(\phi) ~;~ [(\bunit{\Sigma};\funit{\lend}) ~?~ Rej ~:~ \bunit{\Sigma}]$
\item If $\phi = SP\phi_1$ then $POS(\phi_1) ~=~ \bscan{\lend}{\Sigma'} ~;~(\funit{\lend})$
\item If $\phi = EP\phi_1$ then $POS(\phi_1) ~=~ \fscan{\rend}{\Sigma'} ~;~ (\bunit{\rend})$
\item If $\phi= \phi_1\lor\phi_2$ then $POS(\phi_1)~=~POS(\phi_2)=POS(\phi)$
\item If $\phi = \neg \phi_1$ then $POS(\phi_1) ~=~ POS(\phi) $
\end{itemize}
The correctness of the above construction may be directly deduced from the definition of $\pos(\phi)$ for \tlxy\/ formulas. Note that the \ete\/ for $POS(\phi_1)$ when $\phi=\xunit\phi_1$ is constructed as follows. It first checks if $POS(\phi)$ is at the last position in the word (by using $\funit{\Sigma};\bunit{\rend}$). If so, it rejects (evaluates to $f$), in which case $\pos(\phi_1)=\bot$. Otherwise, it accepts at the next position after $POS(\phi)$. The case of $\phi=\yunit\phi_1$ is symmetric to this.
By observing the above construction, the following property may be easily verified.
\end{proof}

Now, for every subformula $\phi$, we construct \ete\/ $EVAL(\phi)$ which evaluates the formula at is unique position, as follows.
\begin{proposition}
For any subformula $\phi$ of $\Phi$ and any word $w\in\Sigma^*$ we have $EVAL(w,1) = (t,i)$ iff $\pos(\phi)\neq\bot$ and $w,\pos(\phi)\models\phi$.
\end{proposition}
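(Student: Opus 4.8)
The plan is to define $EVAL(\phi)$ by structural induction on the subformula $\phi$ of $\Phi$, layering a single ``local test'' on top of the navigation automaton $POS(\phi)$ already constructed, and then to prove the stated equivalence by a parallel induction that invokes (i) the preceding proposition characterizing $POS(\phi)$, (ii) the top-down definition of $\pos$ from the unique-parsing section, and (iii) the $\models$-semantics of \tlxy. The guiding observation is that, once $POS(\phi)$ has reached $\pos(\phi)$, deciding $w,\pos(\phi)\models\phi$ requires only inspecting the outermost operator of $\phi$ and recursing into its immediate subformula at \emph{its} determined position.

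For the construction I would split into four kinds of cases. For the leaves I set $EVAL(\top) = POS(\top);Acc$ and $EVAL(a) = POS(a);\funit{a}$, so that the sequencing operator forces rejection exactly when $\pos(\phi)=\bot$ and otherwise the unit test $\funit{a}$ succeeds iff $w(\pos(a))=a$. For the modal cases $\phi = X_a\phi_1$, $Y_a\phi_1$, $\weakx{a}\phi_1$, $\weaky{a}\phi_1$, $\xunit\phi_1$, $\yunit\phi_1$ I simply set $EVAL(\phi)=EVAL(\phi_1)$: here the construction of $POS(\phi_1)$ already extends $POS(\phi)$ by the navigation step, so $EVAL(\phi_1)$ accepts iff $\pos(\phi_1)\neq\bot$ and $w,\pos(\phi_1)\models\phi_1$, which by the $\pos$-rules and the truth clause for the operator is exactly $\pos(\phi)\neq\bot$ and $w,\pos(\phi)\models\phi$. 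For the boolean cases I use the \ete\/ combinators, $EVAL(\phi_1\lor\phi_2)=EVAL(\phi_1)\lor EVAL(\phi_2)$ and $EVAL(\phi_1\land\phi_2)=EVAL(\phi_1)\land EVAL(\phi_2)$, which is sound because $\pos(\phi_1)=\pos(\phi_2)=\pos(\phi)$ distributes the position-existence guard through each disjunct and conjunct. The remaining cases $\phi=SP\phi_1$, $\phi=EP\phi_1$ and $\phi=\neg\phi_1$ are the delicate ones, and I treat them with an explicit guard, e.g.\ $EVAL(\neg\phi_1)=POS(\phi);(\neg EVAL(\phi_1))$ and $EVAL(SP\phi_1)=POS(\phi);EVAL(\phi_1)$.

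The correctness proof is then a routine structural induction, preceded by two supporting points. The first is a short lemma that every $POS(\phi)$, and hence every $EVAL(\phi)$, is \emph{start-free}: $POS(\Phi)$ begins by scanning to the left end-marker and so is independent of the initial head position, and both the navigation extensions and the \ete\/ boolean combinators preserve start-freeness. Start-freeness legitimizes the $\lor$/$\land$ combinators (whose second argument must be start-free) and, crucially, lets me run $EVAL(\phi_1)$ after a guard $POS(\phi)$ without the intervening head position corrupting the re-navigation that $EVAL(\phi_1)$ performs internally. The second point is to verify each inductive case by unfolding $EVAL$, reading off the head behaviour of $POS(\phi)$ from the preceding proposition, and matching it against the definition of $\pos$ and the semantics of the outermost operator.

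The main obstacle is the clean separation of two distinct reasons for non-acceptance: the position $\pos(\phi)$ may fail to exist ($\pos(\phi)=\bot$), or it may exist while $\phi$ is simply false there. A naive $EVAL(\neg\phi_1)=\neg EVAL(\phi_1)$ would wrongly accept in the first situation, since $EVAL(\phi_1)$ also rejects when $\pos(\phi)=\bot$; the same hazard reappears for $SP$/$EP$, whose $POS$ resets the head to an absolute position and thereby discards the parent's position-existence information. Concentrating the fix in the explicit $POS(\phi)$ guard for exactly these three cases is, I expect, the crux of the argument, since every other case either inherits the guard through the parent-referencing shape of $POS(\phi_1)$ or distributes it across the booleans. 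Finally, since $\pos(\Phi)=1$ for every word, $EVAL(\Phi)$ accepts $w$ from position $1$ iff $w,1\models\Phi$, so $EVAL(\Phi)$ is the \potdfa\/ sought in Theorem \ref{thm:tlxyaut}; tracking that each case adds only a constant or a single copy of a navigation prefix then yields the polynomial size bound.
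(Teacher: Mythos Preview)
Your approach is essentially the paper's: define $POS(\phi)$ first, then layer an inductively defined $EVAL(\phi)$ on top using the \ete\ combinators, and argue correctness by structural induction invoking the $POS$ proposition, the $\pos$-rules, and the \tlxy\ semantics.  The paper's construction is slightly different in bookkeeping: it uniformly sets $EVAL(\phi)=POS(\phi_1);EVAL(\phi_1)$ for \emph{all} unary modal operators (including $SP$ and $EP$), writes the disjunction as $[POS(\phi);EVAL(\phi_1)]\,?\,Acc:[POS(\phi);EVAL(\phi_2)]$, and takes $EVAL(\neg\phi_1)=EVAL(\phi_1)\,?\,Rej:Acc$ with no guard; it also omits the atomic case $a$.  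Your version instead drops the redundant prefix in the modal cases and concentrates the position-existence guard at $\neg$, $SP$, and $EP$.

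Your observation about negation is not merely defensive: the paper's unguarded clause $EVAL(\neg\phi_1)=EVAL(\phi_1)\,?\,Rej:Acc$ \emph{does} accept when $\pos(\neg\phi_1)=\bot$, so the proposition as stated fails for such subformulas under the paper's own construction (though the top-level $EVAL(\Phi)$ remains correct, since the $POS(\phi_1)$ prefix inserted at each enclosing modal step re-introduces the guard).  Your placement of the guard at $\neg$ repairs this and makes the inductive invariant hold uniformly.  One residual subtlety you flag but do not fully resolve: the paper's $POS$ clause for $SP$/$EP$ resets to an absolute position without threading the parent's existence, so the $POS$ proposition itself fails for descendants of an $SP$/$EP$ whose own position is $\bot$; your guard at the $SP$/$EP$ node handles that node, but the base cases $EVAL(\top)=POS(\top);Acc$ and $EVAL(a)=POS(a);\funit{a}$ for deeper descendants still rely on the $POS$ proposition.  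A clean fix is to redefine $POS(\phi_1)=POS(\phi);\bscan{\lend}{\Sigma'};\funit{\lend}$ (and symmetrically for $EP$) so that the navigation itself carries the guard; then your structural induction goes through verbatim.
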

\begin{proof}
\begin{itemize}
\item If $\phi=\top$ then $EVAL(\phi) ~=~ POS(\phi);Acc$
\item If $\phi = X_a\phi_1, Y_a\phi_1,\weakx{a}\phi_1,\weaky{a}\phi_1, SP\phi_1,EP\phi_1,\xunit\phi_1$ or $\yunit\phi_1$ then \\
\tab $EVAL(\phi) ~=~ POS(\phi_1);EVAL(\phi_1)$
\item If $\phi= \phi_1\lor\phi_2$ then $[POS(\phi);EVAL(\phi_1)]~?~ [Acc] ~:~ [POS(\phi); EVAL(\phi_2)]$
\item If $\phi = \neg \phi_1$ then $EVAL(\phi_1) ~?~ Rej ~:~ Acc$
\end{itemize} 
Hence, we may verify that for any subformula $\phi$ and any word $w$, $EVAL(w,1) = (t,i)$ iff $\pos(\phi)\neq\bot$ and $w,\pos(\phi)\models\phi$. 
\end{proof}

For the top level formula, we can see that $EVAL(\Phi)$ is the language-equivalent \ete\/ for $\Phi$.
\paragraph{Complexity\\}
Consider a \tlxy\/ formula $\Phi$ of length $l$. For every subformula $\phi$ of $\Phi$, observe that $POS(\phi)$ is linear in $l$. Further, $EVAL(\phi)$ is polynomial in $l$. Therefore, we can conclude that the size of the \ete (and hence the \potdfa) which is language-equivalent to $\Phi$ is polynomial in the size of $\Phi$. Hence the theorem (Theorem \ref{thm:tlxyaut}). 

The above translation allows us to give a tight \np-complete satisfiability complexity for \tlxy\/ formulas. We may convert a given \tlxy\/ formula to its language-equivalent \potdfa\/ whose size is polynomial in the size of its original formula. Since language emptiness of a \potdfa\/ is an \np-complete problem, satisfiability problem of \tlxy\/ is in \np. The \np-hardness of the satisfiaility problem of \tlxy\/ can be inferred from the \np-complete satisfiability of propositional temporal logic. Hence the following theorem.

\begin{theorem}[Satisfiability of \tlxy\/ formulas]
The satisifability of \tlxy\/ formulas is decidable with \np-complete complexity.
\end{theorem}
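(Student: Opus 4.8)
The plan is to establish the two directions of \np-completeness separately, leaning entirely on the machinery already developed in this section. For membership in \np, I would chain together the two reductions already at hand: first invoke Theorem~\ref{thm:tlxyaut} to convert the given \tlxy\/ formula $\Phi$ into a language-equivalent \potdfa\/ $\autm(\Phi)$ whose number of states is polynomial in $|\Phi|$; then appeal to the \emph{Language Non-Emptiness} property of \potdfa\/ recorded in Section~\ref{sec:potdfa}, which states that non-emptiness is decidable in \np. Since $\mathcal L(\Phi)=\mathcal L(\autm(\Phi))$, the formula $\Phi$ is satisfiable exactly when $\mathcal L(\autm(\Phi))\neq\emptyset$. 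Composing a polynomial-time (in fact polynomial-size) construction with an \np\/ decision procedure keeps the whole problem in \np, giving the upper bound.

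For the lower bound I would argue \np-hardness by reduction from a known \np-complete satisfiability problem, the natural source being propositional satisfiability, which embeds into propositional (temporal) logic as already alluded to in the paragraph preceding the statement. The cleanest route is to observe that \tlxy\/ contains a propositional fragment: a Boolean formula over propositional variables can be modelled by encoding each variable as a distinct letter and evaluating the corresponding Boolean combination of atomic formulas $a$ at a single position, so that satisfiability of the propositional formula reduces in polynomial time to satisfiability of the \tlxy\/ formula. Alternatively one could directly cite the \np-hardness of satisfiability for the underlying propositional temporal logic and note that it is a syntactic sublogic of \tlxy. Either encoding is straightforward and linear-sized, so it yields \np-hardness without difficulty.

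Combining the \np\/ upper bound from the automaton construction with the \np-hardness lower bound gives \np-completeness, which is the claim. I expect no serious obstacle here, since the genuine work has already been done in proving Theorem~\ref{thm:tlxyaut} (the polynomial-size automaton) and in establishing the \np\/ non-emptiness bound for \potdfa. The only point requiring a small amount of care is confirming that the two polynomial bounds compose correctly—specifically that "polynomial number of states" in $\autm(\Phi)$ feeds into an \np\/ procedure whose running time is polynomial in the number of states, so that the overall nondeterministic procedure (guess a short witness word via the \emph{Small Model} property, then verify membership via the $\mathcal O(nl)$ \emph{Membership Checking} bound) runs in nondeterministic polynomial time in $|\Phi|$. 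This is routine but worth stating explicitly, as it is the load-bearing composition underlying the whole argument.
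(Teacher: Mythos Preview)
Your proposal is correct and follows essentially the same approach as the paper: the upper bound via Theorem~\ref{thm:tlxyaut} composed with the \np\/ non-emptiness of \potdfa, and the lower bound via the \np-hardness of propositional (temporal) logic satisfiability. Your additional remarks on composing the polynomial bounds and on the Small Model plus Membership Checking mechanism underlying the \np\/ non-emptiness are sound elaborations, but the paper's argument is exactly this outline stated more tersely.
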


\section{\duds}
The deterministic Until-Since logic \duds\/ in some sense is very close to the \potdfa\/ automata: the looping of the automaton in a state until a progress transition is enabled, corresponds well with the invariance and eventuality conditions of the until and since modalities. 

Let $A\subseteq\Sigma$, $a,b\in\Sigma$ and $\phi$ range over \duds\/ formulas. A \duds formula may be given by the following syntax.
\[
\top ~\mid~ a ~\mid~ A\detu_b\phi ~\mid~ A\dets_b\phi ~\mid~ \phi\lor\phi ~\mid~ \neg\phi
\]
Given a word $w\in \Sigma^*$, and $i\in dom(w)$, \duds\/ formulas may be interpreted using the following rules. 
\begin{center}
\begin{tabular}{rcl}
$w,i\models a$ & iff &$w(i)=a$\\
$w,i\models A\detu_b\phi$ & iff & $\exists j>i~.~ w(j)=b\land \forall i<k<j ~.~ w(k)\in A\setminus b ~\land~ w,j\models\phi$\\
$w,i\models A\dets_b\phi$ & iff & $\exists j<i~.~ w(j)=b\land \forall j<k<i ~.~ w(k)\in A\setminus b ~\land~ w,j\models\phi$\\
\end{tabular}
\end{center}
The boolean operators have their usual meaning. The language defined by a \duds\/ formula $\phi$ is given by $\mathcal L(\phi)= \{w\in\Sigma^* ~\mid~ w,1\models\phi\}$ (if the outermost operator of $\phi$ is a $\detu$ operator) and $\mathcal L(\phi)= \{w\in\Sigma^* ~\mid~ w,\#w\models\phi\}$ (if the outermost operator of $\phi$ is a $\dets$ operator). \duds\/ formulas may be represented as a DAG, in the usual way, with the modal/boolean operators at the intermediate nodes. 

\begin{example}
The language described in Example \ref{exam:potdfa} which is given by $\Sigma^* a c^* d \{b,c,d\}^*$ may be expressed using the \duds\/ formula $\Sigma\dets_a~(\Sigma\setminus\{b\}~\detu_d\top)$. 
\end{example}

\paragraph{\duds\/ and Unique Parsability}
The $\detu$ and $\dets$ modalities of \duds\/ are deterministic, in the sense that they uniquely define the position at which its subformula must be evaluated. Hence, for every subformula $\psi$ of a \duds\/ formula $\phi$, and any word $w$, there exists a unique position denoted as $\pos(\psi)$, where $\psi$ is to be evaluated. Moreover, $\pos(\psi)$ is determined by the context of $\psi$ in $\phi$. For example, consider the subformula $\psi = A\detu_b (\psi')$, such that $\pos(\psi)=i$. Then $\pos(\psi') = j$ such that $j>i$, $w(j)=b$ and $\forall i<k<j ~.~ w(k)\in A\setminus\{b\}$. 

The until and since modalities of \duds\/ seem to subsume the $X_a$ and $Y_a$ modalities of \tlxy: for example $X_a\phi \equiv \Sigma\detu_a\phi$. However both logics share the same expressive power. 

\subsection{From \potdfa\/ to \duds}
The deterministic \emph{until} and \emph{since} operators of \duds\/ naturally model the constraints on the run of  a \potdfa: the looping of the \potdfa\/ in a given state and on a subset of letters until an outward transition is enabled is straightforwardly captured by the invariance condition of the $\detu$ and $\dets$ modalities. We shall now give a translation from \potdfa\/ automata to language-equivalent \duds\/ formulas.

\begin{figure}
\begin{center}
\begin{tikzpicture}[align=center]
\draw (10,5) node (Q) [circle,draw] { $\overrightarrow{q}$ };
\draw (13,6) node (Q1)[circle,draw] {$q_1$};
\draw (13,4) node (Qn)[circle,draw] {$q_n$};
\draw [->] (Q) --node[anchor=south]{$b_1$} (Q1);
\draw [->] (Q) --node[anchor=south]{$b_n$} (Qn);
\draw [->] (8,6) -- (Q);
\draw [->](8,4) -- (Q);
\draw [dotted] (13,5.3)--(13,4.7);
\draw [dotted] (8,5.3)--(8,4.7);
\end{tikzpicture}
\caption{From \potdfa\/ to \duds}
\label{fig:potdfaduds}
\end{center}
\end{figure}

We shall construct a \duds\/ formula $Form(q)$ for each state of $\autm$, such that the following lemma is satisfied.

\begin{lemma}
\label{lem:potdfaduds}
Given a \potdfa\/ $\autm$ and any non-initial state $q$ of $\autm$, we may construct a \duds\/ formula $Form(q)$ such that for every $w\in\Sigma^+$, if $q$ is entered on reading a position $x\in dom(w)$, then $w,x\models Form(q)$ if and only if the run terminates in the accepting state.
\end{lemma}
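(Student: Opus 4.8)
The plan is to build $Form(q)$ by induction on the poset $(Q,\leq)$, processing states from the minimal elements upward so that whenever a progress transition $\delta(q,b)=q'$ is used we have $q'<q$ and $Form(q')$ is already in hand. The base cases are the two minimal states: set $Form(t)=\top$ and $Form(r)=\neg\top$, since a run that has reached a terminal state accepts iff that state is $t$. The key to keeping the inductive step clean is the reading convention in the phrase ``$q$ is entered on reading position $x$'': I read $x$ as the position of the letter whose reading fired the transition into $q$. With this convention a state $q\in Q_L$ (whose head then moves right) inspects exactly the positions strictly greater than $x$, which is precisely the scan range of the strict modality $\detu$; dually, a state $q\in Q_R$ inspects the positions strictly smaller than $x$, matching $\dets$. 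Thus the two-way direction of the automaton is absorbed into the choice of $\detu$ versus $\dets$ inside $Form$.

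For the inductive step on a state $q\in Q_L$, let $Def(q)$ be its self-loop letters and let $b_1,\dots,b_n\in\Sigma$ be the letters carrying progress transitions $\delta(q,b_i)=q_i$. Scanning right from $x$, the automaton stops at the first position $y>x$ whose letter lies outside $Def(q)$; if $w(y)=b_i$ it enters $q_i$ on reading $y$. This ``first progress letter is $b_i$'' behaviour is captured exactly by $(Def(q)\cup\{b_i\})\,\detu_{b_i}\,Form(q_i)$: the side condition $(Def(q)\cup\{b_i\})\setminus\{b_i\}=Def(q)$ forces every intervening letter to be a loop letter, so no earlier progress letter is skipped. Hence the interior part of $Form(q)$ is $\bigvee_{i}(Def(q)\cup\{b_i\})\,\detu_{b_i}\,Form(q_i)$, and because $q_i<q$ each $Form(q_i)$ is supplied by the induction hypothesis already carrying the correct $\detu$/$\dets$ orientation for the side of $q_i$. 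The case $q\in Q_R$ is completely symmetric, replacing $\detu$ by $\dets$ and ``right'' by ``left''.

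What remains, and where the real work lies, is the end-marker behaviour, since $Form$ lives over $\Sigma^*$ while the automaton may read $\lend,\rend$ at the virtual positions $0$ and $\#w+1\notin dom(w)$. For $q\in Q_L$ the event ``the rightward scan reaches $\rend$'' is the condition that no progress letter occurs after $x$, which is DUDS-expressible as $\mathit{NoProg}_q:=\neg\bigvee_{i}\Sigma\,\detu_{b_i}\,\top$. On reaching $\rend$ the automaton takes $\delta(q,\rend)=q_{\rend}\in Q_R\cup\{t,r\}$: if $q_{\rend}=t$ I add the disjunct $\mathit{NoProg}_q$, if $q_{\rend}=r$ I add nothing, and the difficult case is $q_{\rend}\in Q_R$, where the automaton reverses at the right boundary and begins scanning left from the last position $\#w$ \emph{inclusive}. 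A plain strict $\dets$ evaluated at a nameable position cannot start ``from $\#w+1$'', so I expect to introduce auxiliary boundary formulas $Form^{\rend}(q_{\rend})$ (and symmetrically $Form^{\lend}$) that mimic $Form(q_{\rend})$ but begin with a \emph{non-strict} leftward step: using the DUDS-definable ``last position'' predicate $\mathit{Atlast}:=\neg\bigvee_{a\in\Sigma}\Sigma\,\detu_a\,\top$, one unfolds the outermost $\dets$ into a direct test of $w(\#w)$ together with an ordinary strict $\dets$ for the positions below $\#w$. This boundary shift, matching the off-by-one between the virtual end-marker position and the real extreme position of $w$, is the main obstacle; once it is handled at both ends the other directions are routine.

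Correctness then follows by the same induction along $(Q,\leq)$: for the minimal states it is immediate, and for each non-minimal $q$ the displayed disjunction enumerates the (mutually exclusive) continuations of the run out of the configuration entering $q$ -- one disjunct per progress letter plus the end-marker disjuncts -- so that, applying the induction hypothesis to the strictly smaller successor states, $w,x\models Form(q)$ holds iff the run from that configuration ends in $t$. Termination of that run is guaranteed because self-loops move the head monotonically toward an end-marker and progress transitions strictly descend in the finite poset, so only finitely many configurations are visited.
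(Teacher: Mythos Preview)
Your inductive construction along $(Q,\leq)$ with $Form(t)=\top$, $Form(r)=\bot$, and $Form(q)$ given as a disjunction over the progress letters using $\detu$ for $q\in Q_L$ and $\dets$ for $q\in Q_R$ is exactly the paper's proof; your $(Def(q)\cup\{b_i\})\,\detu_{b_i}\,Form(q_i)$ and the paper's $A_q\,\detu_{b_i}\,Form(q_i)$ coincide semantically since $b_i\notin Def(q)=A_q$. You actually go beyond the paper by explicitly analysing the end-marker transitions---the paper's short proof does not treat the case where the scan from $q$ runs into $\lend$ or $\rend$ at all---and your $\mathit{NoProg}_q$ disjunct together with the sketched non-strict boundary unfolding $Form^{\rend}/Form^{\lend}$ is the right way to patch this, even though you leave the $q_{\rend}\in Q_R$ case as an outline rather than a finished construction.
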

\begin{proof}
We shall prove this lemma by constructing the formula $Form(q)$ for every non-initial state $q$ in $\autm$. From the syntax of \potdfa\/ it is straightforward to infer that $Form(t)=\top$ and $Form(r)=\bot$. Now, consider a non-initial state $q$ of a \potdfa\/ as shown in Figure \ref{fig:potdfaduds}, such that $q\not\in\{t,r\}$ and $A_q=\Sigma\setminus \{b_1\cdots b_n\}$ is the set of letters on which $q$ loops. Let us assume that $Form(q_1), \cdots Form(q_n)$ are appropriately constructed. If $q\in Q_L$ (i.e. $q$ is a state entered from the left, and the head of the automaton moves right on all transitions whose target state is $q$), then the automaton ``scans'' rightwards from $x$, looping in $q$ on letters from $A_q$, until a progress transition from one of the letters from $\{b_1,\cdots b_n\}$ is enabled. Hence, a progress transition $b_i$ is enabled from $q$ if and only if there exists $y>x$ such that $w(y)=b_i$ and for all $x<k<y$, $w(k)\in A_q$. Further, this run is accepting if and only if $w,y\models Form(q_i)$. 

From the above argument, we may construct $Form(q)$ as follows.
\begin{itemize}
\item If $q\in Q_L$, then
\[
Form(q) ~=~ \bigvee\limits_{i\in\{1,\cdots n\}} [A_q\detu_{b_i}Form(q_i)]
\]
\item If $q\in Q_R$, then
\[
Form(q) ~=~ \bigvee\limits_{i\in\{1,\cdots n\}} [A_q\dets_{b_i}Form(q_i)]
\]
\end{itemize}
\qed
\end{proof}

\begin{theorem}
Given a \potdfa\/  $\autm$, we may construct a \duds\/ formula $Trans(\autm)$ such that $\mathcal L(\autm) ~=~ \mathcal L(Trans(\autm))$, whose DAG representation is linear in the size of $\autm$.
\end{theorem}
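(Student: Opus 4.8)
The plan is to build $Trans(\autm)$ directly on top of the state formulas $Form(q)$ supplied by Lemma~\ref{lem:potdfaduds}, which already describe every \emph{non-initial} state (including the terminal ones, with $Form(t)=\top$ and $Form(r)=\bot$). The only behaviour the lemma does not cover is the run that begins in the initial state $s$ at position $1$: the lemma speaks of a state being \emph{entered} by a progress transition, whereas $s$ is never entered but is where the computation starts. Assuming $s\in Q_L$ (the case $s\in Q_R$ is symmetric, evaluating at $\#w$ rather than at $1$), I would define $Trans(\autm)$ to be a formula that, evaluated at position $1$, holds exactly when the run $M(w,1)$ is accepting, and then read off $\mathcal L(Trans(\autm))$ at position $1$ (the natural choice, since the initial scan proceeds rightward) so as to match $\mathcal L(\autm)=\{w\mid M(w,1)=(t,i)\}$.

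The delicate point in describing the first step is the strict-versus-weak distinction. When $q\in Q_L$ is entered by a transition fired at position $x$, the head then sits at $x+1$, which is why $Form(q)$ uses the \emph{strict} modality $A_q\detu_{b_i}(\cdot)$ (it searches for $y>x$). At $s$ there is no preceding trigger: the head sits at position $1$, so the search must \emph{include} position $1$ itself, i.e. I want a weak until evaluated at $1$, morally ``$Form(s)$ evaluated at the virtual position $0$''. Since \duds\/ has only the strict modality, I would express this weak until by a case split on the first letter: for each progress edge $b_i\mapsto q_i$ of $s$, the disjunct $(b_i\land Form(q_i))$ handles $w(1)=b_i$, while $\big((\bigvee_{c\in A_s}c)\land(A_s\detu_{b_i}Form(q_i))\big)$ handles $w(1)\in A_s$, where the strict until from $1$ locates the first progress letter strictly later. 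Taking the disjunction over all outgoing edges of $s$ gives the core of $Trans(\autm)$, and correctness of the tail of the run then follows immediately from Lemma~\ref{lem:potdfaduds}.

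The step I expect to be the main obstacle is the boundary behaviour, namely the end-marker transitions $\delta(q,\rend)$ and $\delta(q,\lend)$. The modalities $\detu_b$ and $\dets_b$ fire only on a matching \emph{$\Sigma$-letter lying strictly inside the word}; they cannot detect the scan running off an end, which in the automaton triggers an end-marker transition. Hence, both in the initial formula and in each $Form(q)$, the construction must carry an extra disjunct asserting that \emph{all} remaining positions lie in the loop set $A_q$ and that the target $\delta(q,\rend)$ (resp.\ $\delta(q,\lend)$) then accepts. That target is again a state whose formula is already in hand, so this disjunct reduces to the previously built $Form$ of the end-marker target, guarded by a ``no progress letter occurs further right'' condition (definable as the negation of the relevant untils). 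Making these end-of-word cases align with the strict/weak until semantics, and keeping them consistent with the $\bigvee_i$ over the distinct progress letters, is the part that requires care.

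Finally, for the size bound I would represent $Trans(\autm)$ as a DAG in which the node for each $Form(q)$ is shared by all formulas that reference it; the progress condition $\delta(q,a)<q$ guarantees that $Form(q)$ refers only to formulas of strictly smaller states, so the recursion is well-founded and the DAG is acyclic. Since $Form(q)$ contributes one disjunction together with one modality per outgoing edge of $q$, the number of DAG nodes is $\sum_{q}\mathcal O(\mathrm{outdeg}(q))=\mathcal O(|\autm|)$, and the special handling of the initial state and of the end-markers adds only a bounded overhead per state. This yields a DAG representation linear in the size of $\autm$, as claimed.
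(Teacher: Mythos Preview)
Your approach is essentially the same as the paper's: the paper also builds $Trans(\autm)$ by case-splitting on the first letter to simulate a weak until at the start state, yielding exactly the formula $\bigvee_i[c_i\land Form(q_i)]\lor\bigvee_i[\bigvee_{b\in A_s}b\land A_s\detu_{c_i}Form(q_i)]$ you describe, and it argues linear DAG size via sharing of the $Form(q)$ nodes. Your discussion of the end-marker transitions is in fact more careful than the paper's own proof, which glosses over that point entirely.
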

\begin{proof}
Consider the start state of the \potdfa\/ $\autm$ which loops on the letters in $A_s$ until a progress transition on one of the letters in $\{c_1, \cdots c_l\}$ is enabled, such that the transition on $c_i$ is targeted into a state $q_i$, for each $i\in\{1\cdots l\}$. From an argument similar to the one in Lemma \ref{lem:potdfaduds}, we may infer that 
\[
Trans(\autm) ~=~ \bigvee\limits_{i\in \{1\cdots l\}} [c_i\land Form(q_i)] ~\lor~ \bigvee\limits_{i\in \{1\cdots l\}} [\bigvee\limits_{b\in A_s} b\land A_s\detu_{c_i}Form(q_i)]
\]
In the above formula, the two sets of disjunctions correspond to the cases when the progress transition from $s$ to the target state is taken on the first position in the word, or any other position, respectively.

In the DAG representation of the formula $Trans(\autm)$ as per the above construction, note that the number of nodes in the DAG is linear in the number of states in $\autm$. This is because $Form(q)$ may be constructed exactly once for each state $q$ of $\autm$. Hence the theorem.
\qed
\end{proof}

\begin{remark}
If we do not consider the DAG representation of \duds\/ formulas, then we must note that the size of the language-equivalent \duds\/ formula is exponential in the size of the original \potdfa.
\end{remark}

\section{Interval Temporal Logic \uitlpm}
\label{sec:uitlpm}
The interval logic \uitl\/ (\cite{LPS08}) has the unambiguous chop modalities which deterministically chop at the first and last occurrence of a letter $a$ within the interval. We enrich this logic with unambiguous modalities which chop beyond the interval boundaries in either direction. We call this logic \uitlpm. In this section, we introduce the logic \uitlpm\/  and show that it is no more expressive than \uitl\/, by giving an effective conversion from \uitlpm\/ formulas to their corresponding language-equivalent \tlxy\/ formula. The conversion is similar to the conversion from \uitl\/ to \tlxy\/, as given in \cite{DKL10}.

\subsection{\uitlpm: Syntax and Semantics}
The syntax and semantics of \uitlpm\/ are as follows:
\[
\begin{array}{l}
\top ~\mid~ a ~\mid~ \pti ~\mid~ \mathit{unit} ~\mid~ BP\phi ~\mid~ EP\phi ~\mid~
D_1 \firsta D_2 ~\mid~ D_1 \lasta D_2 ~\mid~ D_1 \firstp{a} D_2 ~\mid~D_1 \lastm{a} D_2 ~\mid ~\\
\succr D_1 ~\mid~ \predr D_1 ~\mid~ 
\succp D_1 ~\mid~ \predm D_1 ~\mid~ 
D_1 \lor D_2 ~\mid~ \neg D 
\end{array}
\]

Let $w$ be a nonempty finite word over $\Sigma$ 
and let $dom(w) = \{ 1, \ldots, \#w \}$ 
be the set of positions. Let 
$INTV(w) = \{ [i,j] ~\mid~ i,j \in dom(w), i \leq j \}~\cup~ \{\bot\}$ be the set of 
intervals over $w$, where $\bot$ is a special symbol to denote an undefined interval. For an interval $I$, let $l(I)$ and $r(I)$ denote the left and right endpoints of $I$. Further, if $I=\bot$, then $l(I)=r(I)=\bot$. The satisfaction of a formula $D$ is defined
over intervals of a word model $w$ as follows. \\
\[\begin{array}{l}
w,[i,j] \models \top \rmiff [i,j]\in INTV(w) \rmand [i,j]\neq\bot\\
w,[i,j]\models \pti \rmiff i=j\\
w,[i,j]\models \mathit{unit} \rmiff j=i+1\\
w,[i,j]\models BP\phi \rmiff w,[i,i]\models\phi\\
w,[i,j]\models EP\phi \rmiff w,[j,j]\models\phi
\end{array}
\]
\[\begin{array}{l}
w,[i,j] \models D_1 \firsta D_2 \rmiff \rmsome k: i \leq k \leq j \st ~~  
w[k]=a \rmand \\
\hspace*{1cm}  (\rmall m: i \leq m < k \st w[m] \neq a) \rmand \\ 
\hspace*{1cm}    w,[i,k] \models D_1  \rmand w,[k,j] \models D_2  \\
w,[i,j] \models D_1 \lasta D_2 \rmiff  \rmsome k: i \leq k \leq j \st ~~  
w[k]=a \rmand  \\ 
\hspace*{1cm} (\rmall m: k < m \leq j \st w[m] \neq a) \rmand  \\
\hspace*{1cm}     w,[i,k] \models D_1  \rmand w,[k,j] \models D_2  \\
w,[i,j] \models D_1 \firstp{a} D_2 \rmiff \rmsome k: k \geq j \st ~~  
w[k]=a \rmand \\
\hspace*{1cm}  (\rmall m: i \leq m < k \st w[m] \neq a) \rmand \\ 
\hspace*{1cm}    w,[i,k] \models D_1  \rmand w,[j,k] \models D_2  \\
w,[i,j] \models D_1 \lastm{a} D_2 \rmiff  \rmsome k: k \leq i \st ~~  
w[k]=a \rmand  \\ 
\hspace*{1cm} (\rmall m: k < m \leq j \st w[m] \neq a) \rmand  \\
\hspace*{1cm}     w,[k,i] \models D_1  \rmand w,[k,j] \models D_2  \\
w,[i,j] \models \succr D_1 \rmiff i< j \rmand w,[i+1,j]\models D_1\\
w,[i,j] \models \predr D_1 \rmiff i< j \rmand w,[i,j-1]\models D_1\\
w,[i,j] \models \succp D_1 \rmiff j<\#w \rmand w,[i,j+1]\models D_1\\
w,[i,j] \models \predm D_1 \rmiff i>1 \rmand w,[i-1,j]\models D_1\\
\end{array}
\]
The language $\mathcal L(\phi)$ of a \uitl\/ formula $\phi$ iff is given by $\mathcal L(\phi) = \{w ~ \mid ~ w,[1,\#w] \models \phi\}$. We may derive ``ceiling'' operators which assert the invariance as follows.
\begin{itemize}
\item $\aloo{A} ~\equiv~ \pti ~\lor~ \mathit{unit} ~\lor~ \neg\bigvee\limits_{b\not\in A}(\succr\predr(\top F_b \top))$\\
Hence, $w,[i,j]\models \aloo{A}$ if and only if $\forall i<k<j ~.~ w(k)\in A$.
\item $\aloc{A} ~\equiv~ \pti ~\lor~ \neg\bigvee\limits_{b\not\in A}(\succr(\top F_b \top))$\\
Hence, $w,[i,j]\models \aloc{A}$ if and only if $\forall i<k\leq j ~.~ w(k)\in A$.
\item $\alco{A} ~\equiv~ \pti ~\lor~ \neg\bigvee\limits_{b\not\in A}(\predr(\top F_b \top))$\\
Hence, $w,[i,j]\models \alco{A}$ if and only if $\forall i\leq k<j ~.~ w(k)\in A$.
\item $\alcc{A} ~\equiv~ \neg\bigvee\limits_{b\not\in A}(\top F_b \top)$\\
Hence, $w,[i,j]\models \alcc{A}$ if and only if $\forall i\leq k\leq j ~.~ w(k)\in A$.
\end{itemize}

\begin{example}
The language given in Example \ref{exam:potdfa} may be given by the \uitlpm\/ formula $\top L_a~(\aloo{\Sigma\setminus\{b\}} ~F_d\top)$.
\end{example}

\paragraph{\uitlpm\/ and Unique Parsing}
\uitlpm\/ is a deterministic logic and the property of \textit{Unique Parsing} holds for its subformulas. Hence, for every \uitlpm\/ subformula $\psi$, and any word $w$, there is a unique interval $\intv(\psi)$ within which it is evaluated. Further, for any ``chop'' operator ($\firsta,\lasta,\firstp{a},\lastm{a},\succr,\predr,\succp,\predm$), there is a unique chop position $\cpos(\psi)$. If such an interval or chop position does not exist in the word, then they are equal to $\bot$. The $\intv(\psi)$ and $\cpos(\psi)$ for any subformula $\psi$ depend on its context and may be inductively defined. (See \cite{LPS08} for similar such definition for the sublogic $\uitl$).

\subsection{From \duds\/ to \uitlpm}
Given a \duds\/ formula $\phi$, we shall construct a \uitlpm\/ formulas $BTrans(\phi)$ and $ETrans(\phi)$ having the following property. 
\begin{lemma}
\label{lem:dudsuitlpm}
Given a \duds\/ formula $\phi$, we may construct \uitlpm\/ formulas $BTrans(\phi)$ and $ETrans(\phi)$ such that for any word $w\in\Sigma^+$ and any interval $[i,j]$ in $w$
\begin{itemize}
\item $w,[i,j]\models BTrans(\phi)$ iff $w,i\models \phi$ 
\item $w,[i,j]\models ETrans(\phi)$ iff $w,j\models \phi$ 
\end{itemize}
The translation takes polynomial time.
\end{lemma}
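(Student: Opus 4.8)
The plan is to define $BTrans$ and $ETrans$ simultaneously by mutual structural induction on the \duds\/ formula $\phi$, exploiting the symmetry between the forward modality $\detu$ and the backward modality $\dets$. The invariant I maintain is exactly the statement of the lemma: $BTrans(\phi)$ depends only on the left endpoint of the evaluation interval and captures $w,i\models\phi$, while $ETrans(\phi)$ depends only on the right endpoint and captures $w,j\models\phi$. The base and Boolean cases are immediate: $BTrans(\top)=ETrans(\top)=\top$; $BTrans(a)=BP\,a$ and $ETrans(a)=EP\,a$, which isolate the relevant endpoint into a point interval before testing the letter; and both translations commute with $\lor$ and $\neg$, since truth at a fixed position is a Boolean function (e.g. $BTrans(\phi_1\lor\phi_2)=BTrans(\phi_1)\lor BTrans(\phi_2)$ and $BTrans(\neg\phi_1)=\neg BTrans(\phi_1)$).

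The interesting cases are the deterministic until and since. To evaluate $A\detu_b\psi$ at a position I must locate the first $b$ \emph{strictly} beyond it, check that every intermediate letter lies in $A$, and evaluate $\psi$ there. Since this $b$ may lie arbitrarily far past the given right endpoint, I use the interval-expanding chop $\firstp{b}$ rather than the within-interval $\firsta$. Concretely I would set
$$BTrans(A\detu_b\psi)=BP\bigl(\succp\,\succr\,(\alco{A}\;\firstp{b}\;ETrans(\psi))\bigr),$$
with $ETrans(A\detu_b\psi)$ identical except that the outer $BP$ becomes $EP$: once the interval is collapsed to the relevant point, the remaining computation is the same. Here $\succp\,\succr$ first grows the interval rightward and then advances the left endpoint, yielding the point interval one step past the current position; this is the device that implements the strict future of $\detu$. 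The chop $\firstp{b}$ then selects the first $b$ from that point, its built-in ``no earlier $b$'' side condition combining with the closure formula $\alco{A}$ to force all strictly-intermediate letters into $A\setminus\{b\}$, and $ETrans(\psi)$ evaluates $\psi$ at the chop point, which is the right endpoint of the resulting subinterval. The since cases are the mirror image, stepping strictly into the past with $\predm\,\predr$ and scanning backward with $\lastm{b}$ under the closure $\aloc{A}$:
$$BTrans(A\dets_b\psi)=BP\bigl(\predm\,\predr\,(\aloc{A}\;\lastm{b}\;BTrans(\psi))\bigr),$$
and again $ETrans(A\dets_b\psi)$ differs only in the outermost $BP$/$EP$. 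Now the chop point becomes the \emph{left} endpoint of the subinterval, so the recursion calls $BTrans(\psi)$, which is the mutual part of the induction.

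Correctness follows by matching each clause against the semantics, with the degenerate cases handled automatically: when $i=\#w$ (resp.\ $i=1$) the guard inside $\succp$ (resp.\ $\predm$) fails, exactly mirroring the non-existence of a witness for $\detu$ (resp.\ $\dets$); and a missing $b$ or a violated $A$-invariant makes $\firstp{b}$ or $\alco{A}$ false precisely when the modality is false. For the size bound I observe that each node of the parse tree of $\phi$ contributes a constant number of \uitlpm\/ operators plus one closure subformula of size $\mathcal O(|\Sigma|)$, and that the direction ($B$ or $E$) in which a node is translated is forced by the modalities above it, so each node is expanded in a single direction along any path; hence $BTrans(\phi)$ and $ETrans(\phi)$ have size $\mathcal O(|\Sigma|\cdot|\phi|)$ and the construction runs in polynomial time.

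The main obstacle, and the one genuinely delicate point, is reconciling the two notions of boundary: $\detu$/$\dets$ search \emph{strictly} beyond the current position, whereas the expanding chops $\firstp{b}$/$\lastm{b}$ begin their scan at the current position inclusively. I expect the ``step-and-shift'' combinators $\succp\,\succr$ and $\predm\,\predr$ to be the crux, and I would check their interaction with the inclusive ``no earlier $b$'' condition of the expanding chops with care, since this is exactly where a careless treatment of the case $w(i)=b$ would silently break the equivalence.
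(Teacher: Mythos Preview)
Your proposal is correct and essentially identical to the paper's own proof: the paper defines $BTrans$ and $ETrans$ by the same mutual induction, uses exactly your $BP\,\succp\,\succr\,(\alco{A}\firstp{b}ETrans(\psi))$ for $\detu$ and the mirror $BP\,\predm\,\predr\,(\aloc{A}\lastm{b}BTrans(\psi))$ for $\dets$, and verifies correctness by unfolding the semantics just as you sketch. The only cosmetic difference is the atomic case, where the paper writes $BTrans(a)=BP(\pti\,F_a\,\top)$ rather than your $BP\,a$; both are correct once the point semantics of the atom is fixed.
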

\begin{proof}
The formulas $BTrans$ and $ETrans$ may be constructed by bottom-up induction using the following rules.
\begin{itemize}
\item $BTrans(a) ~=~ BP ~(\pti\firsta\top)$
\item $BTrans(\phi_1\lor\phi_2) ~=~ BTrans(\phi_1)\lor BTrans(\phi_2)$
\item $BTrans(\neg\phi) ~=~ \neg BTrans(\phi)$
\item $BTrans(A\detu_b\phi) ~=~ BP\succp\succr [~(\lcceil A\rceil) ~ \firstp{b} ~ETrans(\phi)]$
\item $BTrans(A\dets_b\phi) ~=~ BP\predm\predr [~(\lceil A\rcceil) ~ \lastm{b} ~BTrans(\phi)]$
\item $ETrans(a) ~=~ EP ~(\top\lasta\pti)$
\item $ETrans(\phi_1\lor\phi_2) ~=~ ETrans(\phi_1)\lor ETrans(\phi_2)$
\item $ETrans(\neg\phi) ~=~ \neg ETrans(\phi)$
\item $ETrans(A\detu_b\phi) ~=~ EP\succp\succr [~(\lcceil A\rceil) ~ \firstp{b} ~ETrans(\phi)]$
\item $ETrans(A\dets_b\phi) ~=~ EP\predm\predr [~(\lceil A\rcceil) ~ \lastm{b} ~BTrans(\phi)]$
\end{itemize}
The correctness of the above construction may be inferred from the semantics of the logics. For example, consider the formula $BTrans(A\detu_b\phi)$. Let us assume $ETrans(\phi)$ has been appropriately constructed so as to satisfy the lemma. Then for any word $w\in\Sigma^+$ and any interval $[i,j]$ of $w$, \\
$w,[i,j]\models BTrans(A\detu_b\phi)$\\
iff $w,[i,j]\models ~BP\succp\succr [~(\lcceil A\rceil) ~ \firstp{b} ~ETrans(\phi)]$\\
iff $w,[i,i] \models ~\succp\succr [~(\lcceil A\rceil) ~ \firstp{b} ~ETrans(\phi)]$\\
iff $w,[i+1,i+1]\models ~[~(\lcceil A\rceil) ~ \firstp{b} ~ETrans(\phi)] $\\
iff $\exists k\geq (i+1) ~.~ w(k)=b \land ~\forall (i+1)\leq m<k ~.w(m)\in A\setminus \{b\} ~\land~ $\\
\hspace*{1cm} $w,[i+1,k]\models ETrans(\phi)$\\
iff $w,i\models A\detu_b\phi$
\qed
\end{proof}

From the above construction, we infer that for every \duds\/ formula, we may construct a language-equivalent \uitlpm\/ formula whose size is linear in the size of the \duds\/ formula. Clearly, the time time taken for the construction is also polynomial.

\subsection{\uitlpm\/ to \tlxy}
In \cite{LPS08}, we exploited the interval-nesting structure of \uitl\/ formulas to give a reduction from \uitl\/ to \potdfa\/. However such a nesting structure is absent in the case of \uitlpm and the translation presented in \cite{LPS08} can not be extended to \uitlpm. The reduction from\uitlpm\/ formulas to \potdfa\/ is factored via \tlxy\/. This translation is interesting and it uses the concept of ranker directionality. 

\begin{theorem} \label{theo:uitlpm}
Given any \uitlpm\/ formula $\phi$ of size $n$, we can construct in polynomial time a language-equivalent \tlxy\/ formula
$\mathit{Trans}(\phi)$, whose size is $O(n^2)$. Hence, satisfiability of 
\uitlpm\/ is NP-complete. 
\end{theorem}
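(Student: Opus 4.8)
The plan is to translate $\phi$ by a structural recursion that carries, alongside each subformula $D$, two \stls\/ (rankers) $\alpha,\beta$ whose leaf positions $\lpos(\alpha),\lpos(\beta)$ mark respectively the left and right endpoints of the unique interval $\intv(D)$ at which $D$ is evaluated (this interval is well defined by the unique-parsing property of \uitlpm). I would define a \tlxy\/ formula $\mathit{Tr}(D,\alpha,\beta)$, read from position $1$, obeying the invariant: $w,1\models \mathit{Tr}(D,\alpha,\beta)$ iff $\lpos(\alpha),\lpos(\beta)\neq\bot$, $\lpos(\alpha)\leq\lpos(\beta)$, and $w,[\lpos(\alpha),\lpos(\beta)]\models D$. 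The final formula is then $\mathit{Trans}(\phi)=\mathit{Tr}(\phi,\,SP\top,\,EP\top)$, since $\lpos(SP\top)=1$ and $\lpos(EP\top)=\#w$ recover the initial interval $[1,\#w]$, so $\mathcal L(\mathit{Trans}(\phi))=\mathcal L(\phi)$ for nonempty words.

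The heart of the construction is that every chop position is again a ranker, obtained by extending an endpoint ranker by a single weak modality, and that its position relative to the other endpoint is exactly what Lemma~\ref{lem:pfpos} lets us test. For $D_1\firsta D_2$ the chop $k$ is the first $a$ at a position $\geq\lpos(\alpha)$, i.e.\ $\lpos(\gamma)$ for $\gamma=\alpha;\weakx{a}\top$; the side condition $k\leq\lpos(\beta)$ is $\gamma;\pfleq(\beta)$, and I recurse as $\mathit{Tr}(D_1,\alpha,\gamma)\land \mathit{Tr}(D_2,\gamma,\beta)$. The expanding modality $D_1\firstp{a}D_2$ reuses the \emph{same} $\gamma$ but with the opposite guard $\gamma;\pfgeq(\beta)$, recursing as $\mathit{Tr}(D_1,\alpha,\gamma)\land \mathit{Tr}(D_2,\beta,\gamma)$; the cases $\lasta$ and $\lastm{a}$ are symmetric with $\delta=\beta;\weaky{a}\top$ and the guards $\delta;\pfgeq(\alpha)$, $\delta;\pfleq(\alpha)$. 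The remaining operators are routine: $BP\,D\mapsto \mathit{Tr}(D,\alpha,\alpha)$ and $EP\,D\mapsto \mathit{Tr}(D,\beta,\beta)$; the shifts $\succr,\predr,\succp,\predm$ replace the relevant endpoint ranker by $\alpha;\xunit\top$, $\beta;\yunit\top$, $\beta;\xunit\top$, $\alpha;\yunit\top$ (the side conditions $i<j$, $i<j$, $j<\#w$, $i>1$ being enforced either by a directionality conjunct $\alpha;\pfless(\beta)$ or by the unit ranker returning $\bot$); $\pti$ and $\mathit{unit}$ compare $\lpos(\alpha)$ and $\lpos(\beta)$ directly through $\pfleq/\pfgeq$; $a$ becomes $\alpha;a$; disjunction distributes; and $\neg D$ becomes the interval-validity test conjoined with $\neg\,\mathit{Tr}(D,\alpha,\beta)$. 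Correctness is then an induction on $D$, each clause being a direct unfolding of the \uitlpm\/ semantics against Lemma~\ref{lem:pfpos}.

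For the size bound, note that passing from a node to its child extends each carried ranker by $O(1)$ symbols, so the rankers attached to a subformula at depth $d$ have length $O(d)=O(n)$. Since a directionality formula is linear in its ranker (as observed after Table~\ref{tab:trank}), the glue emitted locally at each node — one or two sequential compositions of a ranker with a directionality formula, plus the validity guards — has size $O(d)$. Summing over the at most $n$ nodes gives total size $\sum_D O(\mathrm{depth}(D))\leq n\cdot O(n)=O(n^2)$, and the recursion plainly runs in polynomial time. Composing with Theorem~\ref{thm:tlxyaut} (polynomial \tlxy$\to$\potdfa) and the \np-complete emptiness of \potdfa\/ places \uitlpm\/ satisfiability in \np, while \np-hardness is inherited from propositional reasoning, giving \np-completeness.

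The step I expect to be the main obstacle is the careful bookkeeping of undefined ($\bot$) endpoints: Lemma~\ref{lem:pfpos} speaks only of rankers with $\lpos\neq\bot$, so every clause must be guarded so that an undefined endpoint (e.g.\ a $\succp$ at the last position, or a chop letter that never occurs) \emph{falsifies} the relevant conjunct rather than silently misbehaving. Negation is the delicate case, since $\neg D$ must be false on an invalid interval, which forces the explicit guard $(\alpha;\top)\land(\beta;\top)\land(\alpha;\pfleq(\beta))$ to be threaded through the invariant. A secondary subtlety, which is precisely what keeps the translation uniform and within the quadratic budget, is verifying that the expanding chops $\firstp{a},\lastm{a}$ genuinely reuse the same ``first/last $a$'' ranker as $\firsta,\lasta$ and differ from them only in the directionality guard.
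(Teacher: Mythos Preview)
Your proposal is correct and follows essentially the same approach as the paper: the paper likewise carries, for each subformula occurrence $\psi$, two rankers $\mathit{LIntv}(\psi)$ and $\mathit{RIntv}(\psi)$ (your $\alpha,\beta$) marking the endpoints of $\intv(\psi)$, and builds $\mathit{Trans}(\psi)$ by composing the chop-position ranker with a directionality formula from Lemma~\ref{lem:pfpos} exactly as you do, yielding the same $O(n^2)$ bound. The only cosmetic differences are that the paper extends the \emph{right}-endpoint ranker to locate the chop for $\firstp{a}$ (equivalent to your left-endpoint extension once the guard holds) and writes $\mathit{Trans}(\neg D_1)=\neg\mathit{Trans}(D_1)$ without the explicit $\bot$-guards you thread through; your more careful treatment of undefined intervals is a welcome refinement rather than a different route.
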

The  construction of $\mathit{Trans}(\phi)$ requires some auxiliary definitions.
For every \uitlpm\/ subformula $\psi$ of $\phi$, we define \stls\/ $\mathit{LIntv}(\psi)$ and $\mathit{RIntv}(\psi)$, such that Lemma \ref{lem:uitlpm} holds. $\mathit{LIntv}(\psi)$ and $\mathit{RIntv}(\psi)$ are \stls\ which accept at the left and right ends of the unique interval $\intv(\psi)$ respectively.
\begin{lemma}
\label{lem:uitlpm}
Given a \uitlpm\/ subformula $\psi$ of a formula $\phi$, and any $w\in\Sigma^+$ such that $\intv(\psi),\cpos(\psi)\neq\bot$, 
\begin{itemize}
 \item $\lpos(\mathit{LIntv}(\psi)) = l(\intv(\psi))$
 \item $\lpos(\mathit{RIntv}(\psi)) = r(\intv(\psi))$
\end{itemize}
\end{lemma}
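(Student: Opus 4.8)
The plan is to prove the lemma by induction on the depth of the subformula $\psi$ in the parse tree of $\phi$, mirroring the top-down definition of $\intv(\psi)$ and $\cpos(\psi)$ from the context. Throughout, I would build each endpoint ranker by extending the parent's endpoint rankers by a single deterministic step, using sequential composition of rankers ($\phi_1;\phi_2$) together with the weak and unit modalities. The base case is the full formula $\phi$ itself, whose interval is $[1,\#w]$: here I set $\mathit{LIntv}(\phi) = SP\top$ and $\mathit{RIntv}(\phi) = EP\top$, which accept at positions $1$ and $\#w$ respectively, establishing the claim.

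For the inductive step, suppose $\eta$ is the parent of $\psi$ and that $\mathit{LIntv}(\eta)$, $\mathit{RIntv}(\eta)$ already satisfy the lemma, i.e.\ they accept at $l(\intv(\eta))$ and $r(\intv(\eta))$. I would then give a rule for each outermost operator of $\eta$. The boolean cases ($\lor,\neg$) pass the interval unchanged, so the children inherit $\mathit{LIntv}(\eta)$ and $\mathit{RIntv}(\eta)$. For $BP$ the child-interval collapses to the left endpoint, so both $\mathit{LIntv}(\psi)$ and $\mathit{RIntv}(\psi)$ equal $\mathit{LIntv}(\eta)$; symmetrically $EP$ uses $\mathit{RIntv}(\eta)$. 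The point-shift operators append one unit step: for $\succr D_1$ the left end moves to $l(\intv(\eta))+1$, captured by $\mathit{LIntv}(\eta);\xunit\top$, while the right end is unchanged; and $\predr,\succp,\predm$ are the evident variants, appending $\xunit$ or $\yunit$ to the left or right endpoint ranker as dictated by the interval shift. The crucial cases are the chop modalities: for $D_1\firsta D_2$ the chop position $\cpos(\eta)$ is the first $a$ at or after $l(\intv(\eta))$, which is exactly $\lpos\big(\mathit{LIntv}(\eta);\weakx{a}\top\big)$ by the semantics of the weak-next modality; I then assign this composite ranker to $\mathit{RIntv}(D_1)$ and to $\mathit{LIntv}(D_2)$, while $\mathit{LIntv}(D_1)=\mathit{LIntv}(\eta)$ and $\mathit{RIntv}(D_2)=\mathit{RIntv}(\eta)$. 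The $\lasta$ case is symmetric using $\mathit{RIntv}(\eta);\weaky{a}\top$, and the expanding chops $\firstp{a},\lastm{a}$ use the same weak-next/weak-prev rankers but distribute the endpoints according to their asymmetric interval assignments $[i,k],[j,k]$ and $[k,i],[k,j]$.

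Correctness at each step reduces to checking that the leaf position of the composite ranker coincides with the relevant chop position, which follows by unfolding the semantics of $\weakx{a}$ (the first $a$ at or after the current position) against the first-occurrence-with-no-intervening-$a$ clause in the definition of $\firsta$ (and dually for $\lasta$). The hardest part will be the expanding modalities $\firstp{a}$ and $\lastm{a}$: their semantics additionally require the chop to lie beyond the current right endpoint ($k\geq j$) or before the left endpoint ($k\leq i$), and these directional constraints are not enforced by $\weakx{a}$ or $\weaky{a}$ in isolation. Here I would lean on the standing hypothesis $\cpos(\psi)\neq\bot$: in exactly the regime where the chop is defined, the first $a$ at or after $l(\intv(\eta))$ automatically satisfies $k\geq r(\intv(\eta))$ (and dually), so the composite ranker computes the correct position. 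Finally, since each inductive step appends a constant-size prefix to a parent ranker, each $\mathit{LIntv}(\psi)$ and $\mathit{RIntv}(\psi)$ is linear in the depth of $\psi$, which feeds the polynomial size bound needed for Theorem \ref{theo:uitlpm}.
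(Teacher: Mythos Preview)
Your proposal is correct and follows essentially the same approach as the paper: an induction on the depth of occurrence of the subformula, with a case analysis on the outermost operator of the parent, building each child's endpoint rankers by appending a single deterministic step ($\weakx{a}$, $\weaky{a}$, $\xunit$, $\yunit$) to the parent's endpoint rankers. The paper in fact omits the detailed correctness argument, merely listing the construction rules and remarking that correctness ``is apparent from the semantics''; your write-up supplies the missing justification in exactly the intended spirit.

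One small point of divergence worth flagging: for the expanding chop $\firstp{a}$ you anchor the search for the chop position at the \emph{left} endpoint (the first $a$ at or after $l(\intv(\eta))$), relying on the hypothesis $\cpos(\eta)\neq\bot$ to guarantee $k\geq r(\intv(\eta))$. The paper instead anchors at the \emph{right} endpoint, taking $\mathit{RIntv}(\eta);\weakx{a}\top$. Both are correct under the stated hypothesis, since the ``no $a$ in $[i,k)$'' clause of the $\firstp{a}$ semantics forces the first $a$ from either endpoint to coincide; the paper's choice simply tracks the $k\geq j$ constraint more directly and avoids the extra appeal to $\cpos\neq\bot$ at this point.
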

The required formulas $\mathit{LIntv}(\psi), \mathit{RIntv}(\psi)$ may be constructed by induction on the depth of occurrence of the subformula $\psi$ as below. The correctness of these formulas  is apparent from the semantics of \uitlpm\/ formulas, and we omit the detailed proof.\\
\begin{itemize}
 \item If $\psi=\phi$, then $\mathit{LIntv}(\psi) = SP \top$, $\mathit{Rintv}(\psi) = EP \top$
\item If $\psi=BP ~D_1$ then\\
$\mathit{LIntv}(D_1)=\mathit{RIntv}(D_1)=\mathit{LIntv}(\psi)$
\item If $\psi=EP ~D_1$ then\\
$\mathit{LIntv}(D_1)=\mathit{RIntv}(D_1)=\mathit{RIntv}(\psi)$
 \item If $\psi = D_1\firsta D_2$ then\\
 $\mathit{LIntv}(D_1) = \mathit{LIntv}(\psi) $, $\mathit{Rintv}(D_1) = \mathit{LIntv}(\psi)~;~ \weakx{a}\top$,\\
 $\mathit{LIntv}(D_2) = \mathit{LIntv}(\psi) ~;~ \weakx{a}\top$, $\mathit{Rintv}(D_2) = \mathit{RIntv}(\psi)$
\item If $\psi=D_1\firstp{a}D_2$ then\\
 $\mathit{LIntv}(D_1) = \mathit{LIntv}(\psi) $, $\mathit{Rintv}(D_1) = \mathit{RIntv}(\psi)~;~ \weakx{a}\top$,\\
 $\mathit{LIntv}(D_2) =\mathit{RIntv}(\psi) $, $\mathit{Rintv}(D_2) = \mathit{RIntv}(\psi) ~;~ \weakx{a}\top$
\item If $\psi=D_1\lasta D_2$ then\\
$\mathit{LIntv}(D_1) = \mathit{LIntv}(\psi) $, $\mathit{Rintv}(D_1) = \mathit{RIntv}(\psi)~;~ \weaky{a}\top$,\\
 $\mathit{LIntv}(D_2) = \mathit{RIntv}(\psi) ~;~ \weaky{a}\top$, $\mathit{Rintv}(D_2) = \mathit{RIntv}(\psi)$
\item If $\psi=D_1\lastm{a}D_2$ then\\
$\mathit{LIntv}(D_1) = \mathit{LIntv}(\psi) ~;~ \weaky{a}\top $, $\mathit{Rintv}(D_1) = \mathit{LIntv}(\psi)$,\\
 $\mathit{LIntv}(D_2) = \mathit{LIntv}(\psi) ~;~ \weaky{a}\top$, $\mathit{Rintv}(D_2) = \mathit{RIntv}(\psi)$
\item If $\psi = \succr D_1$ then \\
$\mathit{LIntv}(D_1) = \mathit{LIntv}(\psi)~;~ \xunit\top$, $\mathit{RIntv}(D_1)=\mathit{RIntv}(\psi)$
\item If $\psi = \succp D_1$ then \\
$\mathit{LIntv}(D_1) = \mathit{LIntv}(\psi)$, $\mathit{RIntv}(D_1)=\mathit{RIntv}(\psi) ~;~ \xunit\top$
\item If $\psi = \predr D_1$ then \\
$\mathit{LIntv}(D_1) = \mathit{LIntv}(\psi)$, $\mathit{RIntv}(D_1)=\mathit{RIntv}(\psi) ~;~ \yunit\top$
\item If $\psi = \predm D_1$ then \\
$\mathit{LIntv}(D_1) = \mathit{LIntv}(\psi) ~;~ \yunit\top$, $\mathit{RIntv}(D_1)=\mathit{RIntv}(\psi) $
\end{itemize}

We can now construct, for any subformula $\psi$ of $\phi$, a corresponding \tlxy\/ formula $\mathit{Trans}(\psi)$. The conversion uses the following inductive rules. 
Then, it is easy to see that $\mathit{Trans}(\psi)$ is language equivalent to $\phi$
(see \cite{SShah12} for proof). 
\begin{itemize}
\item If $\psi= BP ~D_1$ or $EP ~D_1$ then $\mathit{Trans}(\psi) = \mathit{Trans}(D_1)$
 \item If $\psi = D_1\firsta D_2$, then $\mathit{Trans}(\psi)= [(~\mathit{LIntv}(\psi);\weakx{a}\top~)~;~\pfleq(\mathit{RIntv}(\psi))] \land \mathit{Trans}(D_1)\land \mathit{Trans}(D_2)$
 \item If $\psi = D_1\lasta D_2$, then $\mathit{Trans}(\psi)= [(~\mathit{RIntv}(\psi);\weaky{a}\top~)~;~\pfgeq(\mathit{LIntv}(\psi))] \land \mathit{Trans}(D_1)\land \mathit{Trans}(D_2)$
 \item If $\psi = D_1\firstp{a} D_2$, then $\mathit{Trans}(\psi)= [(~\mathit{LIntv}(\psi);\weakx{a}\top~)~;~\pfgeq(\mathit{RIntv}(\psi))] \land \mathit{Trans}(D_1)\land \mathit{Trans}(D_2)$
 \item If $\psi = D_1\lastm{a} D_2$, then $\mathit{Trans}(\psi)= [(~\mathit{RIntv}(\psi);\weaky{a}\top~)~;~\pfleq(\mathit{LIntv}(\psi))] \land \mathit{Trans}(D_1)\land \mathit{Trans}(D_2)$
\item If $\psi= \succr D_1$, then $\mathit{Trans}(\psi)=[(\mathit{LIntv}(\psi);\xunit\top) ~;~ \pfleq(\mathit{RIntv}(\psi))] ~\land~ \mathit{Trans}(D_1)$
\item If $\psi= \predr D_1$, then $\mathit{Trans}(\psi)=[(\mathit{RIntv}(\psi);\yunit\top) ~;~ \pfgeq(\mathit{LIntv}(\psi))] ~\land~ \mathit{Trans}(D_1)$
\item If $\psi= \succp D_1$, then $\mathit{Trans}(\psi)=[(\mathit{RIntv}(\psi);\xunit\top)] ~\land~ \mathit{Trans}(D_1)$
\item If $\psi= \predm D_1$, then $\mathit{Trans}(\psi)=[(\mathit{LIntv}(\psi);\yunit\top)] ~\land~ \mathit{Trans}(D_1)$
 \item $\mathit{Trans}(D_1\lor D_2) = \mathit{Trans}(D_1)\lor \mathit{Trans}(D_2)$
 \item $\mathit{Trans}(\neg D_1) = \neg \mathit{Trans}(D_1)$
\end{itemize}

\section{Bridging the Gap: From Deterministic to Non-deterministic Logics}
\utl\/ is the unary fragment of the well known Linear Temporal Logic, with the unary modalities $\fut$ (\textit{future}) and $\past$ (\textit{past}) and the boolean operators. \utl\/ was studied by Etessami, Vardi and Wilke \cite{EVW02} who showed that it belongs to the language class $UL$. They also
showed that the satisfiability of $\utl$ is \np-complete by giving a small model property for $\utl$ formulas. We derive here, an explicit translation from \utl\/ formulas to language-equivalent \tlxy\/ formulas and analyse its size. This will not only allow us to construct an equivalent \potdfa\/ for the \utl\/ formula but also give an alternative proof for their \np-complete satisfiability.

Let $a\in\Sigma$. The syntax and semantics of \utl\/ formulas is as follows.
\[
a ~\mid~ \fut\phi  ~\mid~ \past\phi ~\mid~ \phi\lor\phi  ~\mid~ \neg\phi
\]
Given any word $w\in\Sigma^*$ and $i\in dom(w)$, \utl\/ formulas are interpret over words as follows.
\begin{center}
\begin{tabular}{rcl}
$w,i\models a$ & iff & $w(i)=a$\\
$w,i\models \fut\phi$ & iff & $\exists j>i ~.~ w,j\models \phi$\\
$w,i\models \past\phi$ & iff & $\exists j<i ~.~ w,j\models \phi$\\
\end{tabular}
\end{center}
The boolean operators have their usual meaning. Given a \utl\/ formula $\phi$, the language defined by $\phi$ is given by $\mathcal L(\phi) ~=~ \{w ~\mid~ w,1\models\phi\}$.

\paragraph{Modal subformulas and Boolean subformulas: }
Every modal subformula $\psi=\fut\phi$ or $\psi=\past\phi$ is such that $\phi=\mathscr B(\psi_i)$, where each $\psi_i$ is in turn either a modal subformula or an atomic formula and $\mathscr B$ is a boolean function. We shall use $\psi$ to denote modal subformulas and $\phi$ to denote the boolean formulas. $\psi$ is a $\fut$-type or $\past$-type formula depending on the outer modality of $\psi$. For any subformula $\xi$, let $Sform(\xi)$ denote the set of modal subformulas of $\xi$ (excluding $\xi$) and $\mathit{Iform}(\xi)\subseteq Sform(\xi)$ denote the set of immediate modal subformulas of $\xi$.

\paragraph{Validity of modal subformulas\\}
Given a word $w$ and a modal subformula $\psi$, $\psi$ is said to be \textit{defined} in $w$ if $\exists i\in dom(w) ~.~ w,i\models \psi$. We call the last position (in case $\psi$ is $\fut$-type) or the first position (in case $\psi$ is $\past$-type) in $w$ where $\psi$ holds, as the \emph{defining position} of $\psi$ in $w$. This is denoted as $d\pos(\psi)$. In case $\psi$ is not defined in $w$, then its defining position does not exist, and is equal to $\bot$. Thus $d\pos(\psi)\in ~dom(w)\cup \{\bot\}$.

\subsection{\utl\/ to \tlxy}
Representing the non-deterministic $\fut$ and $\past$ operators of $\utl$ in deterministic $\tlxy$ is challenging. A critical property of the unary modalities is the following. In any given word $w$ if a modal subformula of the form $\fut \phi$ is defined in $w$, then it holds at exactly all positions within an interval $[1,i-1]$, where $i$ is the last position in $w$ where $\phi$ is defined. Similarly, if a modal subformula of the form $\past\phi$ is defined in $w$ then it holds exactly at all positions within an interval $[j+1,\#w]$ where $j$ is the first position in $w$ where $\phi$ is defined. 

The following proposition relates the defining position of modal formulas of the form $\fut\phi$ or $\past\phi$ to the first or last position where $\phi$ is defined. Its correctness may be directly inferred from the semantics of $\fut$ and $\past$ operators. 

\begin{proposition}
\label{prop:psi}
\begin{itemize}
\item If $\psi=\fut\phi$ and $i$ is the last position in $w$ where $\phi$ holds then 
\begin{itemize}
\item $d\pos(\psi) = i-1$ (if $i>1$)
\item $\forall j\leq d\pos(\psi) ~.~ w,j\models \psi$
\end{itemize}
\item If $\psi=\past\phi$ and $i$ is the first position in $w$ where $\phi$ holds then 
\begin{itemize}
\item $d\pos(\psi) = i+1$ (if $i<\#w$)
\item $\forall j\geq d\pos(\psi) ~.~ w,j\models \psi$
\end{itemize}
\end{itemize}
\end{proposition}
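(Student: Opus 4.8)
The plan is to unfold the semantics of $\fut$ and $\past$ directly, treating the two symmetric cases separately. For the $\fut$-case, I would start from the definition $d\pos(\psi)$ as the \emph{last} position at which $\psi = \fut\phi$ holds, and relate it to $i$, the last position at which $\phi$ holds. The key observation is that $w,j \models \fut\phi$ precisely when there is some position strictly to the right of $j$ satisfying $\phi$; since $i$ is the rightmost such position, $\fut\phi$ holds at $j$ if and only if $j < i$, i.e.\ $j \leq i-1$. Thus the set of positions satisfying $\fut\phi$ is exactly the (possibly empty) interval $[1, i-1]$. This immediately yields both bullet points: the defining (rightmost) position is $i-1$ when $i > 1$, and because the satisfying set is a downward-closed interval $[1,i-1]$, every $j \leq d\pos(\psi)$ also satisfies $\psi$.

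For the $\past$-case I would argue symmetrically. Here $i$ is the \emph{first} position where $\phi$ holds, and $w,j \models \past\phi$ iff some position strictly to the left of $j$ satisfies $\phi$, which happens iff $j > i$, i.e.\ $j \geq i+1$. Hence the satisfying set is the upward-closed interval $[i+1, \#w]$; its leftmost element, the defining position for a $\past$-type formula, is $i+1$ (when $i < \#w$), and upward-closedness gives $\forall j \geq d\pos(\psi) . w,j \models \psi$.

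The proof is essentially a routine unwinding of the two modal clauses in the semantics table, and I do not anticipate a genuine obstacle. The only points requiring care are the boundary conditions: the side condition $i > 1$ (resp.\ $i < \#w$) guarantees that the interval $[1,i-1]$ (resp.\ $[i+1,\#w]$) is nonempty so that $d\pos(\psi)$ actually lies in $dom(w)$ rather than being $\bot$; and one must be careful that the inequalities in the semantics are \emph{strict} ($j > i$ for $\fut$, $j < i$ for $\past$), which is exactly what forces the off-by-one shift between $i$ and $d\pos(\psi)$. Since the statement bundles these observations as a proposition whose correctness the paper explicitly says ``may be directly inferred from the semantics,'' I would present it as a short direct verification of each clause rather than any inductive or structural argument.
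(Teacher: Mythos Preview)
Your proposal is correct and matches the paper's approach exactly: the paper does not give a detailed proof but simply states that ``its correctness may be directly inferred from the semantics of $\fut$ and $\past$ operators,'' which is precisely the direct unwinding of the modal clauses that you outline. Your explicit identification of the satisfying sets as the intervals $[1,i-1]$ and $[i+1,\#w]$, together with the boundary-condition remarks, is in fact more careful than anything the paper spells out.
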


\paragraph{Region partitioning\\}
Our translation from \utl\/ formulas to \tlxy\/ formulas relies on the following key observation, which is closely related to Proposition \ref{prop:psi}.
\begin{quote}
In the evaluation of a \utl\/ formula over a word $w$, it is sufficient to determine the relative positioning of the $d\pos$ positions of the modal subformulas and the occurrence of letters (of the alphabet) between them.
\end{quote}

Consider a set of modal subformulas $\kappa=\{\psi_1\cdots \psi_n\}$ and a word $w$ such that every $\psi_i$ is defined in $w$. The defining positions of $\psi_i$ partition $w$ into ``regions'', such that each region is either a defining position of one or more $\psi_i$ (called a formula region or F-region), or the region lies strictly between two consecutive defining positions (called an Intermediate region or I-region). While each F-region consists of exactly one position in $w$, an $I$-region is a subword of length 0 or more. The region partitioning comprises of alternating I and F-regions, along with a specification of the subset of the alphabet that occurs within these regions, as well as their order of first / last appearances within each region.   
\begin{example}
\label{exam:region}
Consider a set of modal formulas $\kappa = \{\psi_1,\psi_2,\psi_3,\psi_4\}$ that are defined in a word $w$. The orientation of their defining positions is as depicted in Figure \ref{fig:regionpart}. We have $d\pos(\psi_1)=d\pos(\psi_2) ~>1$ and $d\pos(\psi_3)=\#w$. The region partitioning of $\kappa$ in $w$ is given as $r_1,r_2,r_3,r_4,r_5,r_6$, where $r_1,r_3,r_5$ are I-regions and $r_2,r_4,r_6$ are F-regions. Further, if the region $r_3$ corresponds to the subword $s=aabcddcbcdac$ then its corresponding alphabet is $\{a,b,c,d\}$ and its order of occurrence is $a,b,c,d$ and $c,a,d,b$ from the left and right, respectively.
\end{example}
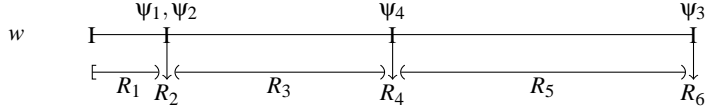
\begin{figure}
\begin{tikzpicture}
\draw (1,3) node{I}--(2,3) node{I} --(5,3) node{I}-- (9,3)node{I};
\draw (0,3) node{$w$};
\draw (2,3.3) node{$\psi_1,\psi_2$}; \draw (5,3.3) node{$\psi_4$}; \draw (9,3.3) node{$\psi_3$}; 
\draw [[-)](1,2.5)--(1.9,2.5); \draw[(-)](2.1,2.5)--(4.9,2.5); \draw [(-)](5.1,2.5)--(8.9,2.5);
\draw [->](2,3)--(2,2.4); \draw [->](5,3)--(5,2.4); \draw [->] (9,3)--(9,2.4);
\draw (1.5,2.3)node{$R_1$}; \draw (2,2.2) node{$R_2$};\draw (3.5,2.3) node{$R_3$}; \draw (5,2.2) node{$R_4$};\draw(7,2.3)node{$R_5$}; \draw (9,2.2) node{$R_6$};
\end{tikzpicture}
\caption{Region partitioning of $\kappa$ in $w$}
\label{fig:regionpart}
\end{figure}

\paragraph{Region Templates\\}
For a given set of modal formulas $\kappa$, there are only a finite number of possible relative orderings of defining positions of modal formulas in $\kappa$. We shall call each such ordering, along with the specification of letter occurrences between them as a \emph{region template}. Hence, the set of all possible region templates partitions the set of all words (in which all formulas of $\kappa$ are defined) into a finite number of equivalence classes. 

Formally, a region template $\mathscr R(\kappa)$ of a set of modal subformulas $\kappa=\{\psi_1 \cdots \psi_n\}$ is a tuple $(S,<_S,\tau,\alpha,\beta)$, where
\begin{itemize}
\item $S$ is a finite set of I-regions and F-regions.
\item $<_S$ is a strict total ordering on the set $S$ such that the I-regions and F-regions alternate.
\item $\tau: S\to 2^\kappa$ is a function which maps the F-regions to the set of subformulas whose defining position corresponds to that region. For every I-region $r$, $\tau(r)=\emptyset$ and for every F-region $r$, $\tau(r) \neq \emptyset$. Further, for every $\psi_i\in\kappa$, there exists a unique F-region $r\in S$ such that $\psi_i\in \tau(r)$, and this unique region is denoted as $reg(\psi_i)$.
\item $\alpha: S\to 2^\Sigma$ maps every region to the subset of letters. Note that for every F-region $r$, $\alpha(r)$ is a singleton. 
\item $\beta$ is a function which maps each region $r$ to a pair of ordering relations $<^L,<^R$ over the set $\alpha(r)$. $<^L$ and $<^R$ are strict total orders. 
\end{itemize}

Given a region template $\mathscr R(\kappa)=(S,<_S,\tau,\alpha,\beta)$ and a word $w\in \Sigma^+$ such that each $\psi_i\in \kappa$ is defined in $w$, we say that \emph{$\mathscr R(\kappa)$ is the (unique) region template of $w$ for $\kappa$} if there exists a partitioning $\mathit{Part}$ of $w$ such that there exists a bijection $\mathit{Equiv}~:~S\to \mathit{Part}$ which preserves the ordering relation $<_S$ and satisfies the following conditions
\begin{itemize}
\item For all F regions $r\in S$, the corresponding subword $p\in Part$ is a subword with a single position $i\in dom(w)$ such that $\forall \zeta\in \tau(r) ~.~ d\pos(\zeta)=i$.
\item For all regions $r\in S$, the corresponding subword $p\in \mathit{Part}$ is such that $\forall a\in \Sigma$. $a\in \alpha(r)$ if and only if $a$ occurs in $p$.
\item For all regions $r\in S$, the corresponding  subword $p\in \mathit{Part}$ is such that the ordering relations $<^L$ and $<^R$ exactly correspond to the ordering of first appearance of the letters in $p$ from the left and right respectively.
\end{itemize} 
Consider the region partitioning of the word $w$ in Example \ref{exam:region} (Figure \ref{fig:regionpart}) and region template $\mathscr R$ given by the sequence $S=\{r_1,r_2,r_3,r_4,r_5,r_6\}$, with $\tau(r_2)=\{\psi_1,\psi_2), ~\tau(r_4)=\psi_4, ~\tau(r_6)=\psi_3$, $tau(r_1)=\tau(r_3)=\tau(r_5)=\emptyset$, and the region $r_3$ is such that $\alpha(r_3)=\{a,b,c,d\}$, $a~<_L~b~<_L~c<_L~d$ and $c~<_R~a~<_R~d~<_R~b$ (and similarly for other regions as well). Then we may say that $\mathscr R$ is the region template of $w$ for $\{\psi_1,\psi_2,\psi_3,\psi_4\}$.

The proposition below may be inferred from the following property: Given a word $w$ and a modal formula $\psi$ that is defined in $w$, there exists a unique defining position of $\psi$ in $w$.
\begin{proposition}
\label{prop:unireg}
Given a set of modal subformulas $\kappa$ and any $w\in\Sigma^+$ such that every formula in $\kappa$ is defined in $w$, there exists a \emph{unique} region template $\mathscr R$ such that $\mathscr R$ is the region template of $w$ for $\kappa$. 
\end{proposition}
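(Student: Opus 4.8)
The plan is to establish existence and uniqueness separately, both following essentially from the single fact that each modal subformula $\psi$ defined in $w$ has a unique defining position $d\pos(\psi)$. First I would argue existence by directly reading off the region template from the word $w$. Given $\kappa = \{\psi_1,\ldots,\psi_n\}$ all defined in $w$, collect the set of defining positions $\{d\pos(\psi_1),\ldots,d\pos(\psi_n)\}$ and list the distinct values among them in increasing order, say $p_1 < p_2 < \cdots < p_m$ (note $m \leq n$, since distinct formulas may share a defining position). These $m$ positions become the F-regions, and the maximal open gaps between consecutive ones (together with the possibly-empty prefix before $p_1$ and suffix after $p_m$) become the I-regions. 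This induces the ordered set $(S,<_S)$ with alternating region types. I would then define $\tau$ by sending each F-region at position $p_k$ to $\{\psi_i : d\pos(\psi_i) = p_k\}$ and each I-region to $\emptyset$; the map $\alpha$ sends each region to the set of letters actually occurring in the corresponding subword; and $\beta$ records, for each region, the left-to-right and right-to-left orders of first appearance of those letters. By construction the required bijection $\mathit{Equiv}$ exists and the three defining conditions hold, so this $\mathscr R$ is a region template of $w$ for $\kappa$.

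\emph{Uniqueness} is where the real content lies, and it rests on Proposition~\ref{prop:psi} and its underlying determinacy. Suppose $\mathscr R = (S,<_S,\tau,\alpha,\beta)$ and $\mathscr R' = (S',<_{S'},\tau',\alpha',\beta')$ are both region templates of the same $w$ for $\kappa$, witnessed by partitionings $\mathit{Part}, \mathit{Part}'$ and bijections $\mathit{Equiv},\mathit{Equiv}'$. The key observation is that the partitioning of $w$ is \emph{forced} by the data, not chosen: the F-regions must correspond exactly to the positions $\{d\pos(\psi_i)\}$, and since each $\psi_i$ has a \emph{unique} defining position in $w$, the multiset of F-region positions is identical for $\mathit{Part}$ and $\mathit{Part}'$. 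The I-regions, being the maximal subwords strictly between consecutive F-region positions (plus the end pieces), are then likewise determined. Hence $\mathit{Part} = \mathit{Part}'$ as partitionings of $w$, and composing $\mathit{Equiv}'^{-1} \circ \mathit{Equiv}$ gives an order-isomorphism $S \to S'$. It then remains to check that this isomorphism identifies all the labelling data: $\tau$ is forced because $\tau(r)$ must equal $\{\psi_i : d\pos(\psi_i) = \mathit{Equiv}(r)\}$; $\alpha$ is forced because $a \in \alpha(r)$ iff $a$ occurs in the corresponding subword; and $\beta$ is forced because the left/right first-appearance orders of the letters in a fixed subword are intrinsic to that subword. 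Thus $\mathscr R$ and $\mathscr R'$ agree in all five components and are equal.

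\emph{The main obstacle} I anticipate is not conceptual but bookkeeping: one must be careful with the possibility that several modal formulas of $\kappa$ collapse to the same defining position (so F-regions can carry multiple formulas under $\tau$), and with boundary cases where an I-region is empty or where a defining position coincides with position $1$ or $\#w$. These are exactly the situations illustrated in Example~\ref{exam:region}, where $d\pos(\psi_1)=d\pos(\psi_2)$ and $d\pos(\psi_3)=\#w$. Handling these cleanly requires stating the correspondence in terms of \emph{distinct} defining positions and allowing empty I-regions in $S$, after which the forcing argument goes through uniformly. I would therefore keep the proof short, citing the uniqueness of $d\pos(\psi)$ as the crux and remarking that every component of the tuple is determined by intrinsic features of $w$, so no genuine choice is ever available.
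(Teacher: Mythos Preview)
Your proposal is correct and follows exactly the line the paper itself indicates: the paper does not give a detailed proof but simply remarks that the proposition ``may be inferred from the following property: Given a word $w$ and a modal formula $\psi$ that is defined in $w$, there exists a unique defining position of $\psi$ in $w$.'' Your argument is a faithful and careful elaboration of this one-sentence justification, correctly identifying that the set of distinct defining positions forces the F-regions, which in turn forces the I-regions and all the labelling data $\tau,\alpha,\beta$.
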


In the remainder of the section, we shall often refer to a region $r$ in a word $w$, to mean the partition in the $w$ which corresponds to the $r$ (that is given by the equivalence $\mathit{Equiv}$).

\paragraph{Parameters $\Delta$ and $\theta$\\}
Let $\Phi$ be a \utl\/ formula. We shall construct a \tlxy\/ formula $\mathit{Trans}(\Phi)$ that is language-equivalent to $\Phi$. 
For the top-level formula $\Phi$, we define \emph{parameters} $\Delta$ and $\theta$ of $\Phi$ as follows. $\Delta~\subseteq ~Sform(\Phi)$ is a subset of the set of modal subformulas of $\Phi$. $\theta$ is a function which maps each modal subformula $\psi$ of $\Phi$ to a region template over the set $Iform(\psi)\cap \Delta$.

\begin{definition}
\label{def:theta}
Given a word $w\in\Sigma^*$, $w$ is said to \emph{conform to} parameters $\Delta$ and $\theta$ if $\Delta$ is exactly the subset of modal subformulas of $\Phi$ which are defined in $w$ and for every $\psi\in Sform(\Phi)$, $\theta(\psi)$ is the region template of $w$ for the set $Iform(\psi)\cap \Delta$. 
\end{definition}

\paragraph{Evaluating Boolean Formulas\\}
Fix parameters $\Delta$ and $\theta$ for $\Phi$. For a boolean subformula $\phi$ of $\Phi$, we may construct a set $Def^{\Delta,\theta}(\phi)$ which is a set of pairs $\{(r,A)\}$ such that $r\in S$ and $A\subseteq \alpha(r)$ (and $A\neq\emptyset$). The idea behind the construction of $Def(\phi)$ is to identify exactly the positions where $\phi$ will hold. The validity of $\phi=\mathscr B(\psi_j)$ at a position $i$ in a word depends on the following:
\begin{itemize}
\item the relative positioning of $i$ with respect to the defining positions of the modal subformulas in $\{\psi_j\}$, and hence the region (in the region partitioning of $Iform(\phi)$) to which $i$ belongs.
\item the letter $w(i)$ at the position $i$- to infer the validity of the atomic formulas in $\{\psi_j\}$.
\end{itemize}
Hence, the set $Def^{\Delta,\theta}(\phi)$ exactly indicates in terms of $(r,A)$ pairs, the positions in a word where $(\phi)$ will hold. The construction of $Def^{\Delta,\theta}(\phi)$ is formulated in the lemma below.

\begin{lemma}
\label{lem:def}
Given $\Delta,\theta$ of a formula $\Phi$ and a boolean subformula $\phi=\mathscr B(\zeta_j)$ of $\Phi$, the set $Def^{\Delta,\theta}(\phi)$ may be constructed such that for all words $w$ that \emph{conform to} $\Delta,\theta$, and for all $i\in dom(w)$, $w,i\models \phi$ if and only if $\exists (r,A)\in Def^{\Delta,\theta}(\phi)$ such that $i\in r$ and $w(i)\in A$.
\end{lemma}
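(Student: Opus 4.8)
The plan is to construct $Def^{\Delta,\theta}(\phi)$ by structural induction on the boolean combination $\phi = \mathscr B(\zeta_j)$, working relative to the fixed region template $\mathscr R = \theta(\psi) = (S,<_S,\tau,\alpha,\beta)$ attached to the parent modal formula $\psi$ whose body is $\phi$; here $\mathscr R$ is the region template over $\mathit{Iform}(\phi)\cap\Delta$, since the outer $\fut$ or $\past$ does not change the set of immediate modal subformulas. Throughout I would exploit the conformance of $w$ to $\Delta,\theta$ in two concrete ways: $\Delta$ is exactly the set of modal subformulas defined in $w$, and $\alpha(r)$ is exactly the set of letters occurring in region $r$. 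I would also use that every position $i\in dom(w)$ lies in a unique region, which is the content of the bijection $\mathit{Equiv}$ from Proposition \ref{prop:unireg}.

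For the base cases I would treat the leaves $\zeta_j$ of the boolean parse tree. If $\zeta_j = a$ is atomic, then $w,i\models a$ depends only on the letter, so I set $Def^{\Delta,\theta}(a) = \{(r,\{a\}) : r\in S,\ a\in\alpha(r)\}$; soundness holds because $w(i)=a$ with $i\in r$ forces $a\in\alpha(r)$ by conformance. If $\zeta_j$ is a modal subformula with $\zeta_j\notin\Delta$, it is undefined in $w$ and holds nowhere, so $Def^{\Delta,\theta}(\zeta_j)=\emptyset$. The essential leaf case is $\zeta_j\in\Delta$, where I would invoke Proposition \ref{prop:psi}: a $\fut$-type $\zeta_j$ holds exactly at positions $\le d\pos(\zeta_j)$, and since $d\pos(\zeta_j)$ is precisely the defining position forming the F-region $reg(\zeta_j)$, it holds exactly on the regions $r\le_S reg(\zeta_j)$, independently of the letter. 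I therefore set $Def^{\Delta,\theta}(\zeta_j) = \{(r,\alpha(r)) : r\le_S reg(\zeta_j)\}$, dropping the empty I-regions which carry no positions, and symmetrically with $\ge_S reg(\zeta_j)$ for a $\past$-type $\zeta_j$.

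For the inductive boolean step I would handle disjunction and negation, from which conjunction is derived. For $\phi=\phi_1\lor\phi_2$ the existential form of the claim makes $Def^{\Delta,\theta}(\phi) = Def^{\Delta,\theta}(\phi_1)\cup Def^{\Delta,\theta}(\phi_2)$ correct at once (optionally merging pairs that share a region). For $\phi=\neg\phi_1$ I would complement within the letters actually available in each region: setting $A_r = \bigcup\{A : (r,A)\in Def^{\Delta,\theta}(\phi_1)\}$ to be the letters of region $r$ at which $\phi_1$ holds, I put $Def^{\Delta,\theta}(\neg\phi_1) = \{(r,\alpha(r)\setminus A_r) : r\in S,\ \alpha(r)\setminus A_r\neq\emptyset\}$. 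This is sound precisely because, by conformance, $\alpha(r)$ is the exact set of letters occurring in $r$, so a position $i\in r$ fails $\phi_1$ iff $w(i)\in\alpha(r)\setminus A_r$.

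The main obstacle I anticipate is the modal-leaf case: I must argue that the set of positions where a defined $\fut\phi'$ (resp.\ $\past\phi'$) holds coincides exactly with a downward- (resp.\ upward-) closed union of regions of $\theta(\psi)$. This is where Proposition \ref{prop:psi} and the definition of conformance do the real work, pinning the defining position $d\pos(\zeta_j)$ to the unique F-region $reg(\zeta_j)$ and guaranteeing that the modality's truth set respects region boundaries. Once this is in place, the boolean induction is routine, and the requirement $A\subseteq\alpha(r)$ is preserved at every step. The target equivalence, that $w,i\models\phi \iff \exists(r,A)\in Def^{\Delta,\theta}(\phi).\ i\in r \land w(i)\in A$, then follows by assembling the per-region characterization established at each inductive step and appealing to the uniqueness of the region containing $i$.
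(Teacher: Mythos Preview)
Your proposal is correct and follows essentially the same route as the paper: a structural induction on the boolean combination $\phi$, with the modal leaves handled via Proposition~\ref{prop:psi} and conformance, and the atomic and boolean connectives handled region-wise using $\alpha$. The only cosmetic differences are in the boolean clauses: the paper writes out $\land$, $\lor$, $\neg$ explicitly as region-wise intersection, union, and complement, whereas you take plain set union for $\lor$ and derive $\land$; your negation clause, which complements within $\alpha(r)$ while ranging over \emph{all} regions $r\in S$, is in fact tidier than the paper's (which complements in $\Sigma$ and only over regions already present in $Def^{\Delta,\theta}(\phi_1)$), but the underlying idea is the same.
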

\begin{proof}
Consider a modal subformula $\psi=\fut\phi$ (or alternatively $\past\phi$) such that $\phi=\mathscr B(\zeta_j)$, where each $\zeta_j$ is in turn a modal formula or an atomic formula. Let $\theta(\psi)=\mathscr R=(S,<_S,\tau,\alpha,\beta)$. The set $Def^{\Delta,\theta}(\phi)$ may be constructed by structural induction on $\phi$. 

\begin{itemize}
\item If $\phi=a$, then $Def^{\Delta,\theta}(\phi) = \{(r,\{a\}) ~\mid~ r\in S~\land~ a\in \alpha(r)\}$
\item If $\phi=\phi_1\land \phi_2$ then $Def^{\Delta,\theta}(\phi) = \{(r,A_1\cap A_2) ~\mid~ (r,A_1)\in Def^{\Delta,\theta}(\phi_1) ~\land~ (r,A_2)\in Def^{\Delta,\theta}(\phi_2)\ ~\land~ A_1\cap A_2\neq \emptyset\}$
\item If $\phi=\neg \phi_1$ then $Def^{\Delta,\theta}(\phi) = \{(r,\Sigma\setminus A) ~\mid~ (r,A)\in Def^{\Delta,\theta}(\phi_1)\ ~\land~ A\neq\Sigma\}$
\item If $\phi=\phi_1\lor \phi_2$ then $Def^{\Delta,\theta}(\phi) = \{(r,A_1\cup A_2) ~\mid~ (r,A_1)\in Def^{\Delta,\theta}(\phi_1) ~\land~ (r,A_2)\in Def^{\Delta,\theta}(\phi_2)\}$
\item If $\phi=\zeta$ where $\zeta=\fut(\phi')$ then \\
$Def^{\Delta,\theta}(\phi) = \{(r,\alpha(r)) ~\mid~ r\leq_{S} reg(\zeta)~\land~ \alpha(r) \neq\emptyset\}$, if $\zeta\in \Delta$\\
$Def^{\Delta,\theta}(\phi) = \emptyset$, if $\zeta\not\in \Delta$\\
\item If $\phi=\zeta$ where $\zeta=\past(\phi')$ then \\
$Def^{\Delta,\theta}(\phi) = \{(r,\alpha(r)) ~\mid~ r\geq_{S} reg(\zeta) ~\land~ \alpha(r)\neq\emptyset\}$, if $\zeta\in \Delta$\\
$Def^{\Delta,\theta}(\phi) = \emptyset$, if $\zeta\not\in \Delta$\\
\end{itemize}
The correctness of the above construction may be deduced by induction on the structure of $\phi$ using the semantics of the logic \utl, Proposition \ref{prop:psi} and the fact that $w$ \emph{conforms to} $\Delta,\theta$. The atomic and boolean cases are straightforward. Consider the interesting case of $\phi=\fut\phi' (=\zeta)$. From Proposition \ref{prop:psi}, we know that $\phi$ holds true at all positions that are at or before $d\pos(\zeta)$. Hence for any $w$, since $w$ conforms to $\Delta,\theta$, we know that $d\pos(\zeta)=(reg(\zeta))$. Therefore we know that $\phi(\zeta)$ holds at all regions at or before $reg(\zeta)$.  
\end{proof}

\paragraph{Constructing the ranker for $\psi$\\}
Using a bottom-up induction, for every modal subformula $\psi\in \Delta$, we may construct a ranker $D^{\Delta,\theta}(\psi)$ such that for all words $w$ which \emph{conform to} $\Delta,\theta$, the ranker $D^{\Delta,\theta}(\psi)$ accepts at $d\pos(\psi)$. 

Given the set $Def^{\Delta,\theta}(\phi)$, we may construct the ranker $D^{\Delta,\theta}(\psi)$ for the modal subformula $\psi=\fut\phi$ or $\past\phi$ as follows. Let $\undef$ be a special ranker which does not accept on any word. If $Def^{\Delta,\theta}(\phi)=\emptyset$, then $D^{\Delta,\theta}(\psi)=\undef$. \\
Otherwise, if $Def^{\Delta,\theta}(\phi)$ is non-empty, then let $min(Def^{\Delta,\theta}(\phi),<_S)$\footnote{In general, given a set $A$ and a total ordering $<$ on $A$, let $min(A,<)$ and $max(A,<)$ be the minimal and maximal elements (respectively) of $A$ with respect to the ordering $<$.} and $max(Def^{\Delta,\theta}(\phi),leq_S)$ denote the minimal and maximal elements of $(Def^{\Delta,\theta})$ wrt the ordering $<_{S}$ of the regions.\footnote{From the construction of $Def^{\Delta,\theta}(\phi)$ it is apparent that for every region $R$, there is at most one element with $R$ in $Def^{\Delta,\theta}(\phi)$. }\\
If $\psi=\fut\phi$, then from Proposition \ref{prop:psi}, we know that $(Def^{\Delta,\theta})$ must accept at one position previous to the maximum position where $\phi$ holds. Such a ranker is constructed as follows:
\begin{itemize}
\item Case: If $max(Def^{\Delta,\theta}(\phi),<_S)=(r,A)$ such that $r$ is an F-region, then $\tau(r)\neq\emptyset$ and for some $\zeta$, $\zeta\in \tau(r)$, then \\
\[
D^{\Delta,\theta}(\psi)= D^{\Delta,\theta}(\zeta);\yunit\top\\
\]
\item Case: If $max(Def^{\Delta,\theta}(\phi),<_S)=(r,A)$, such that $\tau(r)=\emptyset$ (i.e. $r$ is an I-region) then
\begin{itemize}
\item If $r=max(S,<_S)$, then $r$ includes the last position in the word. Hence\\
\[
D^{\Delta,\theta}(\psi)=EP\weaky{p}\yunit\top\\
\]
where $p=min(A\cap \alpha(r), <^R)$.
\item If $r\neq max(S,<_S)$, then if $r'$ is the region subsequent to $r$, there exists $\zeta$ such that $reg(\zeta)=r'$. Then
\[
D^{\Delta,\theta}(\psi)=D^{\Delta,\theta}(\zeta);Y_{p}\yunit\top\\
\]
where $p=min(A\cap \alpha(r), <^R)$.
\end{itemize}
\end{itemize}
The ranker for the case of $\psi=\past\phi$ is symmetric to the above. 

The correctness of this construction is given by Lemma \ref{lem:para} part(ii).

\paragraph{Checking $\Delta$ and $\theta$\\}
We shall now give the formulas which ``check'' whether a given word \emph{conforms to} a given $\Delta$ and $\theta$. For convenience and ease of readability, we have dropped the superscript $\Delta,\theta$. \\

The formula $Dvalid$ checks if $\Delta$ holds for the given word.\\
\[
Dvalid(\Delta) ~=~ \bigwedge\limits_{\psi\in\Delta}(D(\psi)) ~~\land~~ \bigwedge\limits_{\psi\not\in\Delta}(\neg D(\psi))\\
\]
The formula $Tvalid$ checks for the correctness of $\theta$ by checking for each modal subformula $\psi$ whether $\theta(\psi)$ is the region template of the word, wrt the set $Iform(\psi)\cap\Delta$.
\[
Tvalid(\Delta,\theta) = \bigwedge\limits_{\psi\in Sform(\Phi)\cup\Phi} [Rvalid(\theta,\psi) ~\land~ Avalid(\theta,\psi) ~\land~ Bvalid(\theta,\psi)]
\]
In the above, if $\theta(\psi)= (S,<_S,\tau,\alpha,\beta)$ then $Rvalid(\theta,\psi)$ checks the consistency of $<_S$ and $\tau$. $Avalid(\theta,\psi)$ and $Bvalid(\theta,\psi)$ respectively check the correctness of $\alpha$ and $\beta$ in the given word. They are as given below. Assume that for each $\psi$, $\theta(\psi)= (S,<_S,\tau,\alpha,\beta)$ such that $r_1,\cdots r_{maxR\psi}$ is the enumeration of the regions in $S$ based on the ordering $<_S$. 

$RValid$ checks the validity of $\tau(r_i)$ for all the F-regions $r_i$ and also the relative ordering of the F-regions, which implicitly also verifies the ordering of I-regions that alternate with the F regions. While $TauChk(r_i)$ checks whether the rankers corresponding to every $\zeta\in\tau(r_i)$ accept at the same position, $OrdChk(r_i)$ checks the relative ordering of successive F-regions, using the rankers of the modal formulas that are contained in $\tau(r_i)$. These formulas are as given below.
\[
Rvalid(\theta,\psi) ~=~ \bigwedge\limits_{i\in \{1,\cdots maxR\psi\}}[\tau(r_i)\neq \emptyset ~\implies~ (TauChk(r_i) ~\land~ OrdChk(r_i))]
\]
\[
TauChk(r_i) = \bigwedge\limits_{\zeta,\xi\in \tau(r_i)}[D(\zeta);\pfleq(D(\xi)) ~\land~ D(\xi);\pfleq(D(\zeta))]
\]
\[
OrdChk(r_i) = D(\zeta);\pfless(D(\xi))
\]
where $\zeta\in\tau(r_i)$ and $\xi\in\tau(r_{i+2})$, (for $i\leq maxR\psi-2$)

The formula $Avalid$ checks the presence of the letters in $\alpha(r_i)$ within the region $r_i$, using $ChkLet(r_i)$ and at the same time, it checks for the absence of letters which are not in $\alpha(r_i)$. This is done using ranker-directionality formulas for rankers corresponding to F-regions. 
\[
Avalid(\theta,\psi) ~=~ \bigwedge\limits_{i\in \{1,\cdots maxR\psi\}}[ChkLet(r_i) ~\land ChkNot(r_i)]
\]
Case: $r_i$ is an I-region and $1~<~i~<~maxR\psi$. Let $\zeta\in tau(r_{i-1})$ and $\xi\in\tau_(r_{i+1})$. Then\\
\[
ChkLet(r_i) ~=~  \bigwedge\limits_{a\in \alpha(r_i)} [D(\zeta);~ X_a;~\pfless(D(\xi))]
\]
\[
ChkNot(r_i) ~=~ \bigwedge\limits_{a\not\in \alpha(r_i)}\neg[D(\zeta);~ X_a~;\pfless(D(\xi))]
\]
The other cases where $r_i$ is an I-region and it is either the first or last region, or if $r_i$ is an F-region, may be worked out similarly.

The formula $Bvalid$ checks for each region, the ordering of the letters within the region, from the left side (using $LOrdChk$) and from the right side (using $ROrdChk$).  
\[
Bvalid(\theta,\psi) ~=~ \bigwedge\limits_{i\in \{1,\cdots maxR\psi\}}[LOrdChk(r_i) ~\land~ ROrdChk(r_i)]
\]
If $r_i$ is an F-region then $\alpha(r_i)$ is a singleton. Hence the interesting case is when $r_i$ is an I-region. \\
Case: $r_i$ is an I-region and $1~<~i~<~maxR\psi$.  Let $\xi\in \tau(r_{i-1}), ~\zeta\in\tau(r_{i+1})$ and $\{b_1...b_m\}\in \alpha(r_i)$.
\[
LOrdChk ~=~ \bigwedge\limits_{j\in\{1...m\}}[D(\xi)X_{b_j} ;\pfless(D(\xi);X_{b_{j+1}}\top))]
\]
\[
ROrdChk ~=~ \bigwedge\limits_{j\in\{1...m\}}[D(\zeta)Y_{b_j} ;\pfgreat(D(\zeta);Y_{b_{j+1}}\top))]
\]
Other cases where $i=1$ or $i= maxR\psi$, may be worked out similarly. 

The following lemma asserts the correctness of the above validity-check formulas for the parameters and also the correctness of the ranker construction for the modal subformulas.

\begin{lemma}
\label{lem:para}
\begin{itemize}
\item[(i)] Given parameters $\Delta,\theta$ of $\Phi$, for all $w\in \Sigma^+$, $w$ \emph{conforms to} $\Delta,\theta$ if and only if
\begin{itemize}
\item $w\models Dvalid(\Delta)$ and
\item $w\models Tvalid(\theta)$
\end{itemize}
\item[(ii)] Given parameters $\Delta,\theta$ of $\Phi$ and a modal subformula $\psi$ of $\Phi$, for every $w\in\Sigma^+$ such that $w$ \emph{conforms to} $\Delta,\theta$, the ranker $D^{\Delta,\theta}(\psi)$ accepts at a position $i\in dom(w)$ if and only if $\psi$ is defined in $w$ and $d\pos(\psi)=i$.
\end{itemize}
\end{lemma}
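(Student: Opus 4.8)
The plan is to prove part (ii) first and then leverage it for part (i). Part (ii) I would establish by bottom-up induction on the modal-subformula hierarchy of $\Phi$, under the standing assumption that $w$ \emph{conforms to} $\Delta,\theta$. Conformance guarantees that $\theta(\psi)$ is genuinely the region template of $w$ for $Iform(\psi)\cap\Delta$, so the combinatorial set $Def^{\Delta,\theta}(\phi)$ of Lemma \ref{lem:def} really marks exactly the regions and letters at which $\phi$ holds. If $\phi$ holds nowhere then $Def^{\Delta,\theta}(\phi)=\emptyset$ and $D^{\Delta,\theta}(\psi)=\undef$ never accepts, matching that $\psi$ is undefined. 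Otherwise I would check the three construction cases against Proposition \ref{prop:psi}. In the base case $\psi=\fut\phi$ with $Iform(\psi)=\emptyset$ the whole word is a single region and $D^{\Delta,\theta}(\psi)=EP\,\weaky{p}\,\yunit\top$ walks from the right end-marker back to the rightmost $\phi$-position (the last occurrence of $p=\min(A\cap\alpha(r),<^R)$, the rightmost $A$-letter) and then one step left, landing on $d\pos(\psi)=i-1$. In the inductive step, when $\max(Def^{\Delta,\theta}(\phi),<_S)$ is an F-region carrying some $\zeta$, the induction hypothesis gives that $D(\zeta)$ accepts at $d\pos(\zeta)=reg(\zeta)$ and $D(\zeta);\yunit\top$ steps left as required; when it is an I-region $r$, the jump $Y_p$ reaches the last $\phi$-position in $r$ and $\yunit\top$ steps left again. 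Note the trailing unit step also enforces the ``if $i>1$'' proviso of Proposition \ref{prop:psi}, since it fails at the left boundary. The $\past$-cases are symmetric; this is a routine if careful case analysis.

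The forward direction of part (i) is then immediate. If $w$ conforms, part (ii) gives that $D(\psi)$ accepts iff $\psi$ is defined, and since conformance means $\Delta$ is exactly the set of defined modal subformulas, every conjunct of $Dvalid(\Delta)$ holds. For $Tvalid(\theta)$, each of $Rvalid$, $Avalid$, $Bvalid$ is a conjunction of ranker-directionality assertions about the positions $d\pos(\zeta)$; with the rankers correct by part (ii) and the formulas $\pfless,\pfleq,\pfgreat,\pfgeq$ correct by Lemma \ref{lem:pfpos}, conformance of each $\theta(\psi)$ makes every conjunct true.

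The backward direction of part (i) is the main obstacle, because the rankers $D^{\Delta,\theta}(\psi)$ are defined relative to $(\Delta,\theta)$, whose correctness for $w$ is exactly what we wish to deduce, so part (ii) cannot be invoked directly. I would break this circularity by a stratified induction on modal depth. Assume $w\models Dvalid(\Delta)$ and $w\models Tvalid(\theta)$, and suppose inductively that conformance of both $\Delta$ and $\theta$ — and hence, by the inductive step of part (ii), correctness of the rankers $D(\zeta)$ — has been established for every modal subformula $\zeta$ of depth strictly less than that of $\psi$. Under this hypothesis the rankers occurring inside $Rvalid(\theta,\psi)$, $Avalid(\theta,\psi)$, $Bvalid(\theta,\psi)$ are correct, so these formulas verify, component by component, that the claimed template $\theta(\psi)$ is consistent with the actual ordering of defining positions, the letter occurrences, and the left/right letter orderings in $w$. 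Since the checks pin down every component of $(S,<_S,\tau,\alpha,\beta)$ and the region template of $w$ for $Iform(\psi)\cap\Delta$ is unique by Proposition \ref{prop:unireg}, $\theta(\psi)$ must equal it, giving $\theta$-conformance at $\psi$. The inductive step of part (ii) now applies at level $\psi$, so $D(\psi)$ accepts at $d\pos(\psi)$ exactly when $\psi$ is defined, and the corresponding conjunct of $Dvalid(\Delta)$ forces $\psi\in\Delta$ iff $\psi$ is defined, giving $\Delta$-conformance at $\psi$. Carrying the induction through all depths yields full conformance, completing the backward direction and the lemma.
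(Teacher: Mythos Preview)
Your proposal is correct and follows essentially the same approach as the paper: an induction on the modal depth of subformulas in which parts (i) and (ii) are established together at each level. The paper packages this as a single simultaneous induction using the restrictions $\Delta_\phi,\theta_\phi$, whereas you first isolate (ii) under a conformance assumption and then, for the backward direction of (i), reintroduce the level-by-level interleaving---but the resulting stratified argument is the same. Your treatment is in fact more explicit than the paper's: the paper's inductive step merely says that correctness of the rankers $D(\zeta_i)$ lets one ``verify the correctness of $Dvalid(\Delta_\phi)$ and $Tvalid(\theta_\phi)$,'' without separating the two directions of the biconditional in (i) or naming the circularity that you resolve via Proposition~\ref{prop:unireg}.
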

\begin{proof}
Given a modal subformula $\psi$ of $\Phi$ such that $\psi=\fut/\past\phi$, let $\Delta_\phi$ and $\theta_\phi$ be the restrictions of $\Delta$ and $\theta$ to $\phi$. Therefore, $\Delta_\phi=\Delta\cap Sform(\phi)$ and $\theta_\phi$ is the restriction of the function $\theta$ to the domain $Sform(\psi)\cup\psi$. 

We shall prove the lemma by induction on the depth of the subformulas. Consider a modal subformula $\psi=\fut/\past (\phi)$ of $\Phi$ such that $\phi=\mathscr B(\zeta_i)$ where each $\zeta_i$ is a modal subformula or atomic formula. 
\begin{itemize}
\item Base Case:\\
If $Iform(\psi) = \emptyset$ then $\phi$ is a boolean combination of atomic formulas. Hence $\Delta_\phi=\emptyset$ and $\theta_\phi(\psi)=\mathscr R$. Here, the only possible region set of $\mathscr R$ is one which consists of a single region $r$ such that $\tau(r)=\emptyset$. Since $\Delta_\phi=\emptyset$, $Dvalid$ trivially holds for all words. Further, $TValid$ checks the region template $\theta(\psi) = \mathscr R(\emptyset)$. Since $\mathscr R(\emptyset)$ is a region template with a single region, $Def^{\Delta_\phi,\theta_\phi}(\phi)$ is either a singleton or $\emptyset$. In the former case, the ranker $D^{\Delta_\phi,\theta_\phi}(\psi)$ exactly matches the position corresponding to the $dPos$ position of $\psi$. In the latter case, $D^{\Delta_\phi,\theta_\phi}(\psi)=\undef$. Hence part(ii) of the lemma is verified for the base case. 
\item Assume that $Iform(\psi)=\{\zeta_i\}$ is non-empty and the lemma holds for every $\zeta_i$ i.e., For every $\zeta_i=\fut/\past\phi_i$, Part(i) of the lemma holds for the restrictions $\Delta_{\phi_i},\theta_{\phi_i}$ and Part(ii) of the lemma holds for $\zeta_i$. We shall prove that the lemma holds for $\psi=\mathscr B(\phi)$.

Firstly, from the correctness of the construction of rankers for $\zeta_i$, we may verify the correctness of $Dvalid(\Delta_\phi)$ and $Tvalid(\theta_\phi)$. (Hence Part(i)). Further, from Lemma \ref{lem:def}, we know that $Def^{\Delta_\phi,\theta_\phi}(\phi)$ exactly marks the positions (in terms of regions and letter-occurrences within them) where $\phi$ holds. By observing the construction of rankers, we can infer that the ranker $D^{\Delta_\phi,\theta_\phi}(\psi)$ exactly matches the position corresponding to the $dPos$ position of $\psi$ (hence Part(ii)).
\end{itemize} 
\end{proof}

\paragraph{Constructing the formula $Trans(\Phi)$\\}
We may now give the language equivalent \tlxy\/ formula for the \utl\/ formula $\Phi$. Let $\Phi=\mathscr B(\{\psi_i,a_j\})$ where $\psi_i$ are immediate modal subformulas (which are only of the form $\fut\phi$ at the top level) and $a_j$ are atomic formulas. Then from the correctness of the validity formulas of the parameters $\Delta,\theta$ and rankers for the modal subformulas (Lemma \ref{lem:para}) we have
\[
  Trans(\Phi) ~=~ \bigvee\limits_{\Delta,\theta} ~[Dvalid(\Delta) ~\land~ Tvalid(\theta) ~\land~ \mathscr B(D^{\Delta,\theta}(\psi_i),a_j)].
\]

\paragraph{Complexity\\}
Consider a \utl\/ formula $\Phi$ of length $n$. Let $s$ be the size of its alphabet. The number of modal subformulas of $\Phi$ is $\mathcal O(n)$. For a given set of parameters $\Delta,\theta$,
\begin{itemize}
\item For each $\psi\in Sform(\Phi)$ the ranker $D^{\Delta,\theta}(\psi)$ is of size $\mathcal O(n)$.
\item Hence $Dvalid(\Delta)$ is of size $\mathcal O(n)$.
\item For each $\psi$, the size of $RValid(\psi,\theta)$ is $\mathcal O(n^3)$, and size of $AValid(\psi)$ and $BValid(\psi)$ is $\mathcal O(sn^2)$ 
\item $Tvalid$ checks the region template for each $\psi$. Hence the size of $Tvalid(\theta)$ is $\mathcal O(sn^4)$
\end{itemize}
Since the number of possible $\Delta$ and $\theta$ are exponential in $n$, $Trans(\Phi)$ is an $\mathcal O(2^n)$ disjunction of formulas whose size is bounded by $\mathcal O(sn^4)$.\\
\\
Time Complexity: For a given $\Delta,\theta$, the time taken to compute $Def^{\Delta,\theta}(\phi)$ for each $\phi$, is proportional to the number of regions and the size of $\phi$, i.e. $\mathcal O(n^2)$. Hence, the total time required to compute $Def$ for all subformulas is $\mathcal O(n^3)$. Further, the time required to compute the rankers for each modal subformula and the validity-checking formulas for $\Delta$ and $\theta$ is proportional to its size, which is polynomial in $n$. Hence we can conclude that the time taken to compute each disjunct of $Trans(\Phi)$ is also polynomial in $n$. 

\begin{theorem}
Satisfiability of \utl\/ formulas is decidable with NP-complete complexity.
\end{theorem}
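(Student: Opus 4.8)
The plan is to prove membership in \np\/ and \np-hardness separately, exploiting the translation from \utl\/ to \tlxy\/ developed above rather than reproving anything from scratch. For membership, the starting point is that $\Phi$ is satisfiable if and only if its language-equivalent formula $Trans(\Phi)$ is, and that $Trans(\Phi)$ is a disjunction indexed by the parameter pairs $(\Delta,\theta)$, each disjunct being $Dvalid(\Delta) \land Tvalid(\theta) \land \mathscr B(D^{\Delta,\theta}(\psi_i),a_j)$. Hence $\Phi$ is satisfiable exactly when some single disjunct $F_{\Delta,\theta}$ is satisfiable. The crucial quantitative observation is that although there are exponentially many pairs $(\Delta,\theta)$, each individual pair admits a polynomial-size description: $\Delta$ is a subset of the $\mathcal O(n)$ modal subformulas of $\Phi$, and each value $\theta(\psi)$ is a region template carrying $\mathcal O(n)$ regions together with their alphabet subsets and left/right orderings, so $\theta$ as a whole is of polynomial size. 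By the complexity analysis preceding the theorem, each disjunct $F_{\Delta,\theta}$ is moreover a \tlxy\/ formula of size $\mathcal O(sn^4)$. Therefore a nondeterministic machine can first guess $(\Delta,\theta)$ and construct $F_{\Delta,\theta}$ in polynomial time.

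It remains to decide satisfiability of the guessed disjunct within \np. First I would apply Theorem \ref{thm:tlxyaut} to convert $F_{\Delta,\theta}$ into a language-equivalent \potdfa\/ of polynomially many states, and then invoke the Language Non-Emptiness property of \potdfa, which is decidable in \np. Concretely, by the Small Model property the guessed \potdfa, when nonempty, has a witness word of length linear in its number of states, which can itself be guessed and verified in polynomial time by membership checking. Composing the two guesses—the parameters $(\Delta,\theta)$ and the witness word—keeps the entire procedure in \np, and its correctness is exactly the content of Lemma \ref{lem:para} together with the definition of $Trans(\Phi)$.

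For \np-hardness I would reuse the argument already given for \tlxy: \utl\/ contains arbitrary boolean combinations of atomic formulas, so propositional satisfiability embeds into it, and \np-hardness follows from the \np-completeness of propositional temporal logic (this is also the route taken originally in \cite{EVW02}). Combining the two directions yields the theorem.

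The main obstacle is conceptual rather than technical: one must resist translating $Trans(\Phi)$ as a whole, since that formula is of exponential size and converting it directly to a \potdfa\/ would produce an exponential automaton. The essential insight is that satisfiability requires only a \emph{single} witnessing disjunct, that each disjunct is of polynomial size, and—equally important—that the parameters $(\Delta,\theta)$ certifying that disjunct are themselves polynomially describable and hence guessable by an \np\/ machine. The one genuinely load-bearing routine check is verifying that the description of $\theta$ is polynomial, which amounts to bounding the number of regions and the alphabet subsets and two orderings attached to each region.
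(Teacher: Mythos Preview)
Your proposal is correct and follows essentially the same approach as the paper: guess a single parameter pair $(\Delta,\theta)$, construct the corresponding polynomial-size disjunct of $Trans(\Phi)$, and appeal to the \np\/ satisfiability of \tlxy\/ (which in turn goes through the polynomial \potdfa\/ construction and its \np\/ non-emptiness); \np-hardness comes from propositional logic. You spell out a few points the paper leaves implicit---the polynomial describability of $(\Delta,\theta)$ and the unwinding of the \tlxy\/ satisfiability check via the small-model property---but the argument is the same.
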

\begin{proof}
For an input $\utl$ formula of size $n$, our reduction gives us a language equivalent $\tlxy$
formula of the form $\bigvee\limits_{i\in \{1\cdots k\}} \phi_i$ where $k$ is exponential in $n$ and each disjunct $\phi_i$ has a size polynomial in $n$ (assuming alphabet size to be a constant). 
From Proposition \ref{prop:unireg}, we know that the set of possible parameters $\Delta,\theta$ partitions $\Sigma^+$ into equivalence classes such that each equivalence class is characterized by the parameter to which the words in that class conform to. By non-deterministically guessing parameters $\Delta$ and $\theta$, a single disjunct $\phi_i$ may be constructed in time polynomial in $n$. By checking the satisfiability (which is in \np) of the resulting \tlxy\/ formula, we may check the satisfiability of the \utl\/ formula in \np\/ time. \np-hardness may be inferred from \np-hardness of propositional logic.
\end{proof}

The above construction results in a language equivalent $\potdfa$ whose number of states is exponential in $n$. However, every accepting path in the automaton has at most $O(n^4)$ progress (non-self looping) edges. 

\section{Recursive Logic \tlrecr}
\tlrecr\/ is the recursive extension of \tlxy\/ logic with deterministic modalities $X_{\psi}$ and $Y_{\psi}$ which are  parametrized by \tlrecr\/ sub-formulas $\psi$. The \tlrecr\/ formulas have a two-part syntax: subformulas may be $\phi$-type or $\psi$-type. They have the following syntax: \\
\[
\psi ~:=~ a ~\mid~ \phi ~\mid~ \psi\lor\psi ~\mid~ \neg\psi
\]
where $a\in \Sigma$ and $\phi$ is of the form\\
\[
\phi ~:= \top ~\mid~  SP\phi ~\mid~ EP\phi ~\mid~ X_\psi\phi ~\mid~ Y_\psi\phi \\
\]
Hence, the $\phi$-type formulas are \emph{recursive rankers} and the $X$ and $Y$ modalities are parametrized by $\psi$-type formulas which are boolean combinations of recursive rankers. On examining the above syntax representation, we may make the following key observations:
\begin{itemize}
\item The recursive rankers ($\phi$-type formulas) do not have $a$ as atomic subformulas. \footnote{It can be shown that allowing $a$ as an atomic subformula of a $\phi$-type formula increases the expressive power of the logic}.
\item Every $\psi$-type formula is a boolean combination of recursive rankers and atomic formulas.
\item The logic \tlrecr\/ is a deterministic logic and hence the subformulas satisfy the property of Unique Parsing. The unique position at which a subformula $n$ is evaluated in a given word $w$ is denoted by $\pos(n)$. 
\end{itemize}

The semantics of the recursive modalities of \tlrecr\/ formulas is as follows:\\
\begin{tabular}{rcl}
 $w,i\models X_{\phi_1}\phi_2$ & iff & $\exists j>i ~.~ w,j\models \phi_1\land w,j\models \phi_2$ and
     $\forall i<k<j ~.~ w,k\not\models \phi_1$\\
 $w,i\models Y_{\phi_1}\phi_2$ & iff & $\exists j<i ~.~ w,j\models \phi_1\land w,j\models \phi_2$ and
     $\forall j<k<i ~.~ w,k\not\models \phi_1$\\
\end{tabular}

\begin{example}
Consider the \tlrecr\/ formula $\phi =X_{\psi_1}Y_{\psi_2}\top$ where $\psi_1= a\land Y_b\top\land X_c\top$ and $\psi_2= X_c \habar{b}$. When we evaluate $\phi$ over the word $w= ccaccbccabbcacc$, $\pos(\phi)=1$. The first position in the word where $\psi_1$ holds is 9 hence $\pos(Y_{\psi_2}\top)=9$. Finally, the last position before 9 where $\psi_2$ holds is 4. Hence $w\in\mathcal L(\phi)$.
\end{example}

For a \tlrecr\/ formula $\psi$, the \textit{recursion level} of any subformula of $\psi$ may be defined inductively as follows: $\mathit{rlevel}(\psi)=0$. If $\phi=X_{\phi_1}\phi_2$ or $Y_{\phi_1}\phi_2$, then $\mathit{rlevel}(\phi_1)=\mathit{rlevel}(\phi)+1$ and
$\mathit{rlevel}(\phi_2)=\mathit{rlevel}(\phi)$. For all other operators, the recursion level remains unchanged. The \emph{recursion level} of a formula is the maximum recursion depth of its subformulas.

A key property of recursive rankers is \textit{convexity}. This is stated in the following lemma, and its proof is similar to that of Lemma \ref{lem:convex1}.
\begin{lemma}[Convexity]
\label{lem:convex2}
For any recursive ranker formula $\phi$, and any word $w\in\Sigma^+$, if there exist $i,j\in dom(w)$ such that $i<j$ and $w,i\models\phi$ and $w,j\models\phi$, then $\forall i<k<j$, we have $w,k\models\phi$. 
\end{lemma}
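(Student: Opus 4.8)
The plan is to prove the statement by structural induction on the recursive ranker $\phi$, following exactly the pattern of Lemma~\ref{lem:convex1}. The crucial point, which I would flag at the outset, is that although each modality $X_\psi$ or $Y_\psi$ is now parametrised by a full $\psi$-type formula rather than by a single letter, convexity of the parameter $\psi$ is \emph{not} needed; all the argument requires is that deterministic navigation to the next (or previous) $\psi$-position be monotone in the starting position, and this monotonicity holds for an arbitrary predicate.

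For the base case $\phi=\top$ the claim is immediate, since $w,k\models\top$ for every $k$. The cases $\phi=SP\phi'$ and $\phi=EP\phi'$ are also trivial: the truth of $SP\phi'$ (resp. $EP\phi'$) at a position does not depend on that position at all, only on whether $\phi'$ holds at position $1$ (resp. $\#w$), so if it holds at $i$ and at $j$ it holds at every position, and convexity follows vacuously.

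The heart of the argument is the case $\phi=X_\psi\phi'$, the case $\phi=Y_\psi\phi'$ being symmetric. For a position $m$ write $\mathit{nxt}(m)$ for the least position strictly greater than $m$ at which $\psi$ holds, when such a position exists; by the semantics of $X_\psi$ we have $w,m\models X_\psi\phi'$ iff $\mathit{nxt}(m)$ is defined and $w,\mathit{nxt}(m)\models\phi'$. The key observation is that $\mathit{nxt}$ is monotone: if $m_1<m_2$ then every $\psi$-position above $m_2$ is also above $m_1$, so $\mathit{nxt}(m_1)\le\mathit{nxt}(m_2)$ whenever the latter is defined, and definedness of $\mathit{nxt}(m_2)$ forces definedness of $\mathit{nxt}(m_1)$. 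Now suppose $w,i\models\phi$ and $w,j\models\phi$ with $i<j$, and fix $k$ with $i<k<j$. Put $i'=\mathit{nxt}(i)$ and $j'=\mathit{nxt}(j)$; both are defined and $w,i'\models\phi'$, $w,j'\models\phi'$. Since $i<k<j$, monotonicity gives that $k'=\mathit{nxt}(k)$ is defined and $i'\le k'\le j'$. If $k'$ equals $i'$ or $j'$ we are done; otherwise $i'<k'<j'$, and the induction hypothesis applied to the strictly smaller recursive ranker $\phi'$ yields $w,k'\models\phi'$. In either case $\mathit{nxt}(k)$ is defined and $\phi'$ holds there, so $w,k\models X_\psi\phi'$, as required.

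The only place where the recursive setting differs from Lemma~\ref{lem:convex1} is precisely this decoupling, and it is also the point I expect to be the conceptual obstacle: the induction is carried on the recursive-ranker structure and uses the inductive convexity of the continuation $\phi'$, while the parameter $\psi$ enters solely through the monotone deterministic navigation $\mathit{nxt}$. Since $\psi$-type formulas are arbitrary boolean combinations and are in general \emph{not} convex, it is essential that the proof never invoke convexity of $\psi$; I would state this observation explicitly, as it is exactly what lets the argument of Lemma~\ref{lem:convex1} lift unchanged to recursive rankers.
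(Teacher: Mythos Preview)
Your proposal is correct and follows precisely the approach the paper intends: the paper does not spell out a proof but simply says the argument is ``similar to that of Lemma~\ref{lem:convex1}'', and your structural induction on the recursive ranker, with the monotonicity of the next-$\psi$ navigation replacing the next-$a$ navigation, is exactly that adaptation. Your explicit remark that convexity of the parameter $\psi$ is never invoked---only monotonicity of $\mathit{nxt}$---is a helpful clarification beyond what the paper states, and your handling of the $SP$/$EP$ base cases is correct.
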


\subsection{\utl\/ to \tlrecr}
Consider a \utl\/ formula $\psi$ in \pnf: $\psi = a\land \land_i(\fut\alpha_i) \land \land_j(\past\beta_j) \land \land_k(\neg\fut\gamma_k) \land \land_l(\neg\past\delta_l)$. We construct the \tlrecr\/ formulas $\mathit{TransX}(\psi)$ and $\mathit{TransY}(\psi)$ such that the following lemma is satisfied.
\begin{lemma}
 If $\psi$ is a \utl\/ formula, then there exists a \tlrecr\/ formula $\mathit{Trans}(\psi)$ such that $\forall w\in \Sigma^+$ and $i\in dom(w)$, $w,i\models \psi$ iff $w,i\models \mathit{Trans}(\psi)$. Moreover, the size of $\mathit{Trans}(\psi)$ is linear in the size of $\psi$, and the modal depth of $psi$ is equal to the recursion depth of$\mathit{Trans}(\psi)$.
\end{lemma}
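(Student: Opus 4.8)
The plan is to translate $\fut$ and $\past$ directly into the recursive modalities $X_\psi$ and $Y_\psi$, exploiting the fact that, with body $\top$, these modalities express exactly an existential lookahead. For a \utl\/ formula $\chi$ whose \tlrecr\/ translation $\mathit{Trans}(\chi)$ has already been built, set $\mathit{TransX}(\chi) \defn X_{\mathit{Trans}(\chi)}\top$ and $\mathit{TransY}(\chi)\defn Y_{\mathit{Trans}(\chi)}\top$; both are $\phi$-type recursive rankers, since the subscript is a $\psi$-type formula and the body $\top$ is $\phi$-type, as required by the grammar. I would then define $\mathit{Trans}$ by structural induction following the given normal form: the literal $a$ maps to the \tlrecr\/ atom $a$, and for $\psi = a \land \bigwedge_i(\fut\alpha_i) \land \bigwedge_j(\past\beta_j) \land \bigwedge_k(\neg\fut\gamma_k) \land \bigwedge_l(\neg\past\delta_l)$ put
\[
\mathit{Trans}(\psi) \defn a \land \bigwedge_i \mathit{TransX}(\alpha_i) \land \bigwedge_j \mathit{TransY}(\beta_j) \land \bigwedge_k \neg\,\mathit{TransX}(\gamma_k) \land \bigwedge_l \neg\,\mathit{TransY}(\delta_l),
\]
which is a $\psi$-type formula, and hence legal both as a top-level formula and as a subscript of an outer $X$ or $Y$. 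General boolean structure (in particular disjunctions between such conjunctive terms) is handled by the homomorphic boolean clauses, since the $\psi$-type formulas are closed under $\lor$, $\land$, and $\neg$.

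The correctness argument is an induction on the modal depth of $\psi$, carried out at every position simultaneously, since the statement quantifies over all $i\in dom(w)$, which is exactly what the inductive hypothesis supplies. The base case is the modality-free literal, handled by the atom clause. For the inductive step, the key observation is that, by the semantics of the recursive modality, $w,i\models X_{\mathit{Trans}(\alpha)}\top$ holds iff there is some $j>i$ with $w,j\models\mathit{Trans}(\alpha)$: the extra ``first such $j$'' clause in the definition of $X_\psi$ only pins down which witness is used and, because the body is $\top$, imposes no further constraint, so the modality collapses to pure existential lookahead. By the inductive hypothesis $w,j\models\mathit{Trans}(\alpha)$ iff $w,j\models\alpha$, whence $w,i\models\mathit{TransX}(\alpha)$ iff $w,i\models\fut\alpha$; the $\past$/$\mathit{TransY}$ case is symmetric, and the boolean connectives are preserved verbatim. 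Assembling these equivalences across the conjunction yields $w,i\models\mathit{Trans}(\psi)$ iff $w,i\models\psi$.

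For the quantitative claims, each \utl\/ connective and each occurrence of $\fut$ or $\past$ contributes a constant number of \tlrecr\/ symbols, namely an $X_{(\cdot)}\top$, a $Y_{(\cdot)}\top$, or a boolean, so $|\mathit{Trans}(\psi)|$ is linear in $|\psi|$. For the depth, recall that $\mathit{rlevel}$ increases by exactly one when passing into the subscript of an $X$ or $Y$ and is unchanged by booleans and by the $\top$ body; since each $\fut$ or $\past$ becomes exactly one such subscript nesting while booleans leave modal depth unchanged, a short induction shows that the recursion depth of $\mathit{Trans}(\psi)$ equals the modal depth of $\psi$. The only point demanding genuine care, and the main thing to verify, is the two-sorted discipline of \tlrecr: at each step one must confirm that subscripts remain $\psi$-type and bodies remain $\phi$-type, and in particular that $\mathit{Trans}$ of every subformula is $\psi$-type so that it may legitimately be nested as a subscript. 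This is immediate from the shape of the clauses above, so no appeal to convexity is needed in this direction; the construction is the natural one precisely because the recursive modalities of \tlrecr\/ already mimic the existential unary modalities of \utl.
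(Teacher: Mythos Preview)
Your proposal is correct and follows essentially the same approach as the paper: both translate $\fut\psi$ to $X_{\mathit{Trans}(\psi)}\top$ and $\past\psi$ to $Y_{\mathit{Trans}(\psi)}\top$, handling booleans homomorphically. If anything, you are more thorough than the paper, which simply lists the clauses and declares correctness ``directly evident from the semantics''; your explicit justification that the body $\top$ collapses the first-witness clause to a bare existential, and your verification of the two-sorted $\phi$/$\psi$ discipline, fill in points the paper leaves implicit.
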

\begin{proof}
We now give the construction of $\mathit{Trans}(\psi)$, by structural induction on $\psi$. The correctness of the conversion is directly evident from the semantics of the two logics.
\begin{itemize}
 \item $\mathit{Trans}(a) = a$
 \item $\mathit{Trans}(\psi_1\lor\psi_2) =Trans(\psi_1)\lor Trans(\psi_2)$
 \item $\mathit{Trans}(\neg\psi) = \neg\mathit{Trans}(\psi)$
 \item $\mathit{Trans}(\fut(\psi)) = X_{\mathit{Trans}(\psi)}\top$
 \item $\mathit{Trans}(\past(\psi)) = Y_{\mathit{Trans}(\psi)}\top$
\end{itemize}
\end{proof}

\subsection{Reducing \tlrecr\/ to \utl}
For any \tlrecr\/ formula  $\psi$, we shall give a bottom-up inductive construction of  a \utl\/ formula $\at(\psi)$ such that the theorem below is satisfied.
\begin{theorem}
\label{theo:tlrecr}
For any $\psi \in \tlrecr$, we can construct \utl\/ formulas $\at(\psi)$ such that $\forall w\in\Sigma^+$, \\ 
$w,i\models \at(\psi)$ iff $w,i \models \psi$.
\end{theorem}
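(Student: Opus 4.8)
The plan is to define $\at(\psi)$ by bottom-up structural induction, so that the only genuine work lies in the two recursive ranker modalities $X_\psi\phi$ and $Y_\psi\phi$; everything else is routine. For atoms and booleans I set $\at(a)=a$, $\at(\psi_1\lor\psi_2)=\at(\psi_1)\lor\at(\psi_2)$, $\at(\neg\psi)=\neg\at(\psi)$ and $\at(\top)=\top$. For the endpoint modalities, $SP\phi$ (resp.\ $EP\phi$) holds at $i$ iff $\phi$ holds at the first (resp.\ last) position of $w$; since the first position is the unique one satisfying $\neg\past\top$ and the last the unique one satisfying $\neg\fut\top$, I would take
\[
\at(SP\phi) = (\neg\past\top \land \at(\phi)) \lor \past(\neg\past\top \land \at(\phi)),
\]
and dually $\at(EP\phi) = (\neg\fut\top \land \at(\phi)) \lor \fut(\neg\fut\top \land \at(\phi))$. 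Each recursive call is on a strictly smaller subformula, so the induction is well-founded; note that $\at$ is defined uniformly on both the $\psi$-type and the $\phi$-type formulas.

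The crux is the deterministic modality $X_\psi\phi$, whose semantics demand that $\phi$ hold at the \emph{first} future position satisfying $\psi$, a ``next'' flavour that $\utl$, having only the existential $\fut$ and $\past$, cannot express directly. Write $P=\at(\psi)$ and $Q=\at(\phi)$ for the already-constructed equivalents and let $N(i)$ denote the least $P$-position strictly after $i$. The observation is that $Q$ holds at $N(i)$ iff every future $P$-position lies at or after the first $Q$-position and some future $P$-position lies at or before the last $Q$-position. Here the decisive leverage is Lemma~\ref{lem:convex2}: since $\phi$ is a recursive ranker, its truth set, equal to that of $Q$, is an interval $[l,r]$. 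Convexity lets me rewrite the two endpoint comparisons in the memoryless logic $\utl$: a position is ``$\le r$'' exactly when $Q\lor\fut Q$ holds there, and ``$<l$'' exactly when $\neg Q\land\fut Q$ holds there. This yields
\[
\at(X_\psi\phi) \;=\; \fut\big(P \land (Q\lor \fut Q)\big) \;\land\; \neg\fut\big(P\land \neg Q\land \fut Q\big),
\]
with $\at(Y_\psi\phi)$ the mirror image obtained by swapping every $\fut$ with $\past$.

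For correctness I would argue, using the interval $[l,r]$ supplied by convexity, that the second conjunct says no future $P$-position is strictly below $l$, hence $N(i)\ge l$; the first conjunct says some future $P$-position is $\le r$, and since $N(i)$ is the minimum such position, $N(i)\le r$ as well; together $N(i)\in[l,r]$, i.e.\ $\phi$ holds at $N(i)$, which is $w,i\models X_\psi\phi$. Conversely, if the modality holds then $N(i)$ exists and lies in $[l,r]$, so $N(i)$ itself witnesses the first conjunct and minimality gives the second. The degenerate case where $Q$ is never satisfied is handled automatically, since then $Q\lor\fut Q$ is identically false, the first conjunct fails, and $X_\psi\phi$ is correctly false everywhere.

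The main obstacle is exactly this elimination of the ``next'' quantifier: without convexity the predicates ``before the first / after the last occurrence of $\phi$'' are not $\utl$-definable, so the whole argument hinges on Lemma~\ref{lem:convex2} applying to the \emph{body} $\phi$. It is fortunate that no convexity of the subscript $\psi$ is required, since $\psi$ is only a boolean combination of rankers and need not be convex. Finally, each rule wraps a constant number of operators around the translations of its immediate subformulas, so $\at(\psi)$ has size polynomial (indeed linear) in $\psi$, and the equivalence $w,i\models\at(\psi)\iff w,i\models\psi$ follows by the induction, completing the proof.
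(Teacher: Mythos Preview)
Your proof is correct and follows essentially the same approach as the paper: structural induction with the modal cases handled via the Convexity Lemma applied to the ranker body $\phi$ (not the subscript $\psi$). Your translation of $X_\psi\phi$ differs only cosmetically from the paper's $\fut(P\land Q)\land\neg\fut(P\land\neg Q\land\fut Q)$, since under your second conjunct any future $P$-position satisfying $Q\lor\fut Q$ must in fact satisfy $Q$; you also explicitly treat $SP$ and $EP$, which the paper's proof omits.
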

\begin{proof}
The proof is by induction on the structure of $\psi$ (and $\phi$). 
Define $\at(a) = a$, $\at(\top)=\top$ and $\at(\mathcal B(\phi_1, \ldots \phi_m)) = {\mathcal B(\at(\phi_1), \ldots \at(\phi_m))}$. It is easy to see that 
$w,j \models \at(\mathcal B(\phi_1, \ldots \phi_m))$ iff $w,j \models  \mathcal B(\phi_1, \ldots \phi_m)$. Now, we give and prove the reduction for temporal operators. 
\[
\begin{array}{lll}
\at(X_{\psi_1}(\phi_2)) & = &  ~~\fut[\at(\psi_1)\land \at(\phi_2)]  ~~\land \\
											&		 &  \neg\fut[\at(\psi_1)\land \neg 
											                     \at(\phi_2) \land F \at(\phi_2))] \\
\at(Y_{\psi_1}(\phi_2)) & = &  ~~\past[\at(\psi_1)\land \at(\phi_2)]  ~~\land \\
											&		 &  \neg\past[\at(\psi_1)\land 
											                     \neg \at(\phi_2) \land \past \at(\phi_2))] 		
\end{array}
\]
Consider the case $\phi=X_{\psi_1}(\phi_2)$. The other case is similar and omitted.
As $\phi=X_{\psi_1}(\phi_2)$ is a recursive ranker formula, the convexity property holds for $\phi$ and $\phi_2$ (but not always for $\psi_1$). This is depicted in the figure \ref{fig:tlrecr}. Using convexity, from the figure, the following property is evident: \\
$w,i  \models \phi \fif $ \\
$ \exists j >i \st w,j \models \psi_1 \land \phi_2 
~\mbox{and}~ \not \exists j > i \st w,j \models \psi_1 \land  \neg \phi_2 \land  \exists  k > j \st w,k \models \phi_2$ \\ 
$\fif w,i \models F(\phi_2 \land \psi_1) \land \neg F(\psi_i \land \neg \phi_2 
  \land F (\phi_2))$
\end{proof}

\begin{figure}
\begin{tikzpicture}
\draw (5,4.6) node{$\phi = X_{\psi_1}\phi_2$};
\draw (3,4.5) node{[} -- (7,4.5) node{]};
\draw (0,4) node{$w$}; \draw (0.2,4) node{l}-- (1,4) node{l} -- (2,4) node{l}--(3,4) node{l}--(4,4) node{l}--(5,4) node{l}--(6,4) node{l}--(7,4) node{l}--(8,4) node{l}--(9,4) node{l}--(10,4) node{l};
\draw (1,3.5) node{$\psi_1$}; \draw (3,3.5) node{$\psi_1$}; \draw (5,3.5) node{$\psi_1$}; \draw (6,3.5) node{$\psi_1$}; \draw (8,3.5) node{$\psi_1$}; \draw(10,3.5) node{$\psi_1$};
\draw (5,3.1) node{[}--(8,3.1) node{]};
\draw (6.5,3) node{$\phi_2$};

\oomit{
\draw (6,1.6) node{$\phi = Y_{\psi_1}\phi_2$};
\draw (4,1.5) node{[} -- (8,1.5) node{]};
\draw (0,1) node{$w$}; \draw (0.2,1) node{l}-- (1,1) node{l} -- (2,1) node{l}--(3,1) node{l}--(4,1) node{l}--(5,1) node{l}--(6,1) node{l}--(7,1) node{l}--(8,1) node{l}--(9,1) node{l}--(10,1) node{l};
\draw (1,0.5) node{$\psi_1$}; \draw (3,0.5) node{$\psi_1$}; \draw (5,0.5) node{$\psi_1$}; \draw (6,0.5) node{$\psi_1$}; \draw (8,0.5) node{$\psi_1$};
\draw (3,0.1) node{[}--(6,0.1) node{]};
\draw (4.5,0) node{$\phi_2$};
}
\end{tikzpicture}
\caption{Depicting convexity of recursive ranker $\phi=X_{\psi_1}\phi_2$}
\label{fig:tlrecr}
\end{figure}
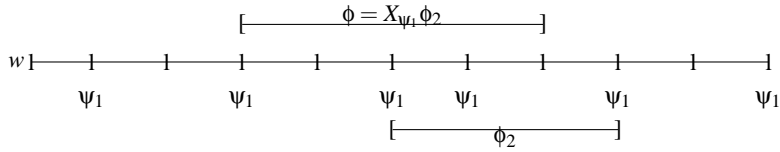

\paragraph{Complexity\\}
Consider a \tlrecr\/ formula $\psi$ of length $s$. We shall analyse the size of the language-equivalent \utl\/ formula. 
From the above construction, we can see that the modal DAG size of the resulting \utl\/ formula is linear in $s$ and hence its modal depth is also linear in $s$. 

Since the translation from \utl\/ to \potdfa\/ gives an \np-complete satisfiability procedure for \utl\/ formulas, the translation from \tlrecr\/ to \utl\/ gives an \np-complete satisfiability for \tlrecr\/ also.

\section{Discussion}
The motivation behind this study has been to use the various characterizations to help us in analyzing and answering some fundamental questions pertaining to this language class. Logic-automata transformations are important. They not only have practical applications in the form of model-checking, but also give more insight to the structure within the language class and its properties. Moreover, effective translations between various logics and automata allow us to calculate size-bounds, succinctness gaps and decision complexities. 

This study of unambiguous languages has also been extended to the language of factors (see \cite{LPS10}) and to timed words (see \cite{PS10}).

\bibliography{mybib}

\end{document}